\newcommand{\footnoteref}[1]{\textsuperscript{\ref{#1}}}
\newcommand{\GI}{\mbox{{\sf GI}}}
\newcommand{\NP}{\mbox{{\sf NP}}}
\newcommand{\AC}{\mbox{{\sf AC}}}
\newcommand{\ISO}{\mbox{{\sf ISO}}}
\renewcommand{\P}{\mbox{{\sf P}}}
\newcommand{\fpt}{\mbox{{\sf fpt}}}
\title{Polynomial-time  Algorithm for Isomorphism of Graphs with Clique-width at most Three}
\author{Bireswar Das\thanks{Part of the research was done while the author was a DIMACS postdoctoral fellow.}, Murali Krishna Enduri\thanks{Supported by Tata 
Consultancy Services (TCS) research fellowship.} and I. Vinod Reddy}
\institute{IIT Gandhinagar, India \\
\email{\{bireswar,endurimuralikrishna,reddy\_vinod\}@iitgn.ac.in}
}
\begin{document}
%\pagenumbering{roman}
%\pagenumbering{arabic}
%\pagenumbering{arabic}
\pagestyle{plain}

\date{}
\maketitle
\begin{abstract}
The clique-width is a measure of complexity of decomposing graphs
into certain tree-like structures. The class of graphs with bounded clique-width contains bounded tree-width graphs.
We give a polynomial time graph isomorphism algorithm for graphs with clique-width at most three.
Our work is independent of the work by
Grohe et al. \cite{grohe2015isomorphism} showing that the isomorphism problem for graphs of bounded clique-width is polynomial time. 
\end{abstract}
\section{Introduction}
\par Two graphs $G_1=(V_1,E_1)$ and $G_2=(V_2,E_2)$ are \emph{isomorphic} if there is a bijection 
$f:V_1\rightarrow V_2$ such that $\{u,v\}\in E_1$ 
if and only if $\{f(u),f(v)\}\in E_2$. Given a pair of graphs as input the problem of deciding if the two graphs are isomorphic is 
known as \emph{graph isomorphism problem} ($\GI$). 
Despite nearly five decades of research the complexity status of this problem still remains unknown. 
The graph isomorphism problem is not known to be in $\P$. It is in $\NP$ but very unlikely to be $\NP$-complete \cite{Boppana1987}. 
The problem is not even known to be hard for $\P$. 
%The best known algorithm for $\GI$ runs in time $2^{O(\sqrt{n\log n})}$ \cite{babai1981,Zem1982}. 
Recently Babai \cite{babai2015graph} designed a quasi-polynomial time algorithm to solve the GI problem improving the previously 
best known $2^{O(\sqrt{n\log n})}$ time algorithm \cite{babai1981,Zem1982}.
Although the complexity of the general graph isomorphism problem remains elusive, many polynomial time algorithms are known for restricted classes of 
graphs e.g., bounded degree \cite{luks1982isomorphism}, bounded genus  \cite{miller1980isomorphism}, bounded tree-width \cite{bodlaender1990polynomial}, etc.

\par The graph parameter \emph{clique-width}, introduced by Courcelle et al. in \cite{courcelle1993handle}, has been studied extensively.  
The class of bounded clique-width graphs is  fairly large in the sense that it contains distance hereditary graphs, bounded 
tree-width graphs, bounded rank-width graphs \cite{kaminski2009recent}, etc.  Fellows et al. \cite{fellows2009clique} shows that the computing 
the clique-width of a graph is NP-hard. 
% In \cite{espelage2003deciding} Espelage et al. gave a linear time algorithm 
% for deciding if a graph with bounded tree-width has clique-width at most $k$.
Oum and Seymour \cite{oum2006approximating} gave an elegant
algorithm that computes a $(2^{3k+2}-1)$-expression
for a graph $G$ of clique-width at most $k$ or decides that the clique-width is more than $k$.% in time $O(n^9 \log n)$. Subsequently, the run-time and the expression size have been improved by
%Hlinen{\`y} and Oum \cite{hlineny2008finding,oum2005approximating}.

% Later Oum \cite{oum2005approximating} provided another algorithm computing $(8k-1)$-expressions in $O(n^3)$ time. 
% Recently Hlineny and Oum gave
% an algorithm to compute $(2^{k+1}-1)$-expressions for
% a graph $G$ of clique-width at most $k$ which runs in $O(n^3)$ time. 

% In \cite{fomin2010intractability} showed
% that the problems Edge dominating set, Hamiltonian cycle, and Graph coloring are
% $W[1]$-hard parameterized by clique-width.  

The parameters tree-width and clique-width share some similarities, for example many $\NP$-complete problems admit polynomial 
time algorithms when the tree-width or the clique-width of the input graph is bounded. A polynomial time isomorphism algorithm for
bounded tree-width graphs has been known for a long time \cite{bodlaender1990polynomial}. Recently Lokhstanov et al. \cite{FPTTW2014} gave an $\fpt$ algorithm for 
$\GI$ parameterized by tree-width. 
The scenario is different for bounded clique-width graphs. The complexity of $\GI$  for bounded clique-width graphs is not known. Polynomial time 
algorithm for $\GI$ for graphs with clique-width 
at most $2$, which coincides with the class of co-graphs, is known probably as a folklore.
%The algorithm exploits the simple decomposition structure of co-graphs.
% The class of graphs with clique-width at most $2$  
% and they have nice decomposition which can be easily exploited to give a polynomial-time algorithm for $\GI$.
%On the other hand the structure of graphs with clique-width at most $3$ is complex. 
The complexity of recognizing graphs with clique-width at most three was unknown until Corneil et al. \cite{corneil2012polynomial} 
came up with the first polynomial time algorithm. 
Their algorithm (henceforth called the CHLRR algorithm) works via an extensive study of the structure of such graphs using split and modular decompositions. 
Apart from recognition, the CHLRR algorithm also produces a $3$-expression for graphs with clique-width at most three. For fixed $k>3$, though algorithms 
to recognize graphs with clique-width at most $k$ are known \cite{oum2006approximating}, computing a $k$-expression is still open. 
Recently in an independent work by Grohe et al. \cite{grohe2015isomorphism} designed an isomorphism algorithm for graphs of bounded 
clique-width subsuming our result.  Their algorithm uses group theory techniques and has worse runtime.
However our algorithm has better runtime and uses different simpler intuitive techniques. 

In this paper we give isomorphism algorithm for graphs with clique-width at most three with runtime $O(n^3m)$. 
Our algorithm works via first defining a notion of equivalent $k$-expression and designing $O(n^3)$ algorithm to test if two input $k$-expressions 
are equivalent under this notion.
%In this paper we give the polynomial-time isomorphism algorithm for graphs with clique-width at most $3$. 
%Our algorithm works via first defining a notion of equivalent $k$-expression and designing an algorithm to test if two input $k$-expressions 
%are equivalent under this notion. 
Next we modify the CHLRR algorithm slightly to output a linear sized set $parseG$ of $4$-expressions for an input graph 
$G$ of clique-width at most three which runs in $O(n^3m)$ time. 
Note that modified CHLRR algorithm will not output a canonical expression. However   
we show that for two isomorphic graphs $G$ and $H$ of clique-width at most three, $parseG$ contains an equivalent 
$k$-expression for each $k$-expression in $parseH$ and vice versa. Moreover, if $G$ and $H$ are not isomorphic then no pair in $parseG\times parseH$  is equivalent.

\section{Preliminaries}
In this paper, the graphs we consider are without multiple edges and self loops. The complement of a graph $G$ is denoted as
$\overline G$.
%Let $V(G)$ and $E(G)$ denotes vertex and edge sets of graph $G$ respectively.
% The {\it complement} of graph $G$ is denoted by $\overline G$, with vertex
%set $V(\overline G)=V(G)$ and $E(\overline G)=\{\{u,v\} | \{u,v\} \notin E(G) \}$.
The {\it coconnected components} of $G$ are the connected 
components of $\overline{G}$. We say that a vertex $v$ is {\it universal} to a vertex set $X$ if $v$ is adjacent to all vertices in $X \setminus \{v\}$. 
A {\it biclique} is a bipartite graph $(G,X,Y)$, such that every vertex in $X$ is connected to every 
vertex of $Y$.
A {\it labeled graph} is a graph with labels assigned to vertices such that each vertex has exactly one label. 
In a labeled graph $G$, $lab(v)$ is the label of a vertex $v$ and $lab(G)$ is the set of all labels. 
We say that a graph is {\it bilabeled} ({\it trilabeled}) if it is labeled using exactly two (three) labels.  
The set of all edges between vertices of label $a$ and label $b$ is denoted $E_{ab}$. We say $E_{ab}$ is complete
if it corresponds to a biclique.
%if there are edges between every vertex of label $a$ to every vertex of label $b$.

The subgraph of $G$ induced by $X \subseteq V(G)$ is denoted by $G[X]$,  the set of vertices adjacent to $v$ is denoted $N_G(v)$. 
The closed neighborhood $N_G[v]$ of $v$ is $N_G(v) \cup \{v\}$. We write $G \cong _f H$ if $f$ is an isomorphism between graphs $G$ and $H$. 
For labeled graphs $G$ and $H$, we write  $G \cong _f^{\pi} H$ if $G \cong _f H$ and $\pi:lab(G)\rightarrow lab(H)$ is a bijection 
such that for all $x \in V(G)$ if $lab(x)=i$ then $lab(f(x))=\pi(i)$. The set of all isomorphisms from $G$ to $H$  is denoted $\ISO(G,H)$.

%\par Clique-width of a graph is defined as follows:
\begin{definition}
 The {\bf clique-width} of a graph $G$ is defined as the minimum number of labels needed to construct $G$ 
 using the following four operations:
 \begin{enumerate}
 \setlength{\itemsep}{1pt}
\setlength{\parskip}{0pt}
  \item [i.] {\bf $v(i)$}: Creates a new vertex $v$ with label $i$
  \item [ii.] {\bf $G_1\oplus G_2 \cdots \oplus G_l$}: Disjoint union of labeled graphs $G_1,G_2,\cdots,G_l$
  \item [iii.] {\bf $\eta_{i,j}$}: Joins each vertex with label $i$ to each vertex with label $j$ ($i \neq j$)
  \item [iv.] {\bf $\rho_{i\rightarrow j}$}: Renames all vertices of label $i$ with label $j$ 
 \end{enumerate}
\end{definition}
Every graph can be constructed using the above four operations, which is represented by an  
algebraic expression known as $k$-$expression$, where $k$ is the number of labels used in expression.
The {\it clique-width} of a graph $G$, denoted by $cwd(G)$, is the minimum $k$ for which there exists a $k$-expression that 
defines the graph $G$. 
From the $k$-expression of a graph we can construct a tree known as {\it parse tree} of $G$.
The leaves of the parse tree are vertices of $G$ with their initial labels, and the internal nodes 
correspond to the operations ($\eta_{i,j}$, $\rho_{i \rightarrow j}$ and $\oplus$) used to construct $G$.
For example, $C_5$ (cycle of length 5) can be constructed by
$$ \eta_{1,3}((\rho_{3\rightarrow 2}(\eta_{2,3}((\eta_{1,2}(a(1)\oplus b(2)))\oplus (\eta_{1,3}(c(3)\oplus d(1))))))\oplus e(3)).$$
The $k$-expression for a graph need not be unique.
% Edgeless graphs have clique-width one, the clique-width of complete graphs and distance hereditary graphs is two, trees have 
% clique-width at most three.
The clique-width of any induced subgraph is at most the clique-width of its graph \cite{courcelle2000upper}.

\par Now we describe the notions of modular and split decompositions.
A set $M \subseteq V(G)$ is called a {\it module} of $G$ if all vertices of $M$ have the same set 
of neighbors in $V(G)\setminus M$.  
% The modular decomposition of $G$ is a tree $T(G)$ which has the following properties
% \begin{enumerate}
%  \item The leaves of $T$ are vertices of $G$
%  \item each internal node of $T$ is labeled with prime, series , parallel
%  \item 
% \end{enumerate}
The \emph{trivial modules} are $V(G)$, and $\{v\}$ for all $v$.
In a labeled graph, a module is said to be a {\it $l$-module} if all the vertices in the module have the same label.
A {\it prime} ({\it $l$-prime}) graph  is a graph (labeled graph) in which all modules ($l$-modules) are trivial. 
The modular decomposition of a graph is one of the decomposition techniques which was introduced by Gallai \cite{gallai1967transitiv}.
%The modular decomposition theorem \cite{gallai1967transitiv,chein1981partitive} says that,  
%for any graph $G$ one of the following condition is satisfied:
%1) $G$ is disconnected 2) $\overline{G}$ is disconnected 3) $G$ and $\overline{G}$ are connected and the quotient graph $G_{/P}$, 
%where $P$ is the maximal modular partition of $G$, is a prime graph.
The {\it modular decomposition} of a graph $G$ is a rooted tree $T^G_M$ 
that has the following properties:
\begin{enumerate}
\setlength{\itemsep}{1pt}
\setlength{\parskip}{0pt}
 \item The leaves of $T^G_M$ are the vertices of $G$.
 \item For an internal node $h$ of $T^G_M$, let $M(h)$ be the set of vertices of $G$ that are leaves of the subtree of 
 $T^G_M$ rooted at $h$. ($M(h)$ forms a module in $G$).
 \item For each internal node $h$ of $T^G_M$ there is a graph $G_h$ (\emph{representative graph}) with 
 $V(G_h)=\{h_1,h_2,\cdots,h_r\}$, where $h_1,h_2,\cdots,h_r$ are the 
 children of $h$ in $T^G_M$ and for  $1 \leq i<j \leq r$, $h_i$ and $h_j$ are adjacent in $G_h$ iff there are vertices $u \in M(h_i)$ and $v \in M(h_j)$ that are adjacent in $G$.
 \item $G_h$ is either a clique, an independent set, or a prime graph and 
 $h$ is labeled \emph{Series} if $G_h$ is clique, \emph{Parallel} if $G_h$ is an independent set, and \emph{Prime} otherwise.
\end{enumerate}

 %Modular decomposition is often used in many of the recognition, decision and optimization graph 
 %problems \cite{corneil1985linear,brandstadt1999graph,pnueli1971transitive}. 
 James et al. \cite{james1972graph} gave first polynomial time algorithm for finding a
 modular decomposition which runs in $O(n^4)$ time. Linear time algorithms to find modular decompositions are proposed 
in \cite{cournier1994new,tedder2008simpler}.

%Split of a graph is a partition of its vertices into two sides $A$ and $B$ such that all the edges present at the border.
%Trivial split is a partition with at most one vertex on one side. 
A vertex partition $(A,B)$ of a graph $G$ is a {\it split}
if $\tilde{A}=A\cap N(B)$ and $\tilde{B}=B\cap N(A)$ forms a biclique.
A split is trivial if $|A|$ or $|B|$ is one.
Split decomposition was introduced by Cunningham \cite{william1980combinatorial}. 
Loosely it is the result of a recursive process of decomposing a graph into components based on the splits.
Cunningham \cite{william1980combinatorial} showed that a graph can be decomposed uniquely into components that are stars, cliques, or prime (i.e., 
without proper splits). This decomposition is known as the \emph{skeleton}. For details see \cite{cunningham1982decomposition}.
A polynomial time algorithm for computing the skeleton of a graph is given in \cite{ma1994n}.
% 
% which is a 
% decomposition tree with each proper split of graph represented as a special edge. 
% The construction of {\it split decomposition} of graph is to find a proper split $(A,B)$ and remove the edges between $\tilde{A}$ and 
% $\tilde{B}$ (add a special edge), decompose the connected components $A$ and $B$ recursively until without proper splits.
% {\it Skeleton} is the unique minimal decomposition of connected graph, each of whose members are prime (only trivial split), complete or 
% star components \cite{cunningham1982decomposition}. 
\begin{theorem}{\cite{cunningham1982decomposition}}({see \cite{corneil2012polynomial}})\label{th1}
 Let $G$ be a connected graph. Then the skeleton of $G$ is unique, and the proper splits of $G$ correspond to the special edges 
 of its skeleton and to the proper splits of its complete and star components. 
\end{theorem}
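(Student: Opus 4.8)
The plan is to prove the three claims in turn — existence of a reduced split decomposition, the stated correspondence between its special edges (together with the internal proper splits of its clique and star nodes) and the proper splits of $G$, and uniqueness — with uniqueness being the substantial step.

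\emph{Existence.} Starting from the one-node decomposition consisting of $G$ itself, I repeatedly apply the split operation to a node that is neither a clique nor a star but admits a proper split $(A,B)$: replace that node by two fragments, one being $G[A]$ with a new \emph{marker} vertex adjacent exactly to $\tilde A=A\cap N(B)$ and the other being $G[B]$ with a marker adjacent exactly to $\tilde B=B\cap N(A)$, the two markers joined by a new \emph{special edge}. Properness gives $|A|,|B|\ge 2$, so each fragment is strictly smaller than the node being split; hence the process terminates with every node prime, a clique, or a star. I then apply the reduction moves on the resulting tree: contract a special edge whenever it joins two clique nodes (merging them into one larger clique), or two star nodes joined centre-to-leaf (merging them into one larger star). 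Each such move decreases the number of nodes, so this also halts, at a reduced decomposition; this is the skeleton.

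\emph{Correspondence of proper splits.} Cutting the skeleton at a single special edge $e=ab$ partitions $V(G)$ into $(A,B)$; unpacking the definition, $\tilde A$ and $\tilde B$ are then the boundary sets on the two sides and, by construction, they form a biclique, so $(A,B)$ is a proper split of $G$. Likewise every bipartition of a clique node into parts of size $\ge 2$, and every bipartition of a star node that keeps its centre with at least one leaf on each side, induces a proper split of $G$. For the converse I would take an arbitrary proper split $(A,B)$ of $G$ and trace it through the decomposition: either it is confined to a single node, in which case it is a proper split of that node's graph — so by primality that node is a clique or a star — or it cuts across the tree and is realized by one of its special edges. The tool here is the interaction of two splits $(A,B)$ and $(C,D)$: if they do not cross (one of the four intersections $A\cap C,\ A\cap D,\ B\cap C,\ B\cap D$ is empty) one can be carried out inside a side of the other, so both lie along the same decomposition; if they cross, a finite case analysis on $\tilde A,\tilde B,\tilde C,\tilde D$ forces the overlapping region to be complete or a star — which is precisely why only the clique and star nodes carry extra proper splits.

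\emph{Uniqueness.} This is the step I expect to be the main obstacle. The aim is to show the reduced decomposition (up to isomorphism) does not depend on the order in which splits were performed. I would set it up as a confluence argument: any two decompositions of $G$ produced by the splitting procedure are connected by a sequence of the reversible moves (split/merge of adjacent clique nodes, split/merge of star nodes joined centre-to-leaf), and the reduced form is their unique common descendant. The crossing-splits analysis of the previous step supplies the invariant: a proper split of $G$ either survives as a special edge in every reduced decomposition, or it belongs to a maximal family of pairwise crossing splits that is canonically collapsed to a single clique or star node. The delicate points are checking that the crossing-splits case analysis is exhaustive, and that the centre/leaf orientation of star markers is tracked so that no residual ambiguity survives when stars are merged; these are exactly the cases handled in Cunningham's treatment \cite{cunningham1982decomposition} and its exposition in \cite{corneil2012polynomial}, and I would follow that route.
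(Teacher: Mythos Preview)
The paper does not prove this statement at all: Theorem~\ref{th1} is quoted as a known result of Cunningham~\cite{cunningham1982decomposition} (with a pointer to the exposition in~\cite{corneil2012polynomial}) and is used as a black box in the construction of $LabG$. There is therefore no ``paper's own proof'' to compare against; your proposal is an attempt to reprove a classical cited theorem that the authors simply import.

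That said, your outline is a reasonable sketch of Cunningham's original argument: build a decomposition by repeatedly splitting, reduce by merging adjacent clique or star nodes, and derive uniqueness via a crossing-splits analysis showing that any two proper splits are either laminar or jointly forced into a clique or star component. Be aware that what you flag as ``the delicate points'' --- exhaustiveness of the crossing case analysis and the centre/leaf bookkeeping for stars --- are genuinely the crux, and your write-up currently defers them to the literature rather than carrying them out. If you intend this as a self-contained proof, those cases need to be written in full; if you intend only to indicate why the theorem is true, it would be more honest to say you are summarizing Cunningham's proof rather than presenting an independent one.
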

{\bf Organization of the paper: }
In Section~\ref{sec3} we discuss $\GI$-completeness of prime graph isomorphism.
In Section~\ref{sec4} we define a notion of equivalence of parse trees called \emph{structural isomorphism}, and give an algorithm to test 
if two parse trees are structurally isomorphic.
We give an overview of the CHLRR algorithm \cite{corneil2012polynomial} in Section~\ref{sec5}.
In Section~\ref{sec6}, we present the isomorphism algorithm for prime graphs of clique-width at most three. 
In Appendix, we show that the CHLRR algorithm can be modified suitably to output structurally isomorphic parse trees for isomorphic graphs.

%In Appendix, we modify the CHLRR algorithm to output a canonical decomposition on input a labeled prime graph $G$. 
%To get the canonical decomposition we use four labels while decomposing bilabeled graphs 
%(for particular cases $PC1$ and $PC2$ in Section 5.2 \cite{corneil2012polynomial}). 
%For more details see Lemma~\ref{BI} in Appendix.

\section{Completeness of Prime Graph Isomorphism}\label{sec3}
It is known that isomorphism problem for prime graphs is $\GI$-complete \cite{Bonamy2010}. There is an easy polynomial time many-one reduction from 
$\GI$ to prime graph isomorphism\footnote{In fact, it is an $\AC^0$ reduction} described in Lemma~\ref{PrimeGI} of the Appendix.
Unfortunately, this reduction does not preserve the clique-width. 
We also give a clique-width preserving Turing reduction from $\GI$ to prime graph isomorphism which we use in our main algorithm.
%From above lemma we can assume that the input graphs $G$ and $H$ can be prime. 
%This reduction shows that prime graph isomorphism is complete for $\GI$. 
%Unfortunately the above reduction doesn't preserve clique-width.
The reduction hinges on the following lemma. 
%that shows clique-width of graph $G$ is bounded by the maximum over 
%clique-width of prime graphs associated with the modular decomposition of $G$. 
\begin{lemma}\label{prime} \cite{courcelle2000linear}
 $G$ is a graph of clique-width at most $k$ iff each prime graph associated with the modular decomposition of $G$ is of 
 clique-width at most $k$.
\end{lemma}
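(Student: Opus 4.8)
The plan is to prove both directions by relating a $k$-expression for $G$ to $k$-expressions for the prime graphs $G_h$ appearing at the Prime nodes of the modular decomposition tree $T^G_M$, and vice versa. For the forward direction ($\Rightarrow$), I would use the fact (stated in the excerpt) that clique-width is monotone under induced subgraphs: if $G$ has a $k$-expression then every induced subgraph has clique-width at most $k$. Each prime graph $G_h$ is, up to isomorphism, an induced subgraph of $G$ — pick one representative vertex from each module $M(h_i)$, $i=1,\dots,r$, and observe that the subgraph of $G$ induced on these representatives is exactly $G_h$ by property~3 of the modular decomposition. Hence $cwd(G_h)\le cwd(G)\le k$, which is the easy half.

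The harder direction is ($\Leftarrow$): assuming each prime graph associated with the modular decomposition has clique-width at most $k$, build a $k$-expression for $G$. I would proceed by induction on the modular decomposition tree, processing it bottom-up. At a leaf, a single vertex has a $1$-expression. At an internal node $h$ with children $h_1,\dots,h_r$, suppose inductively we have a $k$-expression $t_i$ for each $M(h_i)$, and we may assume (by applying $\rho$ operations at the top of $t_i$) that all of $M(h_i)$ carries a single designated label, say label $1$. If $h$ is a Parallel node, we simply take the disjoint union $t_1\oplus\cdots\oplus t_r$; no new edges are needed and we have used at most $k$ labels. If $h$ is a Series node, we need to add all edges between every pair of modules; here the standard trick is to relabel the already-assembled part to a second label before unioning in the next module and joining, cycling between two labels — this uses only $2\le k$ labels provided $k\ge 2$ (and if $k=1$ the graph is edgeless so no Series or Prime nodes occur). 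If $h$ is a Prime node, take a $k$-expression $s$ for $G_h$ on vertex set $\{h_1,\dots,h_r\}$ using labels in $\{1,\dots,k\}$; I would show how to "substitute" the expression $t_i$ (with all its vertices collapsed to the label that $h_i$ has at the relevant point of $s$) in place of the creation operation $h_i(\ell)$ in $s$. The key observation making this work is that in a module all vertices have identical neighbourhoods outside the module, so performing the $\eta$-joins of $s$ on the whole blocks $M(h_i)$ produces exactly the edges of $G$ across different modules; edges inside $M(h_i)$ are already present from $t_i$.

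The main obstacle — and the step requiring the most care — is exactly this substitution at Prime nodes: I must verify that the interaction between the relabelling operations $\rho$ occurring inside $s$ and the internal label structure of each $t_i$ does not blow up the label count beyond $k$, and that it creates precisely the right edge set and nothing more. The clean way to handle this is to first normalise $s$ so that each vertex $h_i$ is created with its own label and immediately, at the moment $h_i(\ell)$ would be executed, splice in $\rho_{*\to \ell}(t_i')$ where $t_i'$ is $t_i$ with an added final relabel of everything to a single fresh working label; since at any point in $s$ at most $k$ labels are live, and the spliced-in block is collapsed to one of those $k$ labels before any subsequent operation of $s$ touches it, the total never exceeds $k$. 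One should also note the boundary case $k\le 2$ separately (co-graphs), where the modular decomposition has only Series and Parallel nodes. I would also remark that to recover all of $G$ (not just one component) when $G$ is disconnected, the root of $T^G_M$ is a Parallel node, which is handled by the disjoint-union case. Assembling these pieces by induction on $T^G_M$ gives the $k$-expression for $G$ and completes the proof.
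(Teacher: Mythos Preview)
The paper does not prove this lemma at all; it simply cites it from Courcelle--Olariu~\cite{courcelle2000linear} and uses it as a black box. So there is no ``paper's own proof'' to compare against.

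Your sketch is essentially the standard argument from the cited reference and is correct. The forward direction via the induced-subgraph monotonicity of clique-width is exactly right: picking one vertex from each child module of a Prime node $h$ gives an induced copy of $G_h$ inside $G$. For the backward direction, the substitution construction you describe---replacing each leaf $h_i(\ell)$ in a $k$-expression for $G_h$ by a $k$-expression for $M(h_i)$ whose final step relabels everything to $\ell$---is precisely the Courcelle--Olariu argument. Your concern about label interaction is real but, as you note, resolves itself: the internal labels used while building each $M(h_i)$ are confined to that subtree of the parse tree and are collapsed to the single label $\ell$ \emph{before} that subtree meets anything else via $\oplus$, so no node of the resulting parse tree sees more than $k$ labels. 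The extra ``normalisation'' of $s$ you propose is unnecessary; direct leaf replacement already works.

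One small remark: the lemma as stated tacitly needs $k\ge 2$. For $k=1$ the graph $K_2$ has no Prime node in its modular decomposition (only a Series root), so the right-hand condition is vacuous while $cwd(K_2)=2$. You correctly flag this boundary case; the paper only uses the lemma for $k=3$, so the point is moot there.
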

We next show that if we have an oracle for $\GI$ for colored prime graphs of clique-width at most $k$ then there is a $\GI$ algorithm for 
graphs with clique-width at most $k$. 
\begin{theorem}\label{primec}
 Let $\mathcal{A}'$ be an algorithm that given two colored prime graphs $G'$ and $H'$ of clique-width at most $k$, decides 
 if $G' \cong H'$ via a color preserving isomorphism. Then there exists an algorithm $\mathcal{A}$ that on input any colored graphs $G$ and $H$
 of clique-width at most $k$ decides if $G \cong H$ via a color preserving isomorphism.
\end{theorem}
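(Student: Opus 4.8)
The plan is to reduce the isomorphism problem for colored graphs of clique-width at most $k$ to the prime case by recursing on the modular decomposition tree, using Lemma~\ref{prime} to guarantee that all the representative prime graphs we feed to $\mathcal{A}'$ again have clique-width at most $k$. First I would compute the modular decomposition trees $T^G_M$ and $T^H_M$ in polynomial time (by \cite{james1972graph} or the linear-time algorithms). The algorithm $\mathcal{A}$ then works bottom-up: for each internal node $h$ with children $h_1,\dots,h_r$ we will have recursively computed, for the subtrees rooted at the $h_i$, a \emph{canonical color} encoding the isomorphism type (including the original colors) of the module $M(h_i)$ together with its induced labeling. We assign these canonical colors to the vertices $h_1,\dots,h_r$ of the representative graph $G_h$, producing a colored graph, and similarly for the corresponding node in $H$. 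Two nodes are declared ``equivalent'' exactly when their representative graphs are color-isomorphic; we then hash the resulting equivalence classes into fresh canonical colors to pass up to the parent. At the root, $G \cong H$ via a color-preserving isomorphism iff the roots receive the same canonical color.

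The key point is how the equivalence test on representative graphs is carried out according to the node label. If $h$ is \emph{Parallel} ($G_h$ an independent set) or \emph{Series} ($G_h$ a clique), then an isomorphism is just a color-preserving bijection between the children, so we only need to check that the multiset of canonical child-colors of $h$ equals that of the matching node in $H$ — this is a simple multiset comparison. If $h$ is \emph{Prime}, then $G_h$ is a prime graph, and by Lemma~\ref{prime} its clique-width is at most $k$; here we invoke the oracle $\mathcal{A}'$ on the two colored prime representative graphs. Since the children's canonical colors already record all recursive isomorphism information, a color-preserving isomorphism of the representative graphs lifts to a color-preserving isomorphism of the modules, and conversely; this is the standard correctness argument for modular-decomposition-based isomorphism, which I would spell out by induction on the height of the tree. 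One subtlety: at a Series or Parallel node the children need not all be pairwise non-isomorphic, so the bijection has genuine freedom, but that is harmless because we never need to \emph{produce} an explicit isomorphism — we only need to decide existence and compute canonical colors.

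The running time is polynomial: the modular decomposition has $O(n)$ nodes, the total size of all representative graphs is $O(n+m)$, at each node we either do a near-linear multiset comparison or one call to $\mathcal{A}'$ on an instance no larger than the whole graph, and canonicalizing the $O(n)$ equivalence classes of subtrees costs polynomial time (e.g.\ by sorting encodings). Hence $\mathcal{A}$ runs in polynomial time with polynomially many oracle calls to $\mathcal{A}'$, as claimed.

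The main obstacle is the correctness of the lifting step for \emph{Prime} nodes — namely, verifying that a color-preserving isomorphism between the canonically colored representative graphs $G_h$ and the corresponding $H$-node extends to a color-preserving isomorphism between the induced subgraphs $G[M(h)]$ and the matching $H$-subgraph, and vice versa. This requires the inductive hypothesis that equal canonical colors on children encode genuinely color-isomorphic labeled modules, together with the defining property of modular decomposition that adjacency between two modules is ``all-or-nothing''. Everything else is bookkeeping.
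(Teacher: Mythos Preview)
Your proposal is correct and follows essentially the same approach as the paper: both compute the modular decomposition, process it bottom-up, handle Series/Parallel nodes by matching multisets of child isomorphism types, and handle Prime nodes by coloring the representative graph with these types and invoking $\mathcal{A}'$, relying on Lemma~\ref{prime} for the clique-width bound. The only cosmetic difference is that the paper maintains an explicit pairwise table indexed by nodes $(x,y)\in T_G\times T_H$ recording whether $G[x]\cong H[y]$, whereas you phrase the same information as assigning each node a canonical color for its isomorphism class; these are equivalent bookkeeping devices.
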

\begin{proof}
 Let $G$ and $H$ be two colored graphs of clique-width at most $k$. The algorithm is similar to \cite{das2015logspace}, 
 which proceeds in a bottom up approach in stages starting from the leaves to 
 the root of the modular decomposition trees $T_G$ and $T_H$ of $G$ and $H$ respectively.
 Each stage corresponds to a level in the modular decomposition. 
 %For each pair of graphs corresponding to nodes in a level it maintains a table that stores if the two graphs are isomorphic. 
 In every level, the algorithm $\mathcal{A}$ maintains a table that stores whether for each pair of nodes $x$ and $y$ in $T_G$ and $T_H$ the 
 subgraphs $G[x]$ and $H[y]$ induced by leaves of subtrees of $T_G$ and $T_H$ rooted at $x$ and $y$ are isomorphic.
  For the leaves it is trivial to store such information. Let $u$ and $v$ be two internal nodes in 
 the modular decomposition trees of $T_G$ and $T_H$ in the same level. 
 To decide if $G[u]$ and $H[v]$ are isomorphic $ \mathcal{A}$ does the following. 
 %(subgraph of $G$ induced by leaves of subtree of $T_G$ rooted at $u$)
\par If $u$ and $v$ are both {\it series} nodes then it just checks if the
 children of $u$ and $v$ can be isomorphically matched. The case for {\it parallel} node is similar. 
 If $u$ and $v$ are {\it prime} nodes then the vertices of representative graphs $G_u$ and $H_v$ are colored by 
 their isomorphism type i.e., two internal vertices $u_1$ and $u_2$ of the representative graphs
 will get the same color iff subgraphs induced by leaves of subtrees of $T_ G$ (or $T_H$) rooted at $u_1$ and $u_2$ are isomorphic.
 To test $G[u] \cong H[v]$, $ \mathcal{A}$ calls $\mathcal{A}'(\widehat G_u,\widehat H_v)$, where $\widehat G_u$ and $\widehat H_v$ 
 are the colored copies of $G_u$ and $H_v$ respectively. At any level if we can not  find a pairwise isomorphism matching between the internal nodes in that level 
 of $T_G$ and $T_H$ then $G \cong H$. In this manner we make $O(n^2)$ calls to algorithm $\mathcal{A}'$ at each level. The total runtime 
 of the algorithm is $O(n^3) T(n)$, where $T(n)$ is run time of $\mathcal{A}'$. 
 Note that by Lemma \ref{prime} clique-width of $G_u$ and $H_v$ are at most $k$. \qed
\end{proof}
\section{Testing Isomorphism between Parse Trees}\label{sec4}
%It is not clear if $\GI$ becomes any easier if the parse trees of the input graphs are given. 
In this section we define a notion of equivalence of parse trees called \emph{structural isomorphism}, and we give an algorithm to test 
if two given parse trees are equivalent under this notion. As we will see, the graphs generated by equivalent parse trees are always isomorphic. 
Thus, if we have two equivalent parse trees for the two input graphs, 
the isomorphism problem indeed admits a polynomial time algorithm. In Section~\ref{sec6}, we prove that the CHLRR algorithm can be tweaked 
slightly to produce structurally isomorphic parse trees for isomorphic graphs with clique-width at most three and thus giving a polynomial-time 
algorithm for such graphs.

Let $G$ and $H$ be two colored graphs. A bijective map $\pi:V(G)\rightarrow V(H)$ is \emph{color consistent} if
for all vertices $u$ and $v$ of $G$, $color(u)=color(v)$ iff $color(\pi(u))=color(\pi(v))$.
Let $\pi:V(G)\rightarrow V(H)$ be a color consistent mapping, define $\pi/color:color(G)\rightarrow color(H)$ as follows: for all 
$c$ in $color(G)$, $\pi/color(c)=color(\pi(v))$ where $color(v)=c$. It is not hard to see that the map $\pi/color$ is well defined.
Recall that the internal nodes of a parse tree are $\eta_{i,j}$, $\rho_{i \rightarrow j}$ and $\oplus$ operations.
The levels of a parse tree correspond to $\oplus$ nodes.  % $g_1$ is child of $g$ in $T_G$ if $g_1$ and $g$ are $\oplus$ nodes lies on root to leaf path at 
% $l+1$ and $l$ respectively.  
Let $T_g$ be a parse tree of $G$  rooted at $\oplus$ node $g$. Let $g_1$ be descendant of $g$ which is neither $\eta$ nor $\rho$. 
We say that $g_1$ is an \emph{immediate significant descendant} of $g$ if there is no other $\oplus$ node in the path from $g$ to  $g_1$.
For an immediate  significant descendant $g_1$ of $g$, we construct a \emph{colored quotient graph} $Q_{g_1}$ that corresponds to 
graph operations appearing in the path from $g$ to $g_1$ performed on graph $G_{g_1}$, where $G_{g_1}$ is graph generated by parse tree $T_{g_1}$.
The vertices of $Q_{g_1}$ are labels of $G_{g_1}$. The colors and the edges of $Q_{g_1}$ are determined by 
the operations on the path from $g_1$ to $g$. We start with coloring a vertex $a$ by color $a$ and no edges. If the operation performed is $\eta_{a,b}$ on $G_{g_1}$ then 
add edges between vertices of color $a$ and color $b$. If the operation is $\rho_{a\rightarrow b}$ on $G_{g_1}$ then recolor 
the vertices of color $a$ with color $b$. After taking care of an operation we move to the next operation on the path 
from $g_1$ to $g$ until we reach $\oplus$ node $g$. Notice that if the total number of labels used in a parse tree is $k$ then the 
size of any colored quotient graph is at most $k$.

\begin{definition}\label{structiso}
 Let $T_g$ and $T_h$ be two parse trees of $G$ and $H$ rooted at $\oplus$ nodes $g$ and $h$ respectively.
 %where $G_{p'}$ and $H_{q'}$ are graphs generated by parse trees of $T_{p'}$ and $T_{q}'$ rooted at $p'$ and $q'$ repectively.
 We say that $T_g$ and $T_h$ are \emph{structurally isomorphic via a label map $\pi$} (denoted $T_g\cong^{\pi}T_h$)    
%   \begin{enumerate}
%   \setlength{\itemsep}{1pt}
% \setlength{\parskip}{0pt}
% \item  $G$ and $H$ are isomorphic via the label map $\pi$ and inductively 
% \item
  \begin{enumerate}
  \setlength{\itemsep}{1pt}
\setlength{\parskip}{0pt}
 \item If $T_g$ and $T_h$ are single nodes\footnote{In this case they are trivially {structurally isomorphic} via $\pi$.} or inductively,
 \item If $T_g$ and $T_h$ are rooted at $g$ and $h$ having immediate  significant descendants $g_1,\cdots,g_r$ and $h_1,\cdots,h_r$, and 
  there is a bijection $\gamma : [r] \rightarrow [r]$  and for each $i$ there is a $\pi_i$ $\in$ $\ISO(Q_{g_i},Q_{h_{\gamma(i)}})$ such that 
 $T_{g_i} \cong^{\pi_i} T_{h_{\gamma(i)}}$ and $\pi_i/color=\pi|_{color(Q_{g_i})}$, where $T_{g_1}, \cdots ,T_{g_r}$ and $T_{h_1}, \cdots ,T_{h_r}$ are 
the subtrees rooted at $g_1,\cdots,g_r$ and $h_1,\cdots,h_r$ respectively\footnote{Notice that this definition 
implies that $G_{g_i}$ and $H_{h_{\gamma(i)}}$ are isomorphic via the label map $\pi_{i}$ where 
$G_{g_i}$ and $H_{h_{\gamma(i)}}$ are graphs generated by the parse trees $T_{g_i}$ and $T_{h_{\gamma(i)}}$ respectively.}
%\end{enumerate}
 \end{enumerate}
 We say that $T_g$ and $T_h$ are \emph{structurally isomorphic} if there is a $\pi$ such that $T_g\cong^\pi T_h$.
 \end{definition}
The structural isomorphism is an equivalence relation: reflexive and symmetric properties are immediate from the above definition.
 The following lemma shows that it is also transitive.
\begin{lemma}\label{deftrans}
Let $T_{g_1}$, $T_{g_2}$ and $T_{g_3}$ be the parse trees of $G_1$, $G_2$ and $G_3$ respectively such that 
$T_{g_1}\cong^{\pi_1}T_{g_2}$ and $T_{g_2}\cong^{\pi_2}T_{g_3}$ then 
$T_{g_1}\cong^{\pi_2\pi_1}T_{g_3}$.
\end{lemma}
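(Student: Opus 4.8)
The plan is to prove transitivity by induction on the structure (equivalently, the height) of the parse trees, mirroring the recursive shape of Definition~\ref{structiso}. In the base case, if $T_{g_1}$, $T_{g_2}$, $T_{g_3}$ are all single nodes, then $T_{g_1}\cong^{\pi_2\pi_1}T_{g_3}$ holds trivially by clause~1 of the definition, and one only needs to check that $\pi_2\pi_1$ is a well-defined bijection between the (singleton) label sets, which follows since $\pi_1$ and $\pi_2$ are.

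For the inductive step, I would unpack the two given structural isomorphisms. From $T_{g_1}\cong^{\pi_1}T_{g_2}$ we obtain a bijection $\gamma_1:[r]\to[r]$ on the immediate significant descendants and maps $\sigma_i\in\ISO(Q_{(g_1)_i},Q_{(g_2)_{\gamma_1(i)}})$ with $T_{(g_1)_i}\cong^{\sigma_i}T_{(g_2)_{\gamma_1(i)}}$ and $\sigma_i/\mathrm{color}=\pi_1|_{\mathrm{color}(Q_{(g_1)_i})}$; similarly from $T_{g_2}\cong^{\pi_2}T_{g_3}$ we get $\gamma_2:[r]\to[r]$ and maps $\tau_j\in\ISO(Q_{(g_2)_j},Q_{(g_3)_{\gamma_2(j)}})$ with the analogous properties. (Both descendant lists have the same length $r$ because each equivalence forces equal cardinalities.) The natural candidates are: the bijection $\gamma=\gamma_2\gamma_1$, and for each $i$ the composed label map $\tau_{\gamma_1(i)}\sigma_i$, which is an isomorphism from $Q_{(g_1)_i}$ to $Q_{(g_3)_{\gamma(i)}}$ since it is a composition of graph isomorphisms between colored quotient graphs. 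One then applies the induction hypothesis to $T_{(g_1)_i}\cong^{\sigma_i}T_{(g_2)_{\gamma_1(i)}}$ and $T_{(g_2)_{\gamma_1(i)}}\cong^{\tau_{\gamma_1(i)}}T_{(g_3)_{\gamma(i)}}$ to conclude $T_{(g_1)_i}\cong^{\tau_{\gamma_1(i)}\sigma_i}T_{(g_3)_{\gamma(i)}}$.

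The one genuinely fiddly point — and the step I expect to need the most care — is verifying the color-compatibility condition $(\tau_{\gamma_1(i)}\sigma_i)/\mathrm{color}=(\pi_2\pi_1)|_{\mathrm{color}(Q_{(g_1)_i})}$, together with checking that the induced maps on labels compose consistently so that a single global $\pi_2\pi_1$ works simultaneously for all $i$. I would establish a small auxiliary fact first: for color-consistent maps, $(\beta\alpha)/\mathrm{color}=(\beta/\mathrm{color})\circ(\alpha/\mathrm{color})$ whenever the composition is defined. Granting this, the chain of equalities $(\tau_{\gamma_1(i)}\sigma_i)/\mathrm{color}=(\tau_{\gamma_1(i)}/\mathrm{color})\circ(\sigma_i/\mathrm{color})=\pi_2|_{\cdots}\circ\pi_1|_{\mathrm{color}(Q_{(g_1)_i})}=(\pi_2\pi_1)|_{\mathrm{color}(Q_{(g_1)_i})}$ goes through, using that $\sigma_i$ maps colors of $Q_{(g_1)_i}$ onto colors of $Q_{(g_2)_{\gamma_1(i)}}$ so the restriction of $\pi_2$ lines up correctly. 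Assembling $\gamma$ and the maps $\tau_{\gamma_1(i)}\sigma_i$ and invoking clause~2 of Definition~\ref{structiso} then yields $T_{g_1}\cong^{\pi_2\pi_1}T_{g_3}$, completing the induction. \qed
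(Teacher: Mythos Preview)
Your proposal is correct and follows essentially the same approach as the paper's proof: induction on the height of the parse trees, composing the descendant bijections and quotient-graph isomorphisms, then verifying the color-compatibility condition via the two facts $(\beta\alpha)/\mathrm{color}=(\beta/\mathrm{color})\circ(\alpha/\mathrm{color})$ and $(\pi_2\pi_1)|_{\mathrm{color}(Q_{(g_1)_i})}=\pi_2|_{\mathrm{color}(Q_{(g_2)_{\gamma_1(i)}})}\circ\pi_1|_{\mathrm{color}(Q_{(g_1)_i})}$. If anything, your write-up is more explicit than the paper's about naming the bijections $\gamma_1,\gamma_2$ and isolating the auxiliary fact, but the argument is the same.
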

\begin{proof} The proof is by induction on the height of the parse trees.
 The base case trivially satisfies the transitive property. Assume that $g_1$, $g_2$ and $g_3$ are nodes of height $d+1$. 
 Let $g_{1i}$ be an immediate  significant descendant of $g_1$. Since $T_{g_1}\cong^{\pi_1} T_{g_2}$, there is an immediate significant 
 descendant $g_{2j}$ of $g_2$ and $\pi_{1i}\in \ISO(Q_{g_{1i}},Q_{g_{2j}})$ such that  $\pi_{1i}/color=\pi|_{color(Q_{g_{1i}})}$ and 
 $T_{g_{1i}} \cong ^{\pi_{1i}} T_{g_{2j}}$. Similarly, $g_{2j}$ will be matched to some immediate significant descendant $g_{3k}$ of 
 $g_3$ via $\pi_{2j}\in \ISO(Q_{g_{2j}},Q_{g_{3k}})$ such that  $\pi_{2j}/color=\pi|_{color(Q_{g_{2j}})}$ and $T_{g_{2j}} \cong ^{\pi_{2j}} T_{g_{3k}}$. 
 The nodes $g_{1i}, g_{2j}$ and $g_{3k}$ has height at most $d$. Therefore, by induction hypothesis $T_{g_{1i}} \cong ^{\pi_{2j}\pi_{1i}} T_{g_{3k}}$. 
 By transitivity of isomorphism we can say $\pi_{2j}\pi_{1i}\in \ISO(Q_{g_{1i}},Q_{g_{3k}})$.
To complete the proof we just need to show $\pi_{2j}\pi_{1i}/color = \pi_{2}\pi_{1}|_{color(Q_{g_{1i}})}$. This can be inferred from the following two facts:\\ 
 %\begin{enumerate}
 %\setlength{\itemsep}{1pt}
%\setlength{\parskip}{0pt}
   1) $\pi_{2j}\pi_{1i}/color=\pi_{2j}/color$  $\pi_{1i}/color$\\
   2) $\pi_{2}\pi_{1}|_{color(Q_{g_{1i}})}=\pi_{2}|_{color(Q_{g_{2j}})}$  $\pi_{1}|_{color(Q_{g_{1i}})}$.   \qed
 %\end{enumerate}

\end{proof}
% By induction hypothesis up to level $k$ sub parse trees satisfies transitive property.
% Assume that  
%  $T_{g_1}\cong^{\pi_1}T_{g_2}$ and $T_{g_2}\cong^{\pi_2}T_{g_3}$ at level $k+1$.
%  We have to prove that $T_{g_1}\cong^{\pi_2\pi_1}T_{g_3}$.
%  If $T_{g_1}\cong^{\pi_1}T_{g_2}$ and $T_{g_2}\cong^{\pi_2}T_{g_3}$ then
%  there exists a $\pi_{1i}$ from  $\ISO(Q_{g_{1i}},Q_{g_{2j}})$ and $\pi_{2j}$ from $\ISO(Q_{g_{2j}},Q_{g_{3k}})$ such that 
%  $\pi_{1i}/color=\pi|_{color(Q_{g_{1i}})}$ and $\pi_{2j}/color=\pi|_{color(Q_{g_{2j}})}$ where $j=\gamma_1(i)$ and $k=\gamma_2(\gamma_1(i))$.
% From the transitive property of isomorphism we can easily see that 
% $Q_{g_{1i}} \cong ^{\pi_{2j}\pi_{1i}} Q_{g_{3k}}$ and by induction hypothesis $T_{g_{1i}} \cong ^{\pi_{2j}\pi_{1i}} T_{g_{3k}}$.
 
\noindent
 {\bf Algorithm to Test Structural Isomorphism: } Next we describe an algorithm that given two parse trees $T_G$ and $T_H$ tests 
 if they are structurally isomorphic. From the definition 
if $T_G \cong^{\pi} T_H$  then we can conclude that $G$ and $H$ are isomorphic. 
We design a dynamic programming algorithm that basically checks the local conditions 1 and 2 in Definition~\ref{structiso}.
% 
% (\texttt{We note that even if 2(a) of Definition~\ref{structiso} is satisfied, condition 1 may not be true in general. 
% Fortunately, if we assume that the parse tree does not introduce an edge multiple times through $\eta$ operations then 
% condition 2(a) implies condition 1.}  {\tiny Is this part required? Do we need to explicitly add condition 1?}
%If there is no parse tree $T_H$ of $H$ which is structurally isomorphic to $T_G$, then $G$ and $H$ are not isomorphic.
%Now we explain the procedure to check whether $T_G \cong^{\pi} T_H$. 
%Note that $T_G$ and $T_H$ are not structurally isomorphic if they are not isomorphic as color trees, where colors correspond 
%to graph operations.
%(number of levels of $T_G$)

%Let $m$ denote the number of $\oplus$ nodes on a longest path from root to leaf in $T_G$.
%[Rephrase?]
The algorithm starts from the leaves of parse trees and proceeds in levels where each level corresponds to 
$\oplus$ operations of parse trees. Let $g$ and $h$ denotes the $\oplus$ nodes at level $l$ of $T_G$ and $T_H$ respectively.
At each level $l$, for each pair of $\oplus$ nodes $(g,h) \in (T_G,T_H)$, the algorithm computes the set $R_l^{g,h}$ of all bijections 
$\pi: lab(G_{g}) \rightarrow lab(H_{h})$ such that $G_{g} \cong_f^{\pi} H_{h}$ for some $f$, and stores in a table indexed by $(l,g,h)$, 
where $G_{g}$ and $H_{h}$ are  graphs generated by sub parse trees $T_g$ and $T_h$ rooted at $g$ and $h$ respectively.
 To compute $R_l^{g,h}$, the algorithm uses the already computed information $R_{l+1}^{g_i,h_j}$ where $g_i$ and $h_j$ are immediate significant descendants of $g$ and $h$.
  \par The base case correspond to finding $R_l^{g,h}$ for all pairs $(g,h)$ such that $g$ and $h$ are leaves. 
Since in this case $G_{g}$ and $H_{h}$ are just single vertices, it is easy 
to find $R^{g,h}_l$. 
For the inductive step let $g_1,\cdots,g_r$ and $h_1,\cdots,h_{r'}$ be the immediate significant descendants of $g$ and $h$ respectively.  
If $r \neq r'$ then $R^{g,h}_l=\emptyset$. 
%Assume that up to level $l+1$ we have computed $R^{g_i,h_j}_{l+1}$ for all pairs $(g_i,h_j)$ in that level.
Otherwise we compute $R^{g,h}_l$ for each pair $(g,h)$ at level $l$ with help of the already computed information up to level $l+1$ as follows.
%The sizes of the quotient graphs are at most 4, so we can easily find all color preserving isomorphisms between them.
%Note that these isomorphisms between the colored quotient graphs are bijections between labels of $G_{p'}$ and $H_{q'}$. 
\par For each $\pi: lab(G_g) \rightarrow lab(H_h)$ and 
pick $g_1$ and try to find a $h_{i_1}$ such that $T_{g_1} \cong^{\pi_1} T_{h_{i_1}}$ for some $\pi_1\in 
\ISO(Q_{g_1},Q_{h_{i_1}})\cap R^{g_1,h_{i_1}}_{l+1}$ such that $\pi_1/color=\pi|_{color(Q_{g_1})}$.
We do this process to pair $g_2$ with some unmatched $h_{i_2}$.
Continue in this way until all immediate significant descendants are matched. By Lemma~\ref{lemmatrans}, we know that this greedy matching 
satisfies the conditions of Definition~\ref{structiso}.
If all the immediate significant descendants are matched we add $\pi$ to $R^{g,h}_l$. %After finding all such $\pi$'s add to $R^{g,h}_l$.
It is easy to see that if $R^{g,h}_l \neq \emptyset$ then the subgraphs $G_g \cong_f^{\pi} H_h$ for $\pi \in R^{g,h}_l$.  
From the definition of structurally isomorphic parse trees it is clear that if $R^{g,h}_0 \neq \emptyset$ then 
$G \cong H$. The algorithm is polynomial time as the number of choices for $\pi$ and $\pi_1$ is at most $k!$ which is a constant, where $|lab(G)|=k$.

Note that for colored graphs, by ensuring that we only match vertices of same color in the base case, the whole algorithm can be made to work for 
colored graphs. In Lemma~\ref{deftrans} we prove that structural isomorphism satisfies transitivity. In fact, structural isomorphism satisfies a stronger notion of 
transitivity as stated in the following lemma.
\begin{lemma}\label{lemmatrans}
Let $T_g$ and $T_h$ be two parse trees of graphs $G$ and $H$. Let $g_1$ and $g_2$ be two immediate significant descendants of 
$g$, and $h_1$ and $h_2$ be two immediate significant descendants of $h$. Suppose for $i=1,2$, $T_{g_i}\cong^{\pi_i} T_{h_i}$ for some $\pi_i\in \ISO(Q_{g_i},Q_{h_i})$ 
with $\pi_i/color=\pi|_{color(Q_{g_i})}$. Also assume that $T_{g_1}\cong^{\pi_3} T_{h_2}$ where $\pi_3\in \ISO(Q_{g_1},Q_{h_2})$ 
and $\pi_3/color=\pi|_{color(Q_{g_1})}$. Then, $T_{g_2}\cong^{\pi_1\pi^{-1}_3\pi_2} T_{h_1}$ where $\pi_1\pi^{-1}_3\pi_2\in \ISO(Q_{g_2},Q_{h_1})$ 
and $\pi_1\pi^{-1}_3\pi_2/color=\pi|_{color(Q_{g_2})}$.
\end{lemma}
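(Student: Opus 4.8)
The plan is to read off this ``exchange'' statement from the fact that structural isomorphism is an equivalence relation, using Lemma~\ref{deftrans} for transitivity and the symmetry noted just before it, while carrying the induced label maps along. First I would apply symmetry to the third hypothesis: from $T_{g_1}\cong^{\pi_3} T_{h_2}$ we obtain $T_{h_2}\cong^{\pi_3^{-1}} T_{g_1}$. Here $\pi_3^{-1}\in\ISO(Q_{h_2},Q_{g_1})$ because graph isomorphisms are closed under inverse, and, since $\pi_3/color$ is a bijection of $color(Q_{g_1})$ onto $color(Q_{h_2})$, inversion commutes with $/color$ and gives $\pi_3^{-1}/color=(\pi_3/color)^{-1}=\bigl(\pi|_{color(Q_{g_1})}\bigr)^{-1}$.

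Next I would form the chain $T_{g_2}\cong^{\pi_2}T_{h_2}\cong^{\pi_3^{-1}}T_{g_1}\cong^{\pi_1}T_{h_1}$ and apply Lemma~\ref{deftrans} twice. With the composition convention of that lemma, the first application yields $T_{g_2}\cong^{\pi_3^{-1}\pi_2}T_{g_1}$ and the second yields $T_{g_2}\cong^{\pi_1\pi_3^{-1}\pi_2}T_{h_1}$, which is precisely the label map in the statement. As $\pi_2\in\ISO(Q_{g_2},Q_{h_2})$, $\pi_3^{-1}\in\ISO(Q_{h_2},Q_{g_1})$ and $\pi_1\in\ISO(Q_{g_1},Q_{h_1})$, the composite is a well-defined graph isomorphism in $\ISO(Q_{g_2},Q_{h_1})$.

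It then remains to check $\pi_1\pi_3^{-1}\pi_2/color=\pi|_{color(Q_{g_2})}$. Using the two facts isolated in the proof of Lemma~\ref{deftrans} --- that $/color$ distributes over composition, and that restrictions of a common map compose as restrictions --- one gets $\pi_1\pi_3^{-1}\pi_2/color=(\pi_1/color)(\pi_3^{-1}/color)(\pi_2/color)=\pi|_{color(Q_{g_1})}\,\bigl(\pi|_{color(Q_{g_1})}\bigr)^{-1}\,\pi|_{color(Q_{g_2})}$. Since $\pi_2/color$ maps $color(Q_{g_2})$ onto $color(Q_{h_2})$ and $\pi_3/color$ maps $color(Q_{g_1})$ onto $color(Q_{h_2})$, the hypotheses force $\pi(color(Q_{g_2}))=color(Q_{h_2})=\pi(color(Q_{g_1}))$, hence $color(Q_{g_1})=color(Q_{g_2})$ by injectivity of $\pi$; the two middle factors therefore cancel and the expression reduces to $\pi|_{color(Q_{g_2})}$, as needed.

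There is no genuine combinatorial obstacle: the lemma is just the cancellation property of an equivalence relation, decorated with composable label maps. The only thing demanding care --- and the only place a reader might trip --- is the color bookkeeping: one must verify that inversion commutes with $/color$ (so that the symmetry step really produces $\pi_3^{-1}$) and that the restricted maps $\pi|_{color(Q_{g_1})}$ and $\pi|_{color(Q_{g_2})}$ have matching domains and images so that the triple composition collapses to $\pi|_{color(Q_{g_2})}$; both follow from the surjectivity of the relevant $/color$ maps observed above.
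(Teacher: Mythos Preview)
Your proposal is correct and follows essentially the same approach as the paper: the paper's proof simply invokes Lemma~\ref{deftrans} (transitivity, together with the implicit symmetry) to obtain $T_{g_2}\cong^{\pi_1\pi_3^{-1}\pi_2}T_{h_1}$ and then says the remaining color bookkeeping is ``similar to the proof of the inductive case of Lemma~\ref{deftrans}.'' You have spelled out exactly those details --- the symmetry step, the two applications of transitivity, and the verification that the composed $/color$ map collapses to $\pi|_{color(Q_{g_2})}$ --- so your argument is a faithful expansion of the paper's sketch rather than a different route.
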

\begin{proof}
  By Lemma~\ref{deftrans},  $T_{g_2}\cong^{\pi_1\pi^{-1}_3\pi_2} T_{h_1}$. The rest of the proof is similar to the proof of the inductive case of Lemma~\ref{deftrans}. \qed
\end{proof}

\section{Overview of the CHLRR Algorithm}\label{sec5}
 Corneil et al. \cite{corneil2012polynomial} gave the first polynomial time algorithm (the \emph{CHLRR algorithm}), 
 to recognize graphs of clique-width at most three. We give a brief description of their algorithm in this section. 
 We mention that our description of this fairly involved algorithm is far from being complete. The reader is encouraged to 
 see \cite{corneil2012polynomial} for details. By Lemma~\ref{prime} we assume that the input graph $G$ is prime. 

To test whether clique-width of prime graph $G$ is at most three the algorithm starts by  
% The algorithm proceeds in top down way by 
%  iterating over all possible final operations in building parse tree of $G$ (If $G$ is prime and $cwd(G)\leq 3$ then we can always find final operations 
%  of parse tree). They showed that this is equivalent to iterating over a 
constructing a set of bilabelings and trilabelings of $G$. In general the number of bilabelings and trilabelings are 
exponential, but it was shown (Lemma 8 and 9 in \cite{corneil2012polynomial} ) that it is enough to consider the following	
linear size subset denoted by $LabG$. %(See Algorithm~\ref{algolab} and Section 4 in \cite{corneil2012polynomial}).
\begin{itemize}
\item [1.] For each vertex $v$ in $V(G)$
\begin{itemize}
\item [[$B_1$ \hspace{-0.3cm}]] Generate the bilabeling\footnote{ {\it bilabeling} of a set $X \subseteq V$ indicates that all the vertices in $X$ are 
labeled with one label and $V\setminus X$ is labeled with another label. \label{note1} }$\{v\}$ and add it to $LabG$.
\item [[$B_2$\hspace{-0.2cm}]] Generate the bilabeling $\{x \in N(v) \hspace{0.1cm} | \hspace{0.1cm} N[x] \subseteq N[v] \}$ and add it to $LabG$.
\end{itemize}

\item [2.] Compute the skeleton of $G$ search this skeleton for the special edges, clique and star components.
\begin{itemize}
\item [[$T_1$\hspace{-0.2cm}]] For each special edge $s$ (corresponds to a proper split), generate the trilabeling $\tilde{X},\tilde{Y}, V(G)\setminus (\tilde{X} \cup \tilde{Y})$ where ($X$,$Y$) is the split defined by $s$ and add it to $LabG$.
\item [[$B_3$\hspace{-0.2cm}]] For all clique components $C$, generate the bilabeling $C$ and add it to $LabG$.
\item [[$B_4$\hspace{-0.2cm}]] For all star components $S$,  generate the bilabeling $\{c\}$, where $c$ is the special center of $S$, and add it to $LabG$. %\tcc*{$\{c\}$ may correspond to a group of vertices of $G$}
\end{itemize}

\end{itemize}
\begin{lemma} \label{over} \cite{corneil2012polynomial}
Let $G$ be a prime graph. Clique-width of $G$ is at most three if and only if at least one of the bilabelings or
trilabelings in $LabG$ has clique-width at most three.
\end{lemma}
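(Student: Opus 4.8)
The plan is to prove the two implications of Lemma~\ref{over} separately; the backward implication and one half of the forward implication are immediate, and essentially all of the work lies in showing that when $cwd(G)\le 3$ one of the \emph{few} labelings in $LabG$ already has clique-width at most three. For the easy parts: if some $\lambda\in LabG$ has $cwd(G,\lambda)\le 3$, the same $\le 3$-expression read while forgetting the prescribed final labels constructs $G$, and since building a labeled graph with a fixed labeling is only more constrained than building the underlying graph we get $cwd(G,\lambda)\ge cwd(G)$, hence $cwd(G)\le 3$. Conversely, if $cwd(G)\le 3$ then \emph{some} bi/trilabeling of $G$ trivially has clique-width at most three, namely the labeling output by an optimal $\le 3$-expression of $G$. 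So the real content of the lemma is: if any bi/trilabeling of $G$ has clique-width at most three, then already one of those in $LabG$ does.

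For that I would fix an optimal $k$-expression $\xi$ of $G$ with $k\le 3$ and its parse tree $T$, and let $u$ be the topmost $\oplus$-node of $T$ (binarizing $u$ if it has more than two children). The node $u$ splits $V(G)$ into nonempty parts $A\sqcup B$ produced by subexpressions with labeled outputs $(G_A,\lambda_A)$ and $(G_B,\lambda_B)$, and the operations on the path from $u$ to the root are $\eta$'s and $\rho$'s only. Since each such operation acts uniformly on label classes, $\lambda_A$ partitions $A$ into at most three cells, $\lambda_B$ partitions $B$ into at most three cells, and every edge of $G$ crossing between $A$ and $B$ — and every $A$--$A$ or $B$--$B$ edge created above $u$ — is a union of complete bipartite graphs between these at most $3+3$ cells. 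Now I would use that $G$ is prime. If the cells with a cross-neighbor are pairwise completely joined, then $(\tilde A,\tilde B)$, with $\tilde A=A\cap N(B)$ and $\tilde B=B\cap N(A)$, is a biclique, i.e.\ $(A,B)$ is a proper split of $G$; by Theorem~\ref{th1} this split corresponds to a special edge of the skeleton of $G$ or to a split of a clique or star component, and these are exactly the configurations recorded by the trilabelings of type $T_1$ and the bilabelings of types $B_3$ and $B_4$. Otherwise the cross structure is a union of at least two bicliques between at most $3+3$ cells; going through the handful of such patterns, using that $G$ has no nontrivial module (so no cell can act as one), forces some cell to be a singleton $\{v\}$ or to equal $\{x\in N(v)\mid N[x]\subseteq N[v]\}$ for a vertex $v$ — the configurations recorded by the bilabelings of types $B_1$ and $B_2$. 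In every case the relevant labeling $\lambda\in LabG$ can be realized as a coarsening of a labeling that some $\le 3$-expression of $G$ carries near its root, so appending a few $\rho$-operations to (a reorganization of) $\xi$ yields a $\le 3$-expression for $(G,\lambda)$; in the split cases one additionally invokes Lemma~\ref{prime} to see the components involved still have clique-width at most three. This gives $cwd(G,\lambda)\le 3$ and finishes the proof.

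The main obstacle is the case analysis in the previous paragraph: proving that primeness pins the topmost $\oplus$ of every $\le 3$-expression of $G$ down to one of the listed shapes, which is precisely what makes the linear-size family $LabG$ sufficient rather than the exponentially many bi/trilabelings of $G$. A recurring technical nuisance is that the operations above $u$ may add edges inside $A$ and inside $B$, so that $G[A]$ and $G[B]$ are strictly larger than $G_A$ and $G_B$; one has to keep track of these extra bicliques both when classifying the cut pattern and when re-deriving the expression for $(G,\lambda)$, to be certain that a fourth label is never needed.
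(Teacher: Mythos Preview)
The paper does not prove Lemma~\ref{over}; it is quoted from Corneil et~al.\ \cite{corneil2012polynomial} (their Lemmas~8 and~9), so there is no in-paper argument to compare against. Your high-level plan---take a $3$-expression, look at its topmost $\oplus$-node, and use primeness of $G$ to pin down the shape of the resulting bipartition so that it matches one of the patterns recorded in $LabG$---is the natural strategy and is presumably close to what Corneil et~al.\ do. The easy implications you state are correct.

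There are, however, genuine gaps in the hard direction. First, the ``coarsening plus a few $\rho$'s'' step does not work as written: after the $\oplus$ at $u$ a $3$-expression carries at most three label classes, and each of them may contain vertices from \emph{both} $A$ and $B$; so the trilabeling $(\tilde A,\tilde B,V\setminus(\tilde A\cup\tilde B))$ is in general \emph{not} a coarsening of any labeling that $\xi$ carries. What is actually needed is to rebuild the expression: argue that $G[A]$ with the bilabeling $(\tilde A,\,A\setminus\tilde A)$ and $G[B]$ with $(\tilde B,\,B\setminus\tilde B)$ each have clique-width at most three (using that $\tilde A$ is a union of $\lambda_A$-classes and that the intra-$A$ edges created above $u$ can be simulated inside the $A$-subexpression), and then recombine. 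Second, when the split $(A,B)$ you find sits inside a clique or star component of the skeleton rather than at a special edge, the relevant member of $LabG$ is a $B_3$ or $B_4$ bilabeling, not the $T_1$ trilabeling attached to $(A,B)$; you have not said why those specific bilabelings have clique-width at most three. Third, and most importantly, the non-split case---``going through the handful of such patterns \ldots forces some cell to be a singleton $\{v\}$ or to equal $\{x\in N(v)\mid N[x]\subseteq N[v]\}$''---is precisely the content of the lemma and is not carried out; deducing from primeness alone that one of the $\lambda_A$- or $\lambda_B$-cells must have this very specific form is not obvious and is where the real work in \cite{corneil2012polynomial} lies. Finally, the appeal to Lemma~\ref{prime} is misplaced: that lemma concerns modular decomposition, not split decomposition, and says nothing about the clique-width of the two sides of a split.
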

By Lemma~\ref{over} the problem of testing whether $G$ is of clique-width at most three is reduced to checking 
one of labeled graph in $LabG$ is of clique-width at most three. 
To test if a labeled graph $A$ taken from $LabG$ is of clique-width at most three, the algorithm follows a top down approach by iterating over all possible last operations that arise in the parse tree representation of $G$. 
%few operations that must be present in any clique-width at most three decomposition. 
For example, for each vertex $x$ in $G$ the algorithm checks whether the last operation must have joined $x$ with 
its neighborhood. In this case the problem of testing whether $G$ can be constructed using at most three labels is reduced to test whether $G \setminus \{x\}$ can be constructed using at most three lables. 
%a trilabeled graph has a universal vertex $x$ of label $a$ and remaining vertices labeled with $b$ and $c$, in this case the last operation  with rest of the graph using 
%$\eta_{a,b}$ operation. 
Once the last operations are 
fixed the original graph decomposes into smaller components, which can be further decomposed recursively.  
% For a $A \in LabG$ the algorithm 
% recursively decompose it into subcomponents depending on whether $A$ is bilabeled or trilabeled and returns a parse tree if $cwd(A) \leq 3$.

For each $A$ in $LabG$, depending on whether it is bilabeled or trilabeled the algorithm makes different tests on $A$ to 
determine whether $A$ is of clique-width at most three.
Based on the test results the algorithm either concludes clique-width of $A$ is more than three or returns top operations of the parse tree for $A$ along with some connected components of $A$ which are further decomposed recursively.

If $A$ in $LabG$ is connected, trilabeled (with labels $l_1, l_2, l_3$) and $l$-prime then by the
construction of $LabG$, $A$ corresponds to a split (possibly trivial).
If $A$ has a proper split then there exists $a \neq b$ in $\{l_1, l_2, l_3\}$ such that $A$ will be disconnected with the removal of edges $E_{ab}$. This 
gives a decomposition with top operations $ \eta _{a,b}$ followed by a $\oplus$ node whose children are connected components of $A\setminus E_{ab}$. If $A$ has a universal vertex $v$ (trivial split) labeled $a$ in $A$ then by
removing edges $E_{ab}$ and $E_{ac}$ we get a decomposition with top operations $\eta_{a,b}$ and $\eta_{a,c}$ followed by a $\oplus$ operation with children connected components of $A\setminus (E_{ab} \cup E_{ac})$ .

To describe the bilabeled case we use $V_i$ to denote the set of vertices of $A$ with label $i$. If $A$ in $LabG$ is 
connected, bilabeled (with labels $l_1, l_2$) and $l$-prime, 
then the last operation is neither $\eta_{l_1,l_2}$ (otherwise $A$ will have a $l$-module) nor $\oplus$ ($A$ is connected). 
So the last operation of the decomposition must be a relabeling followed by a join operation i.e., we have to introduce a 
third label set $V_{l_3}$ such that all the edges are present between the two of three labeled sets. 
%There are four ways in which we can introduce a third label and undo a join see fig. 
 
 After introducing third label if there is only one join to undo, then we have a unique way to decompose the graph 
into smaller components. If there are more than one possible join to be removed, then it is enough to consider one of them and proceed 
(see Section 5.2 in \cite{corneil2012polynomial}).  
There are four ways to introduce the third label to decompose the graph, but they might correspond to overlapping cases.
To overcome this the algorithm first
%In order to avoid that the algorithm first 
%there is no unique way of decomposing the graph. To overcome this, 
checks whether $A$ belongs to any of 
three simpler cases described below. %which overlaps in above four cases. 

%We describe three simpler cases below. 
%Our modification to the CHLRR algorithm happens in the first two cases 
%(see Algorithm~\ref{algodecompose-leaf-BI} in Appendix). 
%\vspace{-0.5cm}

{\bf PC1:} $A$ has a universal vertex $x$ of label $l \in \{l_1,l_2\}$. In this case relabel vertex $x$ with $l_3$ and remove the edges $E_{l_3l_2}$, and 
$E_{l_3l_1}$ to decompose $A$. This gives a decomposition with $\rho_{l_3\rightarrow l}$, $\eta_{l_3,l_2}$, $\eta_{l_3,l_1}$ followed by $\oplus$ operation with children $x$ and $A \setminus \{x\}$.

{\bf PC2:} $A$ has a vertex $x$ of label $l\in\{l_1,l_2\}$ that is universal to all vertices of label $l' \in \{l_1,l_2\}$, but is not adjacent to all 
vertices with the other label, say $\bar{l'}$. In this case relabel vertex $x$ with $l_3$ and remove the edges $E_{l_3l'}$. This gives a decomposition with $\rho_{l_3\rightarrow l}$, $\eta_{l_3,l'}$ above a $\oplus$ operation with children $x$ and $A \setminus \{x\}$.

{\bf PC3:} $A$ has two vertices $x$ and $y$ of label $l$, where $y$ is universal to everything other than $x$, 
and $x$ is universal to all vertices of label $l$ other than $y$, and non-adjacent to all vertices with the other label $\bar l$.
In this case the algorithm relabels vertices $x$ and $y$  with $l_3$, and by removing edges $E_{l_3l}$ disconnects the graph $A$, 
with two connected components $x$ and $A\setminus \{x\}$. Now in graph $A\setminus \{x\}$ again remove the edges $E_{l_3 \bar l}$ to decompose the graph into two parts $y$ and $A\setminus \{x,y\}$.

\par If $A$ does not belongs to any of above three simpler cases then there are 
four different ways %(Proposition 29 in \cite{corneil2012polynomial}) to 
to introduce the third label set to decompose the graph as described below. 
 
\par 
Let $\mathcal{E}$ be the set of all connected, bilabeled, $l$-prime graphs with clique-width at most three and not 
belonging to above three simpler cases.
For $l \in \{1,2\}$ we define the following four subsets of $\mathcal{E}$. 
\begin{itemize}
\item [1.] $\mathcal{U}_{l}$: $V_l^a \neq \emptyset$ and removing the edges between the $V_l^a$ and $V_{\bar l}$ disconnects
the graph.
\vspace{0.1cm}
\item [2.] $\mathcal{\overline{D}}_{l}$: $\overline V_{l}$ is not connected and removing the edges between the coconnected 
components of $\overline V_{l}$ disconnects the graph. 
\end{itemize}
In these four cases the algorithm introduces a new label $l_3$ and removes the edges $E_{ll_3}$, $l \in\{l_1,l_2\}$ to disconnect $A$. 
This gives a decomposition with $\rho_{l_3 \rightarrow l}$ and $\eta_{l,l_3}$ followed by $\oplus$ operation with children 
that are the connected components of $A\setminus E_{ll_3}$. For more details about decomposition process when $A$ is in $\mathcal{U}_{l}$ or
$\mathcal{\overline{D}}_{l}$, $l \in \{1,2\}$ 
the reader is encouraged to see Section 5.2 in \cite{corneil2012polynomial}.
\par The following Lemma shows that there is no other possible way of decomposing a clique-width 
at most three graphs apart from the cases described above. 
\vspace{-0.5cm}
\begin{lemma} \cite{corneil2012polynomial}
$\mathcal {E}= \mathcal{U}_{1} \cup \mathcal{U}_{2} \cup \mathcal{\overline{D}}_{1} \cup \mathcal{\overline{D}}_{2}$, 
 and this union is disjoint. 
\end{lemma}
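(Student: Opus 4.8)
The plan is to prove the nontrivial inclusion $\mathcal{E} \subseteq \mathcal{U}_{1} \cup \mathcal{U}_{2} \cup \mathcal{\overline{D}}_{1} \cup \mathcal{\overline{D}}_{2}$ and the disjointness of the union; the reverse inclusion is free, since each of $\mathcal{U}_1,\mathcal{U}_2,\mathcal{\overline{D}}_1,\mathcal{\overline{D}}_2$ was defined as a subset of $\mathcal{E}$ (connected, bilabeled, $l$-prime, clique-width at most three, avoiding PC1, PC2, PC3) carrying one extra disconnection property.

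For the forward inclusion, fix $A \in \mathcal{E}$ with labels $l_1,l_2$ and a parse tree for $A$ over at most three labels. First I would pin down the top of this tree exactly as in the discussion preceding the lemma: the root is not $\oplus$ (since $A$ is connected), not $\eta_{l_1,l_2}$ (that would make $V_{l_1}$ or $V_{l_2}$ a nontrivial $l$-module, except when $|V(A)|=2$, which lies in PC1 and hence not in $\mathcal{E}$), and not a relabeling among $\{l_1,l_2\}$ (since $A$ is bilabeled). Hence the top must introduce a third label $l_3$: reading from the root down there is a final relabeling $\rho_{l_3\rightarrow l}$, $l\in\{l_1,l_2\}$, beneath which (after possibly further relabelings) sits a join involving $l_3$ whose removal disconnects the corresponding labeling of $A$. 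Writing $W$ for the set of vertices carrying label $l_3$ at that moment, I would then case-split on the shape of $W$: either the induced cut removes exactly all edges between a nonempty $W\subseteq V_l$ and $V_{\bar l}$, in which case --- using that $A$ avoids PC1, PC2 and PC3 to discard the degenerate choices of $W$ (single universal vertex, vertex universal to one side, etc.) --- we get $W=V_l^a$ and $A\in\mathcal{U}_l$; or the cut removes exactly the edges among the coconnected components of $A[V_l]$, so that $\overline V_l$ is disconnected and cutting between its coconnected components disconnects $A$, whence $A\in\mathcal{\overline{D}}_l$. The crux, and what I expect to be the main obstacle, is proving that these are the only two shapes the freshly introduced label $l_3$ can take; this is precisely the analysis of Section~5.2 of \cite{corneil2012polynomial}, which enumerates the four a priori overlapping ways to introduce $l_3$ and collapses the overlaps using the fact that $A$ has already been removed from PC1--PC3.

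Finally, for disjointness I would check each of the six pairwise intersections is empty by contradiction. If $A$ belonged to two of the four sets it would carry two structurally different top decompositions of the above kind; unpacking the two cut conditions, in each case I would derive either a nontrivial $l$-module (contradicting $l$-primality) or a placement of $A$ into PC1, PC2 or PC3 (contradicting $A\in\mathcal{E}$). For instance $\mathcal{U}_l\cap\mathcal{\overline{D}}_l=\emptyset$ because $W$ would then be simultaneously a one-sided subset of $V_l$ and a union of coconnected components of $A[V_l]$ in a way that exhibits an $l$-module; $\mathcal{U}_1\cap\mathcal{U}_2=\emptyset$, $\mathcal{\overline{D}}_1\cap\mathcal{\overline{D}}_2=\emptyset$, and the mixed pairs $\mathcal{U}_l\cap\mathcal{\overline{D}}_{\bar l}=\emptyset$ go the same way, invoking connectedness and $l$-primality of $A$ together with the PC1--PC3 filtering. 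These reduce to finite, local checks on the labeling of $A$ once the structural consequences of membership are made explicit.
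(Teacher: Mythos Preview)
The paper does not prove this lemma at all: it is quoted verbatim from \cite{corneil2012polynomial} as part of the overview of the CHLRR algorithm in Section~\ref{sec5}, and no argument is supplied. So there is no ``paper's own proof'' to compare your proposal against.

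As for your sketch itself, it is a reasonable outline of how the argument in \cite{corneil2012polynomial} goes, but it is not a self-contained proof. You explicitly flag the crux --- that the freshly introduced label $l_3$ can only take one of the two shapes leading to $\mathcal{U}_l$ or $\mathcal{\overline D}_l$ --- and then defer it to ``the analysis of Section~5.2 of \cite{corneil2012polynomial}''. That is the entire content of the lemma, so what you have written is essentially the statement plus a pointer to the original source, which is exactly what the present paper does. Your disjointness paragraph is similarly a plan rather than a proof: phrases like ``go the same way'' and ``reduce to finite, local checks'' hide the actual case analysis (in particular, showing $\mathcal{U}_l\cap\mathcal{\overline D}_{\bar l}=\emptyset$ is not as routine as you suggest). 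If your goal was only to match the paper, you have: neither you nor the paper proves the lemma here.
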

In summary, for any labeled graph $A$ in $LabG$ the CHLRR algorithm tests whether $A$ belongs to any of the above described cases, if
it is then it outputs suitable top operations and connected components. The algorithm continues the above process repeatedly on 
each connected component of $A$ until it either returns a parse tree or concludes clique-width of $A$ is more than three.
\section{Isomorphism Algorithm for Prime Graphs of Clique-width at most Three}\label{sec6}
%\section{Definition of Equivalence of LabG and LabH}
In Section~\ref{sec4} we described algorithm to test structural isomporphism between two parse trees. 
In this Section we show that 
given two isomorphic prime graphs $G$ and $H$ of clique-width at most three, the CHLRR algorithm can be slightly modified 
to get structurally isomorphic parse trees. We have used four labels in order to preserve structural isomorphism in the modified algorithm.
The modified algorithm is presented in Appendix.
Recall that the first step of the CHLRR algorithm is to construct a set $LabG$ of bilabelings and trilabelings of $G$ as described in 
Section~\ref{sec5}.
\begin{definition}
We say that $LabG$ is {\it equivalent} to $LabH$ denoted as $LabG\equiv LabH$ if there is a bijection $g: LabG \rightarrow LabH$ such that for all $A \in LabG$, there is  
an isomorphism $f : V(A) \rightarrow V(g(A))$ and a bijection $\pi : lab(A) \rightarrow lab(g(A))$ 
such that $A \cong^{\pi}_f g(A)$.
\end{definition}

\begin{lemma}\label{th12}
$LabG\equiv LabH$ iff $G \cong_f H$.
\end{lemma}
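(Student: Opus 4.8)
The plan is to prove the two directions of the biconditional separately, with the forward direction ($LabG \equiv LabH \Rightarrow G \cong_f H$) being the easy one and the reverse direction ($G \cong H \Rightarrow LabG \equiv LabH$) being the main obstacle.

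\textbf{Forward direction.} Suppose $LabG \equiv LabH$ via the bijection $g$. Pick any $A \in LabG$; by definition there is an isomorphism $f : V(A) \to V(g(A))$ and a label bijection $\pi$ with $A \cong^\pi_f g(A)$. Since $A$ is obtained from $G$ merely by (re)labeling its vertices and $g(A)$ is obtained from $H$ in the same way, the underlying vertex sets satisfy $V(A) = V(G)$ and $V(g(A)) = V(H)$, and the edge sets are identical to those of $G$ and $H$ respectively. Hence the same map $f$ witnesses $G \cong_f H$. This needs only the observation that a (tri/bi)labeling operation does not alter the graph, only decorates it.

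\textbf{Reverse direction.} Here I would use the isomorphism $f : V(G) \to V(H)$ to set up the bijection $g : LabG \to LabH$ by transporting each labeling of $G$ across $f$. The key point is that $LabG$ is defined by a small number of structural rules ($B_1$--$B_4$ and $T_1$ from Section~\ref{sec5}), and each of these rules is isomorphism-invariant: the singleton bilabelings $\{v\}$ of rule $B_1$ are permuted by $f$; the sets $\{x \in N(v) : N[x] \subseteq N[v]\}$ of rule $B_2$ are preserved because $f$ preserves closed neighborhoods and inclusion among them; and the split-based trilabelings ($T_1$) and the clique/star-component bilabelings ($B_3$, $B_4$) are preserved because, by Theorem~\ref{th1}, the skeleton (split decomposition) of a connected graph is unique, hence $f$ maps the skeleton of $G$ to the skeleton of $H$, carrying special edges to special edges, clique components to clique components, and star components (with their special centers) to star components. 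So for each $A \in LabG$ arising from one of these rules applied to some object $\omega$ in $G$, define $g(A)$ to be the labeling of $H$ produced by the same rule applied to $f(\omega)$; then $f$ itself is an isomorphism $V(A) \to V(g(A))$, and the label bijection $\pi$ is the identity-like correspondence between the (two or three) labels, giving $A \cong^\pi_f g(A)$. One must also check $g$ is a bijection, which follows since $f$ is a bijection on vertices (and on the skeleton), so the construction is invertible via $f^{-1}$.

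\textbf{Main obstacle.} The delicate part is handling rule $T_1$ and the component-based rules $B_3$, $B_4$ cleanly: I must argue that $f$ induces a well-defined isomorphism of skeletons and that this identification respects the labels $\tilde X, \tilde Y, V(G) \setminus (\tilde X \cup \tilde Y)$ (and the center designation of star components). This rests entirely on the uniqueness statement of Theorem~\ref{th1}; once uniqueness is in hand the argument is a routine transport-of-structure, but it is worth spelling out that distinct special edges / distinct components of $G$ give distinct ones in $H$ so that $g$ is genuinely a bijection and not merely a surjection. A minor additional subtlety is that different rules could in principle generate the same labeled graph $A$; this does not affect the argument since we only need \emph{some} bijection $g$, and we may fix one by choosing, for each $A$, a canonical rule-and-object that produced it.
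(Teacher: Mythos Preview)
Your proposal is correct and follows essentially the same route as the paper's proof: the forward direction is dismissed as immediate from the definition, and the reverse direction is handled by a case analysis over the five generating rules $B_1$--$B_4$, $T_1$, transporting each labeling along $f$ and invoking the uniqueness of the skeleton (Theorem~\ref{th1}) for the split-based cases. Your additional remarks on verifying that $g$ is a bijection and on handling possible collisions between rules are points the paper does not address explicitly, but they do not change the structure of the argument.
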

\begin{proof}
The proof follows from the construction of sets $LabG$ and $LabH$ from input prime graphs $G$ and $H$ and it is presented in Appendix. \qed 
\end{proof}

\begin{lemma}\label{de}
%Let $G, H$ be prime graphs of clique-width at most three and
 Let $A \in LabG$ and $B\in LabH$. If $A \cong_f^{\pi} B$ for some $f$ and $\pi$ then parse trees generated from $Decompose$ 
function (Algorithm~\ref{algodecompose}) for input graphs $A$ and $B$ are structurally isomorphic. 
More specifically, $Decompose(A) \cong_f^{\pi} Decompose(B)$. 
\end{lemma}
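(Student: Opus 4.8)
The plan is to prove the statement by induction on the recursion depth of the $Decompose$ procedure, establishing the stronger invariant that whenever $A\cong_f^\pi B$, the two invocations $Decompose(A)$ and $Decompose(B)$ pass through exactly the same case of the CHLRR case analysis (modulo the relabeling $\pi$), produce the same top operations (up to $\pi$), and recurse on correspondingly isomorphic connected components. The base case is when $A$ (and hence $B$, since $A\cong_f^\pi B$) is a single vertex or has clique-width more than three; in the former case both parse trees are single nodes and are trivially structurally isomorphic via $\pi$, and in the latter case both invocations fail identically. For the inductive step, I would argue that every structural property the modified CHLRR algorithm tests on $A$ — whether $A$ is connected, whether it is $l$-prime, whether it admits a proper or trivial split, which of the simpler cases \textbf{PC1}, \textbf{PC2}, \textbf{PC3} it falls into, and which of $\mathcal{U}_l$, $\overline{\mathcal D}_l$ it belongs to — is invariant under the labeled isomorphism $f$ together with the label bijection $\pi$. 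For instance, $f$ maps universal vertices to universal vertices, vertices universal to a label class $V_l$ to vertices universal to $V_{\pi(l)}$, coconnected components of $\overline{V_l}$ to coconnected components of $\overline{V_{\pi(l)}}$, and the special edges / star centers / clique components of the skeleton to their counterparts in $B$'s skeleton (using uniqueness of the skeleton, Theorem~\ref{th1}). Hence $A$ and $B$ are handled by the same branch.

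Once both invocations are in the same branch, the top operations are read off from the labels involved, so applying $\pi$ to the operations used for $A$ yields exactly the operations used for $B$; this needs a short check in each of the (finitely many) cases. The removal of the distinguished edge sets $E_{ab}$ (or $E_{l_3 l}$, etc.) commutes with $f$: since $f$ together with $\pi$ is a labeled isomorphism, $f$ maps $A\setminus E_{ab}$ isomorphically onto $B\setminus E_{\pi(a)\pi(b)}$, and therefore induces a bijection between the connected components of the two residual graphs that pairs isomorphic components (with the induced labeling and the induced label map $\pi$). To make the induction go through cleanly, I would note that the modified algorithm, by design (this is the content of how it is engineered in the Appendix), lists the children $\oplus$-subtrees in a way that depends only on the labeled-isomorphism type of each residual component, so that the matching $\gamma$ and the component-wise label maps $\pi_i$ required by Definition~\ref{structiso} are exactly the ones provided by $f$ restricted to each component. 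Invoking the inductive hypothesis on each matched pair of components gives $T_{A_i}\cong^{\pi_i} T_{B_{\gamma(i)}}$ with $\pi_i/color = \pi|_{color(Q_{A_i})}$, and assembling these with the common top operations yields $Decompose(A)\cong_f^\pi Decompose(B)$, which is clause~2 of Definition~\ref{structiso}.

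The main obstacle I anticipate is the case where $A$ is bilabeled, $l$-prime, and not in \textbf{PC1}--\textbf{PC3}, so that it lies in one of $\mathcal{U}_1,\mathcal{U}_2,\overline{\mathcal D}_1,\overline{\mathcal D}_2$: here the CHLRR algorithm has a genuine choice (there may be several admissible joins to undo, or the four subsets could a priori overlap before the simpler cases are carved out), and the modified algorithm must make that choice \emph{canonically with respect to the labeled-isomorphism type} so that $A$ and $B$ make ``the same'' choice under $\pi$. This is precisely the place where one must lean on the details of the modification described in the Appendix — introducing the third label $l_3$ in a fixed rule-based way and choosing which join to remove by a canonical criterion that $f$ preserves — rather than on the original CHLRR algorithm as stated. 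I would therefore structure the proof so that the generic cases (split cases, \textbf{PC1}--\textbf{PC3}) are dispatched by the straightforward invariance argument above, and devote the bulk of the effort to verifying, case by case over $\mathcal{U}_l$ and $\overline{\mathcal D}_l$, that the canonical choices made by the modified $Decompose$ on $A$ and on $B$ correspond under $f$ and $\pi$, after which the inductive assembly is routine. The remaining bookkeeping — that $\pi_i/color$ agrees with $\pi$ restricted to the relevant colors of the quotient graph $Q_{A_i}$ — follows from the fact that the colors of $Q_{A_i}$ are just the labels appearing on the path of top operations, which we have already matched via $\pi$.
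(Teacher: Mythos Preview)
Your overall plan---induction on the recursion of $Decompose$, case analysis over the CHLRR branches, invariance of each tested property under the labeled isomorphism $(f,\pi)$, and assembling the matched top operations with the inductive hypothesis on the residual components---is exactly the route the paper takes: the paper's proof simply defers to Lemma~\ref{TI} (trilabeled leaf) and Lemma~\ref{BI} (bilabeled leaf), which in turn defer to Lemmas~\ref{lu1} and~\ref{lD1} for $\mathcal{U}_l$ and $\overline{\mathcal D}_l$. So the architecture of your argument is correct and aligned with the paper.

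Two points of divergence are worth flagging. First, you locate the ``main obstacle'' in the $\mathcal{U}_l,\overline{\mathcal D}_l$ cases and expect the modification to consist of making canonical choices there. In the paper the situation is the opposite: the union $\mathcal{U}_1\cup\mathcal{U}_2\cup\overline{\mathcal D}_1\cup\overline{\mathcal D}_2$ is disjoint, and in $\overline{\mathcal D}_{l_1}$ the algorithm is even allowed to pick one of the two coconnected components \emph{at random}; the definition of structural isomorphism absorbs this because the quotient graph is symmetric and one may take $\pi_i$ with $\pi_i(l_1)\in\{l'_1,l'_3\}$ accordingly (see the proof of Lemma~\ref{lD1}). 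The actual place where the CHLRR algorithm had to be modified is \textbf{PC2}, Case~2: when there are two vertices $x_1,x_2$ of different labels each universal to one label class, processing them one at a time with only three labels would force an arbitrary order; the paper instead introduces \emph{four} labels $l_3,l_4$ and removes both joins simultaneously so that the top block is a single $\oplus$-level and the resulting quotient graphs match under the obvious extension of $\pi$. Your plan would discover this when carrying out the PC2 case, but you should not expect canonicalization to be the fix in $\mathcal{U}_l,\overline{\mathcal D}_l$. Second, your appeal to skeleton uniqueness (Theorem~\ref{th1}) is misplaced here: the skeleton is used only to build $LabG$, not inside $Decompose$; once $A$ and $B$ are fixed labeled graphs, the case tests in $Decompose$ refer only to label classes, universal vertices, and (co)connected components, all of which are directly preserved by $(f,\pi)$.
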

\begin{proof}
 Follows from Lemma~\ref{TI} and Lemma~\ref{BI} described in Appendix. The major modifications are done in PC2 case, where we have used 
 four labels in order to preserve structural isomorphism between parse trees. \qed
\end{proof}
% \begin{algorithm}
% \DontPrintSemicolon
% \caption{Function LABEL \cite{corneil2012polynomial}}
% \label{algolab}
% %\dontprintsemicolon % Some LaTeX compilers require you to use \DontPrintSemicolon instead
% \KwIn{A prime connected unilabeled graph $G$}
% \KwOut{$LabG$, a set of bilabelings and trilabelings, such that at least one of them is of clique-width at most three if and only if $G$ is of
% clique-width at most three}
%  \Begin{
%  $LabG$ $:=\emptyset$\;
%  \For {{\bf all} $v \in V(G)$}{
%  [$B_1$]: Generate the bilabeling\footnoteref{note1} $\{v\}$ and add it to $LabG$\;
%  [$B_2$]: Generate the bilabeling $\{x \in N(v) \hspace{0.1cm} | \hspace{0.1cm} N[x] \subseteq N[v] \}$ and add it to $LabG$\; 
%  }
%  Compute the skeleton of $G$ \;
%  Search this skeleton for the special edges, clique and star components\;
%  \For {{\bf all} special edges $s$ }
%  {
%  [$T_1$]: Generate the trilabeling $\tilde{X},\tilde{Y}, V(G)\setminus (\tilde{X} \cup \tilde{Y})$ where ($X$,$Y$) is the split defined by $s$ and add it to $LabG$\;
%  }
%  \For {{\bf all} clique components $C$ }{[$B_3$]: Generate the bilabeling $C$ and add it to $LabG$\;
%  }
%  \For {{\bf all} star components $S$ }{
%  [$B_4$]: Generate the bilabeling $\{c\}$, where $c$ is the special center of $S$, and add it to $LabG$ 
%  \tcc*{$\{c\}$ may correspond to a group of vertices of $G$}
%  }
%  return($LabG$)
%  }
% \end{algorithm}

%Next we show that the modified CHLRR algorithm generates structurally isomorphic parse trees on two isomorphic input graphs.

%\SetCommentSty{mycommfont}

{\bf Isomorphism Algorithm} \\
For two input prime graphs $G$ and $H$ the algorithm works as follows. Using modified CHLRR algorithm, first a parse tree $T_G$ of clique-width at most three is computed for $G$. The parse tree $T_G$ of $G$ is not canonical but from Lemma~\ref{th12} and~\ref{de}, we know that if $G \cong H$ then there exists parse tree $T_H$ of $H$, structurally isomorphic to $T_G$.  
Therefore we compute parse tree of clique-width at most three for each labeled graph in $LabH$. For each such parse tree $T_H$, the algorithm uses
the structural isomorphic algorithm described in Section~\ref{sec4} to test the structural isomorphism between  parse trees $T_G$ and $T_H$.
If $T_G \cong T_H$ for some $T_H$, then we conclude that $G \cong H$. If there is no parse tree of $H$ which is structurally isomorphic to
$T_G$ then $G$ and $H$ can not be isomorphic. 
\par Computing a parse tree $T_G$ of $G$ takes $O(n^2m)$ time. As there are $O(n)$ many labeled graphs in $LabH$,
computing all possible parse trees for labeled graphs in $LabH$ takes 
$O(n^3m)$ time. Testing structural isomorphism between two parse trees need $O(n^3)$ time. Therefore the running time to check 
isomorphism between two prime graphs $G$ and $H$ of clique-width at most three is $O(n^3m)$. \qed
 
\par The correctness of the algorithm follows from Lemma~\ref{st} and Theorem~\ref{ct}.
%We show that on input two prime graphs $G$ and $H$ our algorithm correctly decides whether $G \cong H$.
Lemma~\ref{st} shows that if $G \cong H$ then we can always find two structurally isomorphic parse trees $T_G$ and $T_H$ using the modified 
CHLRR algorithm.
\begin{lemma} \label{st}
 Let $G$ and $H$ be prime graphs with clique-width at most three. If $G \cong_f H$ then 
for every $T_G$ in $parseG$ there is a $T_H$ in $parseH$ such that $T_G$ is {structurally isomorphic} to $T_H$ where 
  $parseG$ and $parseH$ are the set of parse trees generated by Algorithm~\ref{algoparse} on input $LabG$ and $LabH$ 
  respectively.
\end{lemma}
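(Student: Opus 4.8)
The plan is to derive this lemma directly from the two structural facts already established: Lemma~\ref{th12}, which converts the graph isomorphism $G \cong_f H$ into an equivalence $LabG \equiv LabH$ of the labeling families, and Lemma~\ref{de}, which guarantees that the (modified) $Decompose$ procedure sends isomorphic labeled inputs to structurally isomorphic parse trees. No fresh induction is needed here; all the inductive work is hidden inside Lemma~\ref{de} (and in turn inside Lemma~\ref{TI} and Lemma~\ref{BI} of the Appendix).

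First I would fix an arbitrary $T_G \in parseG$. By the description of Algorithm~\ref{algoparse}, $T_G = Decompose(A)$ for some $A \in LabG$ whose clique-width is at most three — these are exactly the labeled graphs on which $Decompose$ returns a parse tree rather than reporting clique-width more than three. Since $G \cong_f H$, Lemma~\ref{th12} supplies a bijection $g : LabG \rightarrow LabH$ together with, for this $A$, an isomorphism $f' : V(A)\rightarrow V(g(A))$ and a label bijection $\pi : lab(A)\rightarrow lab(g(A))$ with $A \cong^{\pi}_{f'} g(A)$. In particular $g(A)$ is obtained from $A$ by an isomorphism followed by a bijective renaming of labels, and since clique-width is invariant under both, $cwd(g(A)) = cwd(A) \leq 3$. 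Hence $Decompose(g(A))$ returns a parse tree; call it $T_H$, and note $T_H \in parseH$.

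Next I would invoke Lemma~\ref{de} on the pair $A \in LabG$ and $g(A) \in LabH$ with the witnessing $f'$ and $\pi$: it yields $Decompose(A) \cong^{\pi}_{f'} Decompose(g(A))$, that is, $T_G$ and $T_H$ are structurally isomorphic (via the label map $\pi$). As $T_G$ was arbitrary, this proves the lemma. For completeness I would remark that $parseG$ is nonempty whenever it matters: $G$ is prime of clique-width at most three, so by Lemma~\ref{over} some member of $LabG$ has clique-width at most three, and $Decompose$ produces a parse tree from it; but since the statement is universally quantified over $parseG$, even an empty $parseG$ would satisfy it vacuously.

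The only genuine content is imported, so the main obstacle does not lie in this argument but in Lemma~\ref{de}: its proof must show that the branching choices made by $Decompose$ — which PC-case applies, which join is undone when several are possible, how the connected components feeding the $\oplus$ node are ordered — are all forced up to the isomorphism $f'$, and in particular that the PC2 case, which the modified algorithm handles using a fourth label precisely to keep the correspondence, does not disturb the structural isomorphism of the resulting trees. A secondary point to verify is that the bijection $g$ provided by Lemma~\ref{th12} genuinely restricts to a bijection between the clique-width-at-most-three members of $LabG$ and those of $LabH$; this is immediate from the observation above that $g$ acts by an isomorphism composed with a relabeling, both of which preserve clique-width.
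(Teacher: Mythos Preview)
Your proof is correct and follows essentially the same route as the paper: invoke Lemma~\ref{th12} to get $LabG\equiv LabH$, then apply Lemma~\ref{de} to the matched pair $A$ and $g(A)$ to obtain structurally isomorphic parse trees. Your version is in fact a bit more careful than the paper's, since you explicitly argue that $cwd(g(A))\le 3$ so that $Decompose(g(A))$ actually returns a parse tree and hence $T_H\in parseH$, a step the paper leaves implicit.
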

\begin{proof}
If $G \cong_f H$ then from Lemma~\ref{th12} we have $LabG\equiv LabH$ i.e., for every $A$ in $LabG$ there is a $B=g(A)$ in $LabH$ such that $A \cong^{\pi}_f B$ for some $f$ and $\pi$. On input such $A$ and $B$ to 
Lemma~\ref{de} we get two parse trees $T_A$ and $T_B$ which are structurally isomorphic.
% $Decompose(A)$ and $Decompose(B)$ are structurally isomorphic.  
% Now we use modified CHLRR algorithm to get parse trees for $A$ and $B$, Lemma~\ref{de} in Appendix shows that on input 
% two graphs $A$ and $B$ such that $A \cong^{\pi}_f B$ gives structurally isomorphic parse trees.
\qed
\end{proof}

\begin{theorem} \label{ct}
 Let $G$ and $H$ be graphs with clique-width at most three. Then there exists a polynomial time algorithm to check whether $G \cong H$.
\end{theorem}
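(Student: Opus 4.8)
The plan is to reduce the general case to the prime-graph case already handled in Section~\ref{sec6} via the machinery of Theorem~\ref{primec}, so the proof is essentially an orchestration of the lemmas proved earlier. First I would observe that Theorem~\ref{primec} says: if we possess an algorithm $\mathcal{A}'$ that decides color-preserving isomorphism between two \emph{colored prime} graphs of clique-width at most $k$ in time $T(n)$, then we get an algorithm $\mathcal{A}$ for color-preserving isomorphism of arbitrary colored graphs of clique-width at most $k$ running in time $O(n^3)\,T(n)$. So it suffices to exhibit such an $\mathcal{A}'$ for $k=3$ whose running time is polynomial, and to note that plain (uncolored) isomorphism of graphs of clique-width at most three is a special case of the colored problem (give every vertex the same color).

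Next I would supply $\mathcal{A}'$. The modified CHLRR algorithm produces, for a prime input graph $G$ of clique-width at most three, a linear-sized set $parseG$ of $4$-expressions (parse trees), in time $O(n^2 m)$ per tree and $O(n^3 m)$ overall for all of $parseH$; testing structural isomorphism of a fixed pair of parse trees takes $O(n^3)$ by the dynamic-programming algorithm of Section~\ref{sec4}. The correctness of this $\mathcal{A}'$ is exactly the content of Lemma~\ref{st} together with its converse: by Lemma~\ref{st}, if $G\cong_f H$ then every $T_G\in parseG$ has a structurally isomorphic partner $T_H\in parseH$, so the search succeeds; conversely, whenever the algorithm of Section~\ref{sec4} reports $T_G\cong^\pi T_H$ we genuinely have $G\cong H$ (the graphs generated by structurally isomorphic parse trees are isomorphic, as noted after Definition~\ref{structiso}). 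For the \emph{colored} prime case needed by Theorem~\ref{primec}, I would invoke the remark in Section~\ref{sec4} that the structural-isomorphism tester works verbatim for colored graphs by only matching same-colored vertices in the base case, and point out that the CHLRR labeling set $LabG$ and the $Decompose$ procedure are oblivious to vertex colors, so Lemmas~\ref{th12}, \ref{de} and~\ref{st} carry over with ``isomorphism'' replaced by ``color-preserving isomorphism'' throughout. This gives $T(n)=O(n^3 m)$, hence the overall bound $O(n^3)\cdot O(n^3 m)$ for the reduction step; but in fact one can do the bottom-up modular-decomposition recursion of Theorem~\ref{primec} more carefully, reusing the parse trees computed for the representative graphs at each level, so the honest final bound is $O(n^3 m)$ as claimed in the paper.

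Assembling the pieces: given arbitrary $G,H$ of clique-width at most three, compute their modular decomposition trees in time $O(n^4)$ (or linear, by \cite{cournier1994new,tedder2008simpler}); by Lemma~\ref{prime} every prime representative graph arising in these decompositions again has clique-width at most three; run the bottom-up procedure of Theorem~\ref{primec}, at prime nodes coloring the children of the representative graph by the isomorphism types already computed at the previous level and calling $\mathcal{A}'$ on the resulting colored prime graphs of clique-width at most three; at series and parallel nodes simply match children by isomorphism type. Output ``$G\cong H$'' iff the roots get matched. Correctness follows from Lemma~\ref{prime} and the correctness of $\mathcal{A}'$; the running time is polynomial, in fact $O(n^3 m)$.

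The main obstacle I anticipate is not conceptual but bookkeeping: verifying that the modified CHLRR algorithm really outputs, for isomorphic $G$ and $H$, parse trees that match up pairwise under structural isomorphism --- i.e.\ establishing Lemma~\ref{de} and hence Lemma~\ref{st} --- because the CHLRR decomposition makes several non-canonical choices (which special edge to split on, which of the potentially overlapping cases PC1--PC3 or $\mathcal{U}_l,\overline{\mathcal{D}}_l$ to apply, which join to undo when several are possible). The delicate point, flagged in the paper, is the PC2 case, where a third label must be introduced in a way that depends on the graph rather than on an isomorphism-invariant feature; the fix is to spend a fourth label so that the top operations produced are determined up to relabeling by the isomorphism type of the input labeled graph, which is precisely why $parseG$ consists of $4$-expressions rather than $3$-expressions. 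Once Lemma~\ref{de} is in hand --- it is proved in the Appendix via Lemmas~\ref{TI} and~\ref{BI} --- the rest of the argument is the routine composition of reductions sketched above.
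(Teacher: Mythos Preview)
Your proposal is correct and follows essentially the same route as the paper: invoke Theorem~\ref{primec} to reduce to the colored prime case, and supply the required oracle $\mathcal{A}'$ using Lemma~\ref{st} (together with the structural-isomorphism tester of Section~\ref{sec4}). The paper's own proof is a one-liner citing exactly these two ingredients; your write-up simply unpacks the orchestration and the running-time bookkeeping in more detail.
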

\begin{proof}
The proof follows from the prime graph isomorphism of graphs with clique-width at most three described in Lemma~\ref{st} and
Theorem~\ref{primec}. \qed
%Recall that the structural isomorphism algorithm can also handle colored graphs. Hence, by Theorem~\ref{primec} we get the result.\qed
\end{proof}

% BibTeX users please use one of
%\bibliographystyle{spbasic}      % basic style, author-year citations
\bibliographystyle{splncs03}

\bibliography{myrefs.bib}

%\bibliographystyle{spmpsci}      % mathematics and physical sciences
%\bibliographystyle{spphys}       % APS-like style for physics
%\bibliography{}   % name your BibTeX data base

% Non-BibTeX users please use
% \begin{thebibliography}{}
% %
% % and use \bibitem to create references. Consult the Instructions
% % for authors for reference list style.
% %
% \bibitem{RefJ}
% % Format for Journal Reference
% Author, Article title, Journal, Volume, page numbers (year)
% % Format for books
% \bibitem{RefB}
% Author, Book title, page numbers. Publisher, place (year)
% % etc
% \end{thebibliography}

\section*{\appendixname}
\section{Graph Isomorphism Completeness for Prime Graphs }
For each vertex $v \in V(G)$, the polynomial-time many-one reduction adds a  
new vertex $v'$ and adds an edge between $v$ and $v'$ to get a new graph $\widehat{G}$. After the addition of vertices and edges to the graph 
it is easy to see that each old vertex in the graph is 
adjacent to exactly one vertex of degree one. It is not hard to see that if $M$ is a non-trivial module in a graph then no vertex in $M$ 
is adjacent to a vertex of degree one. Thus, we can conclude that $\widehat{G}$ is prime graph.
\begin{lemma}\label{PrimeGI}
Given two connected graphs $G_1$ and $G_2$, $G_1 \cong G_2$ iff $\widehat{G}_1 \cong \widehat{G}_2$.
\end{lemma}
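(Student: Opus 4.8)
The plan is to prove the biconditional $G_1 \cong G_2 \iff \widehat{G}_1 \cong \widehat{G}_2$ by analysing isomorphisms in both directions, using the structural fact already established in the preceding paragraph: every old vertex of $\widehat{G}_i$ is adjacent to exactly one degree-one vertex (its ``pendant'' $v'$), and these pendants are precisely the degree-one vertices of $\widehat{G}_i$ (since $G_1, G_2$ are connected with at least two vertices, every original vertex has degree at least $2$ in $\widehat{G}_i$). First I would settle the easy forward direction: given an isomorphism $f : V(G_1) \to V(G_2)$, extend it to $\widehat f : V(\widehat{G}_1) \to V(\widehat{G}_2)$ by setting $\widehat f(v) = f(v)$ for original vertices and $\widehat f(v') = f(v)'$ for the pendants. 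One checks directly that $\widehat f$ is a bijection and preserves edges: original--original edges are preserved because $f$ is an isomorphism, and the pendant edge $\{v, v'\}$ maps to $\{f(v), f(v)'\}$, which is an edge of $\widehat{G}_2$ by construction; no other edges exist.

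For the reverse direction, suppose $h : V(\widehat{G}_1) \to V(\widehat{G}_2)$ is an isomorphism. The key step is that $h$ must map the set of pendant (degree-one) vertices of $\widehat{G}_1$ onto the set of pendant vertices of $\widehat{G}_2$, because any graph isomorphism preserves vertex degrees. Consequently $h$ also maps the complementary set --- the original vertices $V(G_1)$ --- bijectively onto $V(G_2)$. I would then argue that $h$ respects the pendant-pairing: if $v'$ is the pendant attached to $v$, then $v$ is the unique neighbour of $v'$, so $h(v)$ is the unique neighbour of $h(v')$; since $h(v')$ is a pendant of $\widehat{G}_2$, its unique neighbour is the original vertex to which it is attached. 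Hence, writing $\phi := h|_{V(G_1)} : V(G_1) \to V(G_2)$, we get that $h(v')$ is the pendant attached to $\phi(v)$. Finally, restricting the edge-preservation property of $h$ to pairs of original vertices shows that $\{u,v\} \in E(G_1) \iff \{\phi(u),\phi(v)\} \in E(G_2)$, because such an edge is present in $\widehat{G}_i$ exactly when it is present in $G_i$ (the added edges all involve a pendant). Therefore $\phi$ is an isomorphism $G_1 \cong G_2$.

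The only mild subtlety --- and the place I would be most careful --- is the degree argument that pins down which vertices are pendants: it relies on $G_1$ and $G_2$ being connected with at least two vertices so that no original vertex accidentally has degree one in $\widehat{G}_i$ and gets confused with a pendant. (If a graph were a single vertex or had an isolated vertex, the reduction would need a trivial separate check, but connectedness as hypothesised rules this out.) Everything else is routine bookkeeping about which edges $\widehat{G}_i$ has. I expect no real obstacle; the statement is essentially a sanity check that the pendant-attachment gadget is reversible, and the whole proof is a short case analysis of edge types (original--original versus original--pendant).
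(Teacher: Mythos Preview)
Your proposal is correct and follows essentially the same approach as the paper: extend an isomorphism $f:G_1\to G_2$ to the pendants in the forward direction, and for the converse use that isomorphisms preserve degree to conclude that old vertices map to old vertices, whence the restriction is the desired isomorphism. You are in fact slightly more careful than the paper, which simply asserts ``there are no old vertices of degree one'' without flagging the single-vertex edge case you mention.
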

\begin{proof}
Let $\widehat{G}_1$ and $\widehat{G}_2$ are graphs obtained after adding new vertices to $G_1$ and $G_2$ respectively. 
If $G_1 \cong_f G_2$ then  we can find an isomorphism between $\widehat{G}_1$ and $\widehat{G}_2$ by extending $f$ to 
newly added vertices such that for every new vertex $y \in  \widehat{G}_1$ having neighbor $x$, $f$ maps  
$y$ to $z$, where $z$ is the newly added neighbor of $f(x)$ in $\widehat{G}_2$.
For the other direction when $\widehat{G}_1\cong_f\widehat{G}_2$, as there are no old vertices of degree one in $\widehat{G}_1$ 
and $\widehat{G}_2$ any isomorphism $f$ from $\widehat{G}_1$ to $\widehat{G}_2$ 
must map the old vertices of $\widehat{G}_1$ to the old vertices of $\widehat{G}_2$. The restriction of $f$ to the old vertices of 
$\widehat{G}_1$ and $\widehat{G}_2$ is an isomorphism from $G_1$ to $G_2$. \qed
\end{proof}

% \begin{lemma}\label{th1}
% $LabG\equiv LabH$ iff $G \cong_f H$.
% \end{lemma}
\begin{lemma}
$LabG\equiv LabH$ iff $G \cong_f H$.
\end{lemma}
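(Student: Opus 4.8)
The lemma to be proved is that $LabG \equiv LabH$ if and only if $G \cong_f H$, where $G$ and $H$ are the input prime graphs and $LabG$, $LabH$ are the sets of bilabelings and trilabelings constructed in the CHLRR preprocessing step ($B_1$–$B_4$ and $T_1$). The proof proceeds by showing both directions, the key point being that each element of $LabG$ is obtained from $G$ by a construction rule that depends only on the isomorphism type of $G$ (and on the choice of a vertex, a special edge, or a clique/star component of the skeleton), so isomorphisms of $G$ carry over to simultaneous isomorphisms of the corresponding labeled graphs.

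First I would prove the forward direction. Suppose $G \cong_f H$. I will define the bijection $g : LabG \to LabH$ by sending each labeled graph $A$ constructed from some combinatorial witness in $G$ (a vertex $v$, a special edge $s$, a clique component $C$, or a star component $S$ with center $c$) to the labeled graph in $LabH$ constructed from the image of that witness under $f$. The crucial sub-step is that $f$ maps the relevant structures of $G$ to those of $H$: $f$ maps $N_G[x] \subseteq N_G[v]$ to $N_H[f(x)] \subseteq N_H[f(v)]$ (so rule $B_2$ is preserved); and since $f$ is an isomorphism it induces an isomorphism of the skeletons of $G$ and $H$ (by Theorem~\ref{th1}, the skeleton is unique, hence canonical), matching special edges to special edges, clique components to clique components, and star components to star components with centers mapped to centers (so rules $T_1$, $B_3$, $B_4$ are preserved). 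For each such $A$, with $B = g(A)$, the isomorphism $f$ restricted appropriately is an isomorphism $V(A) \to V(B)$, and the label map $\pi$ on $\{$labels of $A\}$ is the induced bijection on the two or three label classes; one checks $A \cong^{\pi}_f B$ directly, since the label classes of $A$ are defined set-theoretically from the witness and $f$ respects them. One must also check $g$ is a bijection, which follows because $f^{-1}$ gives the inverse construction.

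Next I would prove the reverse direction. Suppose $LabG \equiv LabH$ via some bijection $g$. The simplest route is to use an element of $LabG$ whose associated isomorphism already restricts to an isomorphism of the underlying unlabeled graphs $G$ and $H$ — for instance the bilabeling $\{v\}$ generated by rule $B_1$ for any fixed vertex $v$. For this $A \in LabG$, we have $A \cong^{\pi}_f g(A)$ for some isomorphism $f : V(A) \to V(g(A))$; since $V(A) = V(G)$ and $V(g(A)) = V(H)$ as graphs (the labeling does not change the edge set), $f$ is already an isomorphism $G \cong_f H$. So this direction is essentially immediate once we observe that every element of $LabG$ is just $G$ with a labeling overlaid, hence an $A \cong^{\pi}_f B$ with $A \in LabG$, $B \in LabH$ forces $G \cong_f H$.

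**Main obstacle.** The forward direction is where the real content lies, and the delicate point is the canonicity of the skeleton and of the clique/star/prime component labels: one must argue that an isomorphism $f : G \to H$ genuinely induces an isomorphism of skeletons that respects the "special edge", "special center", and "component type" attributes used by rules $T_1$, $B_3$, $B_4$. This rests on the uniqueness part of Theorem~\ref{th1} together with the fact that split decomposition is defined purely in graph-theoretic terms, so it commutes with isomorphism; I would make this explicit rather than treat it as folklore. A secondary bookkeeping point is to handle the edge cases in rule $B_2$ (and the possibly-trivial splits in $T_1$) so that the induced map $g$ is well-defined even when two different witnesses in $G$ happen to produce the same labeled graph — but since we only need a bijection $g$ with the stated property, choosing a canonical representative per labeled graph suffices.
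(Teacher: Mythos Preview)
Your proposal is correct and follows essentially the same route as the paper's proof: the direction $LabG \equiv LabH \Rightarrow G \cong_f H$ is immediate since every element of $LabG$ is just $G$ with a labeling, and the direction $G \cong_f H \Rightarrow LabG \equiv LabH$ is handled by a case analysis over the five generation rules $B_1,B_2,T_1,B_3,B_4$, invoking Theorem~\ref{th1} (uniqueness of the skeleton) for the split-related rules. Your discussion is in fact slightly more careful than the paper's, in that you flag the need to check that $g$ is a genuine bijection (the paper's proof simply defines $g$ witness-by-witness without commenting on well-definedness or invertibility).
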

\begin{proof}
It is easy to see that if $LabG \equiv LabH$ then $G \cong_f H$ from the definition. 
For the other direction, 
%Given graph $G$ the above algorithm of [corneil] finds $LabG$ 
%a set of bilabelings and trilabelings of $G$ such that $cwd(G) \leq 3$ then there is $A \in LabG$ 
%such that $cwd(A) \leq 3$. 
given two graphs $G$ and $H$ isomorphic via $f$, 
we need to prove that 
there is a bijection $g: LabG \rightarrow LabH$ such that for all $A \in LabG$, there is  
an isomorphism $f : V(A) \rightarrow V(g(A))$ and a bijection $\pi : lab(A) \rightarrow lab(g(A))$ 
such that $A \cong^{\pi}_f g(A)$.

%for all $A \in LabG$ there is a $B \in LabH$
%such that $A \cong_f^{\pi} B$ and $|LabG|=|LabH|$. Since $G \cong_f H$, by the construction we have $|LabG|=|LabH|$.  
The proof is divided into five cases based on how 
bilabelings and trilabelings  are generated by CHLRR algorithm described in Section~\ref{sec5}.

\begin{enumerate}
 \item Let $A \in LabG$ be generated at $B_1$ in CHLRR algorithm. Therefore, $A$ has bilabeling $\{v\}$.
  Since $G \cong_f H$, there is a graph $B \in LabH$ which has bilabeling $\{f(v)\}$. 
  Define $g(A)=B$ and a bijection $\pi : lab(A) \rightarrow lab(B)$ such that if $lab(v)=i$ then $\pi(i) = lab(f(v))$ so that $A \cong_f^{\pi} B$.
 \item Let $A \in LabG$ be generated at $B_2$. Thus, $A$ has bilabeling $P= \{x \in N(v) | N[x] \subseteq N[v] \}$.
 As $G \cong_f H$, there is a graph $B \in LabH$ with bilabeling  $f(P)= \{f(x) \in N(f(v)) | N[f(x)] \subseteq N[f(v)] \}$.
 Define $g(A)=B$ and a bijection $\pi : lab(A) \rightarrow lab(B)$ such that if $lab(P)=i$ then $\pi(i) = lab(f(P))$
 so that $A \cong_f^{\pi} B$.
  %generate bilabeling $f(v)$ and add it $LabH$.
 %\item For each vertex $v$ of $G$, we generate bilabeling $\{x \in N(v) | N[x] \nsubseteq N[v] \}$ and add it to $LabG$.
 \item  Let $A \in LabG$ be generated at $T_1$ for a special edge $s$ in the skeleton of $G$ with trilabeling $\tilde{X}$, 
 $\tilde{Y}$, $V(G) \setminus (\tilde{X} \cup \tilde{Y})$.
 As $G \cong_f H$ and the skeleton of graph is unique (from Theorem~\ref{th1}), we can find a $B \in LabH$ which is generated for the 
 special edge $f(s)$ in skeleton of $H$ which corresponds to trilabeling $f(\tilde{X})$, 
 $f(\tilde{Y})$, $V(H) \setminus (f(\tilde{X} \cup \tilde{Y}))$.
 Define $g(A)=B$ and a bijection $\pi : lab(A) \rightarrow lab(B)$ such that if $lab(\tilde{X})=i_1$ then $\pi(i_1) = lab(f(\tilde{X}) )$,
 if $lab(\tilde{Y})=i_2$ then $\pi(i_2) = lab(f(\tilde{Y}) )$, and  if $lab(V(G) \setminus (\tilde{X} \cup \tilde{Y}))=i_3$ 
 then $\pi(i_3) = lab(V(H) \setminus (f(\tilde{X} \cup \tilde{Y})))$
so that $A \cong_f^{\pi} B$.
 \item Let $A \in LabG$ be generated at $B_3$ for a clique component $C$ with bilabeling $C$. 
 As $G \cong_f H$, there is a $B \in LabH$ which is  
 generated for a clique component $f(C)$ with bilabeling $f(C)$. 
 Define $g(A)=B$ and a bijection $\pi : lab(A) \rightarrow lab(B)$ such that if $lab(C)=i$ then 
 $\pi(i) = lab(f(C))$ so that $A \cong_f^{\pi} B$.
 \item Let $A \in LabG$ be generated at $B_4$ for a star component $S$ with bilabeling $\{c\}$, where 
 $c$ is a special center of $S$. 
 As $G \cong_f H$, there is a graph $B \in LabH$ which is  
 generated for a star component $f(S)$ with bilabeling $f(c)$, where 
 $f(c)$ is a special center of $f(S)$. 
 Define $g(A)=B$ and a bijection $\pi : lab(A) \rightarrow lab(B)$ such that if $lab(c)=i$ then 
 $\pi(i) = lab(f(c))$ so that $A \cong_f^{\pi} B$. \qed
 \end{enumerate}

\end{proof}

\section{Generating Structurally Isomorphic Parse Trees} \label{sec7}
In this section we prove that the modified CHLRR algorithm generates structurally isomorphic parse trees on two isomorphic input graphs. 
To prove that we also show that the supporting subroutines do the same.

\begin{algorithm}[H]
{\scriptsize 
\DontPrintSemicolon
\caption{Finding parse trees for labeled graphs of clique-width at most three in $LabG$}
\label{algoparse}
%\SetKwComment{Comment}{start}{end}
%\DontPrintSemicolon
\KwIn{$LabG$ a set of bilabelings and trilabelings of $G$}
\KwOut{ $parseG=\{T_A \hspace{0.1cm}|\hspace{0.1cm} T_A$ is a parse tree of graph $A$ in LabG  of clique-width 
at most three $\}$ }
\Begin{
$parseG:=\emptyset$\;
\For {{\bf all} $A \in LabG$}{
$A.parse$-$tree:=null$\;
$A.parse$-$tree = Decompose(A)$\;
\If{$A.parse$-$tree \neq null$}
{
%$G.parse$-$tree:=Tree$\; %\text{comment} %\atcp{a comment}  \;
Add $A.parse$-$tree$ to $parseG$\;
}

}
return ($parseG$)
}
}
\end{algorithm}
The function $Decompose(P)$ in Algorithm~\ref{algoparse} finds parse tree of $P$ if $cwd(P) \leq 3$ and it is described in following Section and Appendix. \\

% \begin{lemma}\label{de}
% Let $A$ in $LabG$ and $B$ in $LabH$. If $A \cong_f^{\pi} B$ for some $f$ and $\pi$ then parse trees generated from $Decompose$ 
% function (Algorithm~\ref{algodecompose}) for input graphs $A$ and $B$ are structurally isomorphic. 
% More specifically, $Decompose(A) \cong_f^{\pi} Decompose(B)$. 
% \end{lemma}
% \begin{proof}
%  Follows from Lemma~\ref{TI} and Lemma~\ref{BI}. \qed
% \end{proof}

\begin{algorithm}
{\scriptsize 
\DontPrintSemicolon
\caption{Function Decompose \cite{corneil2012polynomial}}
\label{algodecompose}
%\dontprintsemicolon % Some LaTeX compilers require you to use \DontPrintSemicolon instead
\KwIn{A bi or trilabled $l$-prime connected graph $P$}
\KwOut{A parse tree of $P$ or null parse tree if $cwd(P)$ $>$ 3}
\Begin{
$parse$-$tree:=$ a trivial parse tree with $P$ as the unique leaf\;
\tcc{$parse$-$tree$ may contain connected components as leafs but as the algorithm proceeds this components will be decomposed to 
finally obtain the parse tree}
$Leaves := \{P\} $ \tcc*{$Leaves$ contains pointer to $P$} 
\While{$Leaves \neq \emptyset$}{
flag := true,  tree := null\;
Extract $\Gamma$ from $Leaves$\;
\If{ $\Gamma$ has no more than three vertices}{
Find a canonical parse tree, $tree$ \;
Replace $\Gamma$ by $tree$ in $parse$-$tree$\;
}
\eIf{$\Gamma$ is trilabled}{
[flag, tree] = Decompose-leaf-TI($\Gamma$)\;
Add the leafs of $tree$ to $Leaves$\;
Replace $\Gamma$ by $tree$ in $parse$-$tree$\;
}
{
[flag, tree] = Decompose-leaf-BI($\Gamma$)\;
Add the leafs of $tree$ to $Leaves$\;
Replace $\Gamma$ by $tree$ in $parse$-$tree$\;
}
\If{flag is false}{
$parse$-$tree$ := null\;
return ($parse$-$tree$)\;
}
}
return ($parse$-$tree$)\;
}
}
\end{algorithm}
%\vspace{-0.7cm}
\subsection{Decomposing Trilabeled Graphs}
The function $Decompose$-$leaf$-$TI$ (Algorithm~\ref{trilabel}) decomposes trilabeled graph from $LabG$. It can be check that this 
function is always called with inputs coming from $LabG$. In  other words it is only called in the first level of the recursion.

\begin{lemma} \label{TI}
Let $A$ in $LabG$ and $B$ in $LabH$ be trilabeled and $l$-prime connected graphs. 
If $A \cong _f^{\pi} B$ for some $f$ and $\pi$ then Algorithm~\ref{trilabel} generates 
 top operations of parse trees for $A$ and $B$ such that 
 $\pi$ $\in$ $\ISO(Q_a,Q_b)$ with %bijection $\gamma : [r] \rightarrow [r]$
$A_a \cong^{\pi}_f B_b$, %for each $i$,  $\Gamma_{i} \cong^{\pi_i} \Gamma'_{{\gamma(i)}}$ and $\pi_i/color=\pi|_{color(Q(\Gamma_i))}$, 
where $A_a$ and $B_b$ are the graphs described in Algorithm~\ref{trilabel}.
\end{lemma}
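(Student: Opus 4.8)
The plan is to show that $Decompose$-$leaf$-$TI$ (Algorithm~\ref{trilabel}) is \emph{equivariant} under labeled isomorphism: run on $A$ and on its image under $(f,\pi)$ it selects the same case and produces the same top operations once the labels are translated by $\pi$, with the residual graph below the top operations matched by $f$.

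First I would recall what the algorithm does in this case (Section~\ref{sec5}). Since $A$ is connected, $l$-prime and trilabeled it corresponds to a split of $G$, and there are two possibilities. Either $A$ has a proper split: there is a pair $a\neq b$ in $lab(A)$ with $E_{ab}$ complete and $A\setminus E_{ab}$ disconnected, and the top operations are $\eta_{a,b}$ above a $\oplus$ whose children are the components of $A\setminus E_{ab}$. Or $A$ has a universal vertex $v$ of label $a$ (trivial split), and the top operations are $\eta_{a,b},\eta_{a,c}$ above a $\oplus$ whose children are the components of $A\setminus(E_{ab}\cup E_{ac})$, where $\{a,b,c\}=lab(A)$. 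In either case let $S$ be the removed edge set, let $A_a=A\setminus S$ be the residual graph appearing below the top $\eta$-operations, and let $Q_a$ be the colored quotient graph on vertex set $lab(A)$ with each label colored by itself and carrying the edge $\{a,b\}$ (and, in the second case, also $\{a,c\}$).

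The heart of the argument is that every predicate the algorithm branches on is preserved by $(f,\pi)$. Because $A\cong^{\pi}_f B$, the map $f$ carries the label-$i$ vertices of $A$ bijectively onto the label-$\pi(i)$ vertices of $B$, so $f(E_{ij})=E_{\pi(i)\pi(j)}$ and $f$ is a graph isomorphism $A\setminus T\to B\setminus f(T)$ for every edge set $T$. Hence: $E_{ab}$ is complete in $A$ iff $E_{\pi(a)\pi(b)}$ is complete in $B$; deleting $E_{ab}$ disconnects $A$ iff deleting $E_{\pi(a)\pi(b)}$ disconnects $B$; and $v$ is a universal vertex of label $a$ in $A$ iff $f(v)$ is a universal vertex of label $\pi(a)$ in $B$. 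Finally, by Theorem~\ref{th1} the skeleton of a connected graph is unique, so the special edge from which $A$ (equivalently $B$) was generated, and the split it defines, are canonical; using this, together with the uniqueness of the universal vertex in the trivial-split subcase, we fix the tie-breaking in $Decompose$-$leaf$-$TI$ so that the branch taken and the edge set removed depend only on the labeled-isomorphism type. Therefore the run on $B$ lands in the same subcase, removes exactly $f(S)$, and outputs top operations $\eta_{\pi(a),\pi(b)}$ (and, in the second case, $\eta_{\pi(a),\pi(c)}$) above a $\oplus$.

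From this the two required conclusions are immediate. The quotient graph $Q_b$ has vertex set $lab(B)$, each label colored by itself, and exactly the edge $\{\pi(a),\pi(b)\}$ (and $\{\pi(a),\pi(c)\}$), so $\pi:lab(A)\to lab(B)$ is edge-preserving and color-consistent from $Q_a$ to $Q_b$, i.e.\ $\pi\in\ISO(Q_a,Q_b)$ with $\pi/color=\pi$. And $B_b=B\setminus f(S)$ has vertex set $V(B)$, so the restriction of $f$ to $V(A)=V(A_a)$ is an isomorphism $A_a\to B_b$ sending every label $i$ to $\pi(i)$; that is $A_a\cong^{\pi}_f B_b$. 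The step I expect to be the main obstacle is exactly the canonicity of the choices: the unmodified CHLRR algorithm may pick a proper split, or an incident join to undo, in a way that depends on the internal naming of vertices and labels rather than on the isomorphism type, so one must verify that the \emph{modified} algorithm resolves every branch of $Decompose$-$leaf$-$TI$ — including the $\le 3$-vertex base case, where a fixed canonical parse tree is substituted — by a rule that commutes with relabeling by $\pi$ and with applying $f$; the uniqueness of the skeleton (Theorem~\ref{th1}) is what makes this possible, and checking that each branch respects it is where the real work lies.
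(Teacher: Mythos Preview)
Your proposal is correct and follows essentially the same two-case argument as the paper: the universal-vertex (trivial split) case and the complete-$E_{ab}$ (proper split) case, showing in each that the branching predicate and the removed edge set transfer along $(f,\pi)$, whence $\pi\in\ISO(Q_a,Q_b)$ and $A_a\cong^{\pi}_f B_b$. Your appeal to Theorem~\ref{th1} to pin down tie-breaking is additional caution the paper does not invoke --- it simply argues the predicates are preserved and leaves implicit that the trilabelings arising from $LabG$ make the relevant choice effectively unique --- but the substance of the argument is the same.
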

\begin{proof}
Let $A$ and $B$ are trilabeled with $l_1$, $l_2$, $l_3$ and $l'_1$, $l'_2$, $l'_3$ respectively.  
If $A$ has a trivial split (see Figure~\ref{TS}) then it has a universal vertex $x$ of some label $l_1$.
Then the algorithm removes the edges $E_{l_1l_2}$, $E_{l_1l_3}$ from $A$ 
 and gives a decomposition with top operations $\eta_{l_1,l_2}$ and $\eta_{l_1,l_3}$ above a $\oplus$ operation whose children are
$x$ and connected components $A_{a_1}, \cdots ,A_{a_k}$ ($A_a = x \oplus A_{a_1}\oplus \cdots \oplus A_{a_k}$). 
If $A \cong_f^{\pi} B$, then there is a universal vertex $y$ in $B$ of label $\l'_1$ such that $f(x)=y$ and $\pi(l_1)=l'_1$.
To decompose $B$, the algorithm removes the edges $E_{l'_1l'_2}$, 
$E_{l'_1l'_3}$ from $B$ to get the decomposition with top operations $\eta_{l'_1l'_2}$ and $\eta_{l'_1l'_3}$ above a 
$\oplus$ operation whose children are $y$ and connected components $B_{b_1}, \cdots ,B_{b_k}$ ($B_b = y \oplus B_{b_1}\oplus \cdots \oplus B_{b_k}$).
In fact $B_{b_1}, \cdots ,B_{b_k}$ are images of $A_{a_1}, \cdots ,A_{a_k}$ under $f$ in some order.
The quotient graphs $Q_a$ and $Q_b$ have three vertices corresponding to top two consecutive $\eta$ operations. 
If $A \cong_{f}^{\pi} B$ the quotient graphs are isomorphic via $\pi$
%there is a $\pi_i$ $\in$ $\ISO(Q_g,Q_h)$ such that  
and $A_a \cong^{\pi}_fB_b$. 
\begin{algorithm}
{\scriptsize 
\DontPrintSemicolon
\caption{Function Decompose-leaf-TI \cite{corneil2012polynomial}}
\label{trilabel}
\KwIn{A trilabeled, $l$-prime and connected graph $G$}
\KwOut{true with top operations of parse tree or false if $cwd(G)$ $>$ 3}
\Begin{
tree := null\; 
\If{$G$ has a universal vertex $x$ of label $l_1$}  
{ 
Let $G_g =x \oplus^k_{i=1}G_{g_i}$, where $x,G_{g_i}$'s are connected components of $G \setminus \{E_{l_1l_2},E_{l_1l_3}\}$ \tcc*{$l_1,l_2,l_3$ are labels of $G$} 
$tree= \eta_{l_1,l_2} \eta_{l_1,l_3}(x \oplus^k_{i=1}G_{g_i})$ \;
%add top operations $\eta_{l_1,l_2}$, $\eta_{l_1,l_3}$ and $\oplus$ to tree\; 
%update $Leaves$ with $\Gamma \setminus \{E_{l_1l_2},E_{l_1l_3}\}$\;
return (true, tree)\;
}
\If{$G$ has two labels $l_1,l_2$ such that $E_{l_1l_2}$ is complete}
{Let $G_g =\oplus^k_{i=1}G_{g_i}$, where $G_{g_i}$'s are connected components of $G \setminus \{E_{l_1l_2}\}$\; 
$tree= \eta_{l_1,l_2}(\oplus^k_{i=1}G_{g_i})$ \;
%add top operations $\eta_{l_1,l_2}$ and $\oplus$ to tree\;
%update $Leaves$ with $\Gamma \setminus \{E_{l_1l_2}\}$\;
return (true, tree)\;
}
return (false, tree) (i.e., $cwd(G)$ $>$ 3)
}
}
\end{algorithm}

\begin{figure}[!ht]
\centering
\includegraphics[trim=-1cm 21cm 5cm 3cm, clip=true, scale=0.8]{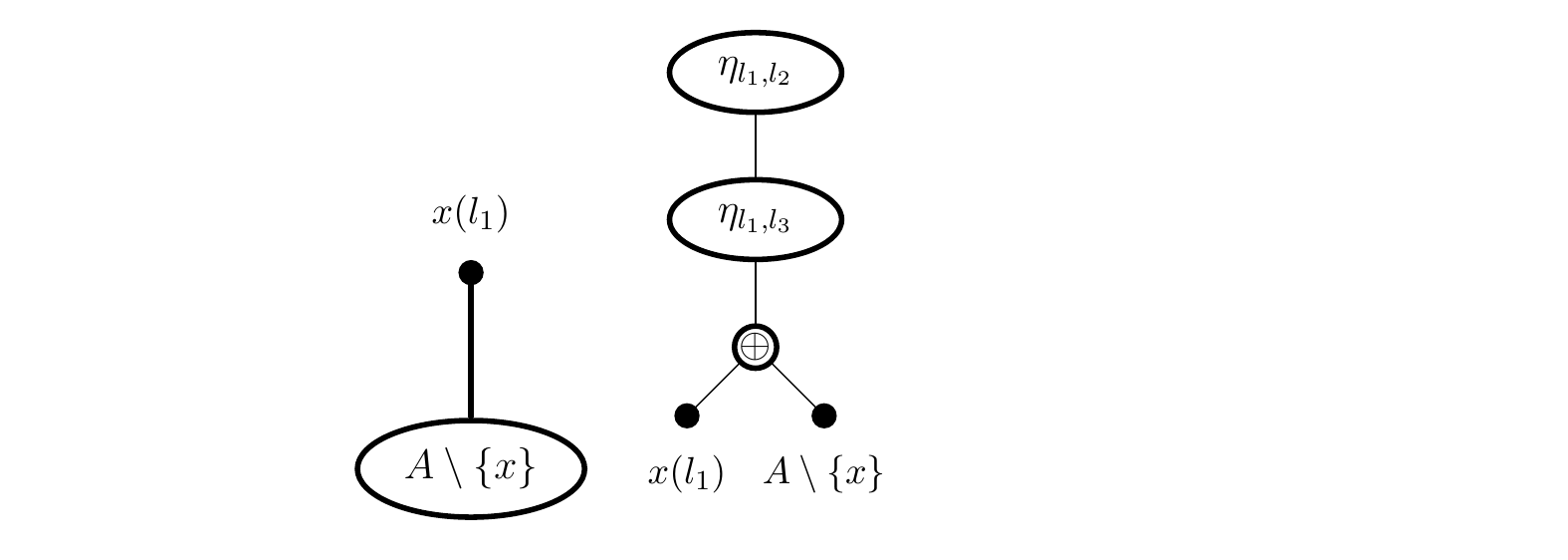}
\caption{ \hspace{0.1cm} Trivial split: $x$ is a universal vertex of label $l_1$ in a trilabeled graph $A$.
We use the bold edge between two sets of vertices to indicate that all
edges are present between two vertex sets.}
\label{TS} 
\end{figure}

%--------------------------------------
\begin{figure}[!ht]
\centering
\includegraphics[trim=5.2cm 22cm 3cm 3cm, clip=true, scale=0.8]{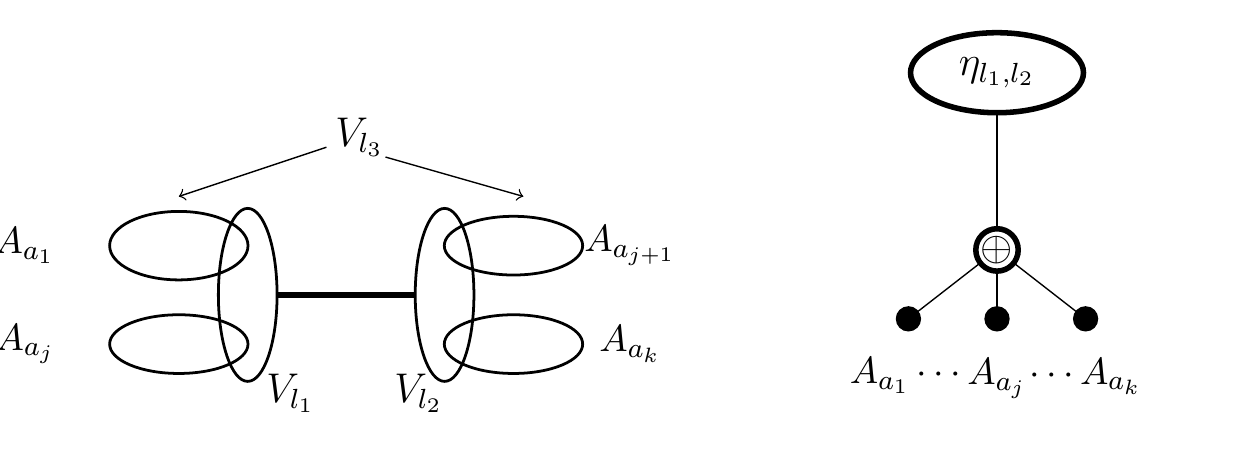}
\caption{ \hspace{0.1cm} Nontrivial split: $V_i$ represents set of vertices in $A$ that have label $i$.  }
\label{NS} 
\end{figure}

\par If $A$  corresponds to a  nontrivial split (see Figure~\ref{NS}) then there are two labels $l_1,$ $l_2$ such that $E_{l_1l_2}$ is 
complete. We get a decomposition with $\eta_{l_1,l_2}$ 
operation above a $\oplus$ operation whose children are
connected components $A_{a_1}, \cdots ,A_{a_k}$ ($A_a =  A_{a_1}\oplus \cdots \oplus A_{a_k}$) of $A$ 
after the  $E_{l_1l_2}$ edges are removed.  
If $A \cong_f^{\pi}B$, then there exists a nontrivial split in $B$ and two labels $l'_1,$ $l'_2$ such that $E_{l'_1l'_2}$ is 
complete and $\pi\{l_1,l_2\}=\{l'_1,l'_2\}$, $\pi(l_3)=l'_3$. To decompose $B$, the algorithm removes the edges $E_{l'_1l'_2}$, 
to get the decomposition with top operations $\eta_{l'_1,l'_2}$  
above a $\oplus$ operation whose children are connected components $B_{b_1}, \cdots ,B_{b_k}$ obtained 
from $B$ after $E_{l'_1l'_2}$ edges are removed.
The quotient graphs $Q_a$ and $Q_b$ build from the top 
operations are isomorphic via $\pi$
and $A_a \cong^{\pi}_fB_b$. \qed
\end{proof}

\begin{algorithm}
{\scriptsize 
\DontPrintSemicolon
\caption{Function $Decompose$-$leaf$-$BI$ (cf., \cite{corneil2012polynomial})}
\label{algodecompose-leaf-BI}
%\dontprintsemicolon % Some LaTeX compilers require you to use \DontPrintSemicolon instead
\KwIn{A bilabeled, $l$-prime and connected graph $G$}
\KwOut{true with top operations of parse tree or false if $cwd(G)$ $>$ 3}
\Begin{
tree := null\; 
\If {$G \in$ PC1}{
%Compute number of universal vertices in $G$\;
\If {$G$ has a universal vertex (say $x$)}{
Let $G_g =x \oplus^k_{i=1}G_{g_i}$, where $x, G_{g_i}$'s are connected components of $G \setminus \{E_{l_3l_2},E_{l_3l_1}\}$ \tcc*{$l_1,l_2$ are labels of $G$} 
$tree= \rho_{l_3\rightarrow l}\eta_{l_3,l_2}\eta_{l_3,l_1}(x \oplus^k_{i=1}G_{g_i})$\;
%add top operations $\rho_{l_3\rightarrow l}$, $\eta_{l_3,l_2}$, $\eta_{l_3,l_1}$ and $\oplus$ to tree\;
%update $Leaves$ with $\Gamma \setminus \{v\}$\;
return (true, tree)\;
}
%\If {$G$ has two universal vertices (say $v_1$ and $v_2$)}{
%Let $G_g =v_1\oplus v_2 \oplus^k_{i=1}G_{g_i}$, where $v_1,v_2,G_{g_i}$'s are connected components of 
%$G \setminus \{E_{l_4,l_2},E_{l_4,l_1},E_{l_3l_2},E_{l_3l_1}\}$\; 
%$tree= \rho_{l_4\rightarrow l_1}\eta_{l_4,l_2}\eta_{l_4,l_1} \rho_{l_3\rightarrow l_2}\eta_{l_3,l_2}
%\eta_{l_3,l_1}(v_1\oplus v_2\oplus^k_{i=1}G_{g_i})$\;
%return (true, tree)\;
%}
}

\If {$G \in$ PC2}{
Compute a set $S$ number of vertices in $G$ which are universal to one label class but not adjacent to other label 
class\; %\tcc*{$l' \in \{l_1,l_2\}$ and $\overline{l'} \in \{l_1,l_2\}\setminus l'$}
\If {$|S|$ equal to 1(say $x$)}{
Let $G_g =x \oplus^k_{i=1}G_{g_i}$, where $x, G_{g_i}$'s are connected components of $G \setminus \{E_{l_3l'}\}$\; 
$tree= \rho_{l_3\rightarrow l}\eta_{l_3,l'}(x \oplus^k_{i=1}G_{g_i})$\;
%add top operations $\rho_{l_3\rightarrow l}$, $\eta_{l_3,l'}$ and $\oplus$ to tree\;
%update $Leaves$ with $\Gamma \setminus \{v\}$\;
return (true, tree)\;
}
\If {$|S|$ equal to 2 (say $x_1$ and $x_2$)}{
Let $G_g =x_1 \oplus x_2 \oplus^k_{i=1}G_{g_i}$, where $x_1, x_2, G_{g_i}$'s are connected components of $G \setminus \{E_{l_3l'},E_{l_4,\overline{l'}}\}$\; 
$tree= \rho_{l_4\rightarrow \overline{l'}}\eta_{l_4,\overline{l'}}\rho_{l_3 \rightarrow l'}\eta_{l_3,l'}(x_1\oplus x_2 \oplus^k_{i=1}G_{g_i})$\;
%add top operations $\rho_{l_4\rightarrow \overline{l'}}$, $\eta_{l_4,\overline{l'}},\rho_{l_3 \rightarrow l}$, $\eta_{l_3,l'}$ and
%$\oplus$ to tree\;
%update $Leaves$ with $\Gamma \setminus \{v_1,v_2\}$\;
return (true, tree)\;
}
}
\If {$G \in$ PC3}{
Let $G_g =x\oplus y \oplus^k_{i=1}G_{g_i}$, where $x,y,G_{g_i}$'s are connected components of $G \setminus \{E_{l_3l},E_{l_4,\bar{l}}\}$\; 
$tree= \rho_{l_3\rightarrow l}\eta_{l_3,l}(x \oplus \eta_{l_3,\bar{l}}(y \oplus^k_{i=1}G_{g_i})) $\;
%add top operations $\rho_{l_3\rightarrow l}$, $\eta_{l_3,l}$ and $\oplus$ to tree
%with one leaf $v$ and other one is $\eta_{l_3,\bar{l}}$ and $\oplus$ with $u$ and $\Gamma \setminus \{v,u\}$\;
%update $Leaves$ with $\Gamma \setminus \{v,u\}$\;
return (true, tree)\;
}

Compute the coconnected components of $V_{l_1}$ and $V_{l_2}$ and test membership of $G$ in 
$\mathcal{U}_{l_1},\mathcal{U}_{l_2}, \mathcal{\overline{D}}_{l_1}$ and $\mathcal{\overline{D}}_{l_2}$\;
{\bf if} {$G \in \mathcal{U}_{l_1}$ } {\bf then} {return(Decompose-leaf-$\mathcal{U}_{l_1}(G)$)\;}
{\bf if} {$G \in \mathcal{U}_{l_2}$ } {\bf then} {return(Decompose-leaf-$\mathcal{U}_{l_2}(G)$)\;}
{\bf if} {$G \in \mathcal{\overline{D}}_{l_1}$ } {\bf then} {return(Decompose-leaf-$\mathcal{\overline{D}}_{l_1}(G)$)\;}
{\bf if} {$G \in \mathcal{\overline{D}}_{l_2}$ } {\bf then} {return(Decompose-leaf-$\mathcal{\overline{D}}_{l_2}(G)$)\;}
{\bf if} {$G \notin \mathcal{U}_{l_1},\mathcal{U}_{l_2}, \mathcal{\overline{D}}_{l_1}$ and $\mathcal{\overline{D}}_{l_2}$ } {\bf then}
{return(false,tree) (i.e., $cwd(G)$ $>$ 3)\;}
}
}
\end{algorithm}

\subsection{Decomposing Bilabeled Graphs}
Our modification to the CHLRR algorithm is in $Decompose$-$leaf$-$BI$ (Algorithm~\ref{algodecompose-leaf-BI}), where we use 
four labels instead of three to find structural isomorphic parse trees.
If $G$ is a bilabeled, $l$-prime and connected graph of clique-width at most three, then either $ G \in PCi$ where $i\in \{1,2,3\}$ or 
$G \in \mathcal{U}_i$, $\mathcal{\overline{D}}_i$ where $i \in \{1,2\}$ (See Proposition 29 in \cite{corneil2012polynomial}). From here on wards we assume that $G$ and $H$ are bilabeled with $l_1$, $l_2$ and
$l'_1$, $l'_2$ respectively.  
\begin{lemma} \label{BI}
Let $G$ and $H$ be bilabeled, $l$-prime and connected graphs. If $G \cong _f^{\pi} H$ for some $f$ and $\pi$ 
then Algorithm~\ref{algodecompose-leaf-BI} 
 generates top operations of parse trees $G$ and $H$
such that there is a $\pi_i$ $\in$ $\ISO(Q_g,Q_h)$ with
$G_g \cong^{\pi_i}_f H_h$ and $\pi_i/color = \pi|_{color(Q_g)}$,  
where $G_g$ and $H_h$ are the graphs described in  Algorithm~\ref{algodecompose-leaf-BI}.
\end{lemma}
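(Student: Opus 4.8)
The plan is to mirror the case structure of Algorithm~\ref{algodecompose-leaf-BI} and to argue that every branching decision the algorithm makes on $G$ is governed by an isomorphism invariant, so that on $H$ the algorithm takes the ``$\pi$-translated'' branch. Concretely, membership of a bilabeled $l$-prime connected graph in $PC1$, $PC2$, $PC3$, $\mathcal{U}_{l_1}$, $\mathcal{U}_{l_2}$, $\mathcal{\overline{D}}_{l_1}$ or $\mathcal{\overline{D}}_{l_2}$ is defined purely in terms of adjacencies, the label classes $V_{l_1},V_{l_2}$, (co)connected components and universality, all of which are preserved by the label-respecting isomorphism $f$ together with $\pi$ on the two labels. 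Since these cases are mutually exclusive, are tested in a fixed order, and (by the last lemma of Section~\ref{sec5}) $\mathcal{E}$ is partitioned by $\mathcal{U}_{l_1},\mathcal{U}_{l_2},\mathcal{\overline{D}}_{l_1},\mathcal{\overline{D}}_{l_2}$ once the $PC$ cases are excluded, $G$ and $H$ enter the \emph{same} branch. In each branch the only new labels introduced are $l_3$ (and in one sub-case also $l_4$); I would extend $\pi$ to these, setting $l_3\mapsto l'_3$ and $l_4\mapsto l'_4$, and call the extension $\pi_i$. What remains per branch is (a) to check that the set of vertices recoloured and the set of edges ($E_{l_3l}$, etc.) deleted are the $f$-images of the corresponding objects in $H$, so that $f$ carries the connected components of the residual graph of $G$ onto those of $H$; and (b) to observe that the top operations are a fixed sequence of $\rho$/$\eta$ operations on labels, whence the colored quotient graphs $Q_g$ and $Q_h$ are isomorphic via $\pi_i$ and $\pi_i/color=\pi|_{color(Q_g)}$, since the colours occurring in $Q_g$ are exactly the labels $l_1,l_2$ (and $l_3$), which $\pi_i$ permutes consistently with $\pi$.

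For $PC1$, $PC3$, and $PC2$ with $|S|=1$ this is immediate: the distinguished vertices ($x$ in $PC1$ and $PC2$; the ordered pair $x,y$ in $PC3$) are \emph{uniquely} characterised by isomorphism-invariant conditions (being universal; being universal to one label class but not the other; the nested ``$y$ universal to all but $x$'' condition of $PC3$), so $f$ must send them to the vertices the algorithm picks out in $H$, and the cardinality $|S|$ is preserved. The deleted edge sets are then determined, $f$ matches the residual components, and (a)--(b) follow routinely.

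The main obstacle is $PC2$ with $|S|=2$, and this is exactly where the modification of the CHLRR algorithm (the fourth label $l_4$) earns its keep. The two vertices $x_1,x_2\in S$ need not be distinguishable by $f$: an isomorphism may send $\{x_1,x_2\}$ to $\{x'_1,x'_2\}$ under either correspondence, and with the original three-label decomposition (relabelling both to $l_3$) the two resulting parse trees need not be structurally isomorphic. The fix is the decomposition $\rho_{l_4\rightarrow\overline{l'}}\eta_{l_4,\overline{l'}}\rho_{l_3\rightarrow l'}\eta_{l_3,l'}(x_1\oplus x_2\oplus\bigoplus_i G_{g_i})$, which assigns $x_1$ the fresh label $l_3$ and $x_2$ the fresh label $l_4$. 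The key point I would verify is that this top sequence is \emph{symmetric} under simultaneously transposing the two vertices $x_1,x_2$ and the two new labels $l_3,l_4$: deleting $E_{l_3l'}$ and $E_{l_4\overline{l'}}$ detaches both $x_1$ and $x_2$ as singleton components, and relabelling-then-joining rebuilds the graph no matter which of $x_1,x_2$ is called ``$l_3$''. Hence, whichever way $f$ matches $\{x_1,x_2\}$ to $\{x'_1,x'_2\}$, composing the extended $\pi$ with the transposition $(l_3\;l_4)$ when needed yields $\pi_i\in\ISO(Q_g,Q_h)$ with $G_g\cong^{\pi_i}_f H_h$; and since $l_3,l_4$ are both recoloured back to old labels along the top path, $\pi_i/color=\pi|_{color(Q_g)}$ still holds.

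Finally, for $G\in\mathcal{U}_{l_i}$ or $G\in\mathcal{\overline{D}}_{l_i}$, membership forces $H\in\mathcal{U}_{l'_i}$, respectively $H\in\mathcal{\overline{D}}_{l'_i}$, with $l'_i=\pi(l_i)$, because the defining conditions ($V_{l_i}^a\neq\emptyset$ with the relevant deletion disconnecting the graph; $\overline{V_{l_i}}$ being disconnected with its coconnected components behaving likewise) are isomorphism invariants. So the algorithm calls the matching sub-routine on both $G$ and $H$, and the claim reduces to the analogous statement for the sub-routines handling $\mathcal{U}_{l_i}$ and $\mathcal{\overline{D}}_{l_i}$, which I would isolate as separate lemmas in the appendix and prove by the same invariant-tracking argument (there the set relabelled to $l_3$ is $V_{l_i}^a$, respectively the union of a canonically chosen family of coconnected components, both of which $f$ preserves). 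Collecting the branches gives the lemma.
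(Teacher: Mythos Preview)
Your plan is essentially the paper's own proof: branch-by-branch through Algorithm~\ref{algodecompose-leaf-BI}, with the $\mathcal{U}_{l_i}$ and $\overline{\mathcal D}_{l_i}$ cases deferred to separate lemmas (the paper's Lemmas~\ref{lu1} and~\ref{lD1}). Your treatment of $PC2$ with $|S|=2$ via the $(l_3\;l_4)$ transposition is, if anything, more explicit than the paper, which simply declares $f(x_1)=y_1$, $f(x_2)=y_2$ without discussing how the algorithm's own naming of the two distinguished vertices in $H$ lines up with $f$. (A small caveat: in the paper's set-up $x_1$ and $x_2$ carry \emph{different} original labels $l,\bar l$, so $f$ does distinguish them; the ambiguity you are really resolving is which one the algorithm running on $H$ decides to call $l'_3$ versus $l'_4$.)

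The one place your sketch would actually stumble is the $\overline{\mathcal D}_{l_i}$ branch. The set relabelled to $l_3$ there is \emph{not} canonically chosen: when $V_{l_1}$ has exactly two coconnected components, or when a proper partition exists, Algorithm~\ref{D} picks one side \emph{at random} to receive the new label. So ``$f$ preserves the relabelled set'' is false in general. The paper's Lemma~\ref{lD1} handles this by allowing $\pi_i$ to send $l_1\mapsto l'_3$ and $l_3\mapsto l'_1$ when the algorithm's random choices on $G$ and $H$ disagree; since $l_3$ is recoloured back to $l_1$ along the top path, one still has $\pi_i/color=\pi|_{color(Q_g)}$. Your ``same invariant-tracking argument'' would need exactly this swap, so build it in when you write that lemma out. (A similar but milder issue arises in $PC1$ when $G$ has two universal vertices, one of each label; the paper notes the order of peeling them off is immaterial.)
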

\begin{proof}
 There are three simple cases that can be handled easily. These simple cases denoted as PC1, PC2 and PC3.
 The other cases $\mathcal{U}_{l_1}$ $(\mathcal{U}_{l_2})$ and $\mathcal{\overline{D}}_{l_1}$ $(\mathcal{\overline{D}}_{l_2})$ are 
 described in Algorithms~\ref{u} and \ref{D} in Appendix.
\par {\bf PC1:}
If $G \in PC1$ then $G$ has a universal vertex of label $l \in \{l_1,l_2\}$ (see Figure~\ref{PC1}).
% (with $l_1$ and $l_2$), $l$-prime 
%and has universal vertices then there are two ways to
%find the decomposition.
Note that in this case $G$ can not  have more than two universal vertices of same label, otherwise those universal vertices form an $l$-module.
\par 
%If $G$ has a single universal vertex $x$ of label $l \in \{l_1,l_2\}$,
To decompose $G$ the algorithm relabels vertex $x$ with $l_3$
and removes the edges $E_{l_3l_2}$ and $E_{l_3l_1}$. Then we get the decomposition with $\rho_{l_3\rightarrow l}$, $\eta_{l_3,l_2}$, $\eta_{l_3,l_1}$
above a $\oplus$ operation with children
$x$ and connected components $G_{g_1}, \cdots ,G_{g_k}$ ($G_g = x \oplus G_{g_1}\oplus \cdots \oplus G_{g_k}$). 
%with children $x$ and $G \setminus \{x\}$. As $G \cong_f^{\pi} H $  
If $G \cong_f^{\pi} H $ then algorithm finds the unique 
universal vertex $y$ in $H$ of label $l' \in \{l'_1, l'_2\}$ such that $f(x)=y$ and $\pi(l) = l'$. To decompose $H$ the algorithm  relabels the vertex $y$ with $l'_3$ and removes the edges 
$E_{l'_3 l'_2}$ and $E_{l'_3 l'_1}$ to get the decomposition with $\rho_{l'_3\rightarrow l'}$, $\eta_{l'_3,l'_2}$, $\eta_{l'_3,l'_1}$ 
above a $\oplus$ operation with children $y$ and connected components $H_{h_1}, \cdots ,H_{h_k}$ 
(these are images of $G_{g_1}, \cdots ,G_{g_k}$ under $f$ in some order). 
%children $f(x)$ and $H \setminus \{f(x)\}$. 
The quotient graphs $Q_g$ and $Q_h$ build from the top 
operations are isomorphic via $\pi_i$, where $\pi_i (l_3) = l_3'$ and $\pi_i(l)= \pi(l)$ if $l \in \{l_1,l_2\}$. It is clear that,
$G_g \cong^{\pi_i}_f H_h$ and $\pi_i/color = \pi|_{color(Q_g)}$.
%$G \setminus \{x\}$ and $H \setminus \{f(x)\}$.
%It is easy to see that top few operations given by above decompositions of $G$ and $H$ are 
%structurally isomorphic with bijection $\pi$. 
\begin{figure}[!ht]
\centering
\includegraphics[trim=3cm 19.8cm 12cm 3cm, clip=true, scale=0.8]{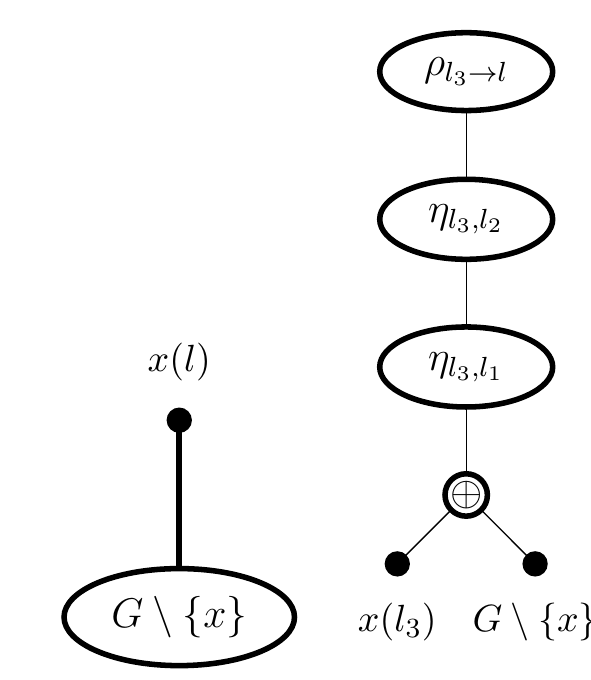}
\caption{\hspace{0.1cm} PC1: Decomposition of a bilabeled graph $G$ with a universal vertex $x$.  }
\label{PC1} 
\end{figure}
% \begin{figure}[h]
% \begin{subfigure}[t]{0.7\textwidth}
%  \includegraphics[trim=3cm 19cm 1cm 1cm, clip=true, scale=0.8]{PC-1a}
% \caption{PC1: $G$ has single universal vertex $x$.  }
% \label{fig:} 
% \end{subfigure}
% % \begin{subfigure}[t]{0.5\textwidth}
% %  \centering
% %   \includegraphics[trim=4cm 15cm 1cm 3cm, clip=true, scale=0.8]{PC-1b}
% % \caption{ \hspace{0.1cm} PC1: $G$ has two universal vertices $x$ and $y$ of different labels}
% % \label{fig:} 
% % \end{subfigure}
% \end{figure}

\par Suppose $G$ has two universal vertices $x_1$ and $x_2$ of label $l_1$ and $l_2$ respectively. 
In this case we apply above procedure consecutively two times first taking $x_1$ as a universal vertex in graph $G$, second taking $x_2$ as a 
universal vertex in graph $G \setminus \{x_1\}$. Note that the order in which we consider $x_1$ and $x_2$ does not effect the structure of the parse 
tree.

\par {\bf PC2:}
If $G \in PC2$ then $G$ can have one or two vertices of different labels which are universal to vertices of one 
label class but not to other label class.
Let $l_1$ and $l_2$ be the labels of $G$.
In this case the algorithm finds the decomposition of $G$ described as follows:
\par {\it Case-1:} Suppose $G$ has a single vertex $x$ of label $l$ (see Figure~\ref{PC21}) that is universal to all vertices of label $l' \in \{l_1,l_2\}$, but not 
adjacent to all vertices of label $\bar{l'} \in \{l_1,l_2\} \setminus l'$. 
To decompose $G$, the algorithm relabels $x$ with a label $l_3 \notin \{l_1,l_2\}$ and removes the edges $E_{l_3l'}$, which gives the decomposition with 
top operations $\rho_{l_3\rightarrow l}$, $\eta_{l_3,l'}$ above a $\oplus$ operation with children
$x$ and connected components $G_{g_1}, \cdots ,G_{g_k}$ ($G_g = x \oplus G_{g_1}\oplus \cdots \oplus G_{g_k}$).
If $G \cong_f^{\pi} H $, the algorithm finds a vertex $y$ in $H$ of label $m$ which is universal to all vertices of label $m'$ but not adjacent to 
all vertices of label $\bar{m}'$ such that $f(x)=y$ and $\pi(l)=m$.
%So there a vertex $x$ in $G$ and $\exists$ a universal vertex $f(x) \in H$ adjacent to all vertices of label $\pi(l') \in \{\pi(l_1),\pi(l_2)\}$,
%but is not adjacent to all vertices of label $\pi(\bar{l'})$ such that $x$ and $f(x)$ relabel in constructing parse trees.
To decompose $H$ the algorithm relabels $y$ with a label $l'_3 \notin \{l'_1, l'_2 \}$ and 
removes the edges $E_{l'_3 m}$, which gives the decomposition with top operations
$\rho_{l'_3 m}$, $\eta_{l'_3,m'}$ 
above a $\oplus$ operation whose children are $y$ and the connected components $H_{h_1}, \cdots ,H_{h_k}$ (these are images of 
$G_{g_1}, \cdots ,G_{g_k}$ under $f$ in some order). 
The quotient graphs $Q_g$ and $Q_h$ build from top 
operations are isomorphic via $\pi_i$, where $\pi_i (l) = m $, $\pi(\bar l)= \bar m$  and $\pi_i (l_3)=l'_3$. 
Moreover, $G_g \cong^{\pi_i}_f H_h$ and $\pi_i/color = \pi|_{color(Q_g)}$.
%It is easy to see that there is perfect label preserving isomorphic matching between components 
%$G \setminus \{x\}$ and $H \setminus \{f(x)\}$.
\begin{figure}[!ht]
\begin{subfigure}[t]{0.5\textwidth}
\centering
  \includegraphics[trim=3cm 20cm 11cm 1cm, clip=true, scale=0.8]{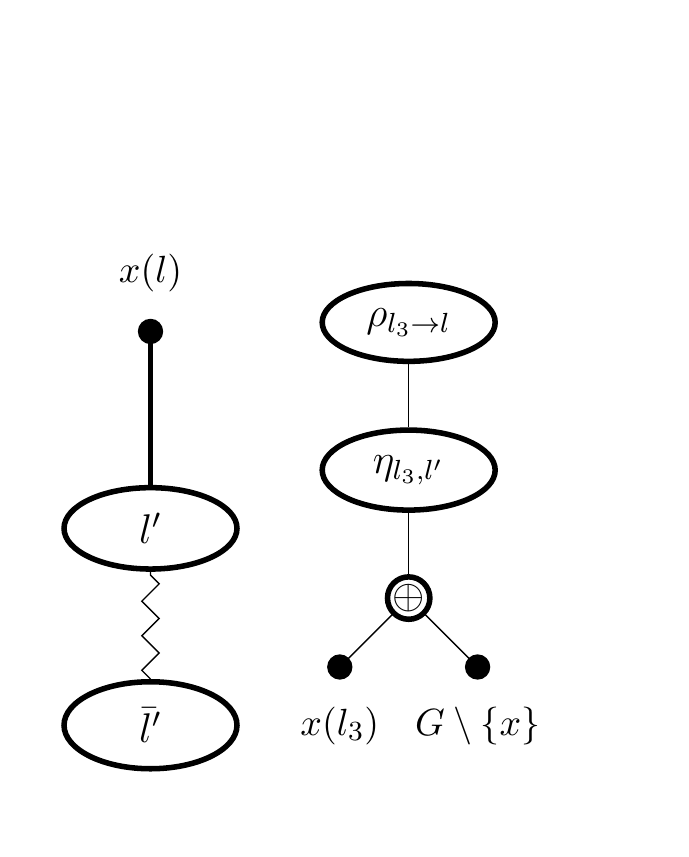}
\caption{PC2: Case 1 }
\label{PC21} 
\end{subfigure}
\begin{subfigure}[t]{0.5\textwidth}
 \centering
  \includegraphics[trim=3cm 18cm 9cm 3cm, clip=true, scale=0.8]{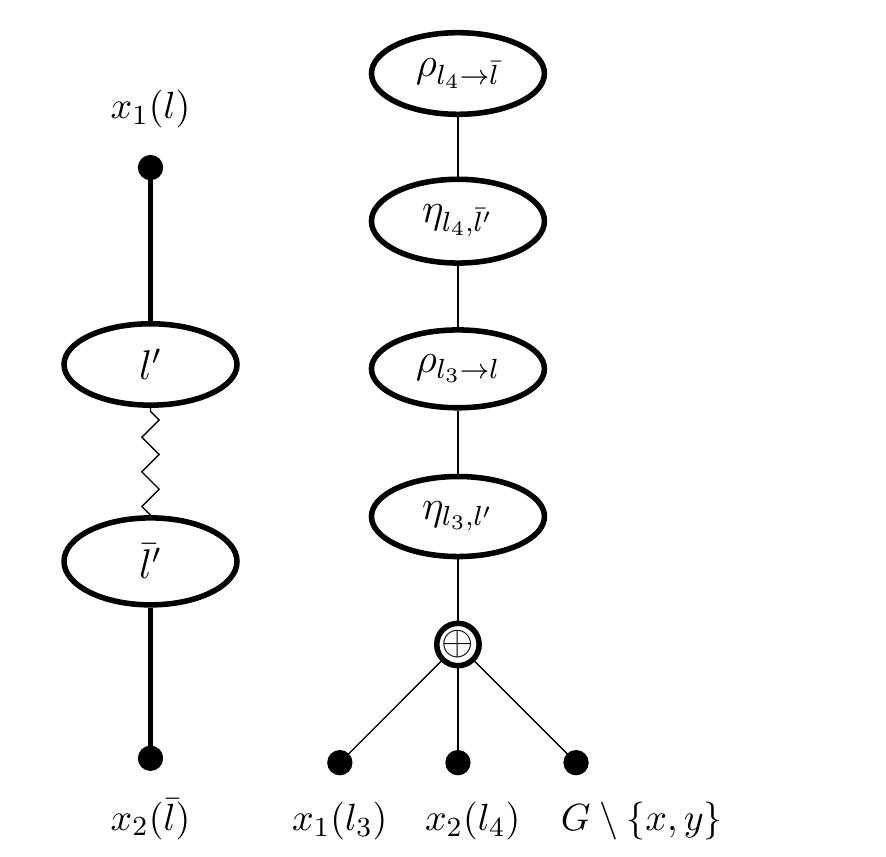}
\caption{PC2: Case 2}
\label{PC22} 
\end{subfigure}
\caption { \hspace{0.1cm} 
Decomposing a bilabeled graph $G$, having one or two vertices of different labels
which are universal to vertices of one label class but not to other label class.
We use the zigzag edge to indicate the presence of some edges
between the two sets of vertices}
\end{figure}

{\it Case-2:} Suppose $G$ has two vertices $x_1$ and $x_2$ of label $l \in \{l_1,l_2\}$ (see Figure~\ref{PC22}) and $\bar l \in \{l_1,l_2\} \setminus l$ such that $x_1$ ($x_2$) is universal to all vertices of label $l'\in \{l_1,l_2\}$ $(\bar l' )$, but not adjacent to all vertices of label ${\bar l'}$ $(l')$.
Then the algorithm relabels vertices $x_1$ and $x_2$ with $l_3$ and $l_4$ respectively and removes edges $E_{l_4,{\bar l'}}$, $E_{l_3,l'}$
to get the decomposition of $G$ with $\rho_{l_4\rightarrow {\bar l}}$, $\eta_{l_4,{\bar l'}},\rho_{l_3 \rightarrow l}$, $\eta_{l_3,l'}$
above a $\oplus$ operation with children
$x_1$, $x_2$  and connected components $G_{g_1}, \cdots ,G_{g_k}$ ($G_g = x_1 \oplus x_2 \oplus  G_{g_1}\oplus \cdots \oplus G_{g_k}$).
If $G \cong_f^{\pi} H $, the algorithm finds vertices $y_1$ and $y_2$ in $H$ of label $m \in \{l'_1,l'_2\}$ and 
$\bar m \in \{l'_1,l'_2\} \setminus m$ such that 
$y_1 (y_2)$ is universal to all vertices
of label $m'$ ($\bar m'$), but not adjacent to all the vertices of label $\bar m'$ $(m')$ and $f(x_1)=y_1$, $f(x_2)=y_2$.
Then algorithm relabels vertices $y_1$ and $y_2$ with $l'_3$ and $l'_4$ respectively and  removes edges $E_{l'_4,{\bar m'}}$,$E_{l'_3,m'}$
to get the decomposition of $H$ with top operations
$\rho_{l'_4\rightarrow {\bar m}}$, $\eta_{l'_4,{\bar m'}},\rho_{l'_3 \rightarrow m}$, $\eta_{l'_3, m'}$ 
above a $\oplus$ 
operation whose children are $y_1$, $y_2$ and connected components $H_{h_1}, \cdots ,H_{h_k}$ (these are images of 
$G_{g_1}, \cdots ,G_{g_k}$ under $f$ in some order). 
The quotient graphs $Q_g$ and $Q_h$ build from the top 
operations are isomorphic via $\pi_i$, where
$\pi_i (l') = \pi(l')= m'$, $\pi_i (\bar l') = \pi(\bar l')= \bar m'$, $\pi_i(l_3)=l'_3$ and $\pi_i(l_4)=l'_4$.
It is clear that, $G_g \cong^{\pi_i}_f H_h$ and $\pi_i/color = \pi|_{color(Q_g)}$.

\par {\bf PC3:} If $G \in PC3$ then $G$ has two vertices $x$ and $y$ of label $l$, where $y$ is universal to everything other than $x$, 
and $x$ is universal to all vertices of label $l$ other than $y$, and non-adjacent of all vertices of the other label $\overline{l}$ as 
shown in Figure~\ref{PC3}.
To decompose $G$ the algorithm relabels the vertices $x$ and $y$ with $l_3$ and removes the edges $E_{l_3l}$ to get the decomposition of $G$ with 
top operations $\rho_{l_3\rightarrow l}$, $\eta_{l_3,l}$ and a $\oplus$
with the connected components of $G_{g}=x \oplus G \setminus \{x\}$ as children.
Again the algorithm removes the edges $E_{l_3\overline{l}}$ from $G \setminus \{x\}$ to get the decomposition 
%Again the algorithm decomposes $G \setminus \{x\}$ by removing edges 
with top operations $\eta_{l_3,\bar{l}}$ and a $\oplus$ with the connected components of $G_{g_1}=y\oplus G \setminus \{x, y\}$ as children.
If $G \cong_f^{\pi} H$, the algorithm finds the vertices $x', y' \in H$ of label $l'$ such that $f(x)=x'$, $f(y)=y'$ and
$\pi(l)=l'$.
where $y'$ is universal to everything other than $x'$, 
and $x'$ is universal to all vertices of label $l'$ other than $y'$, and non-adjacent to all vertices of label $\overline{l}'$.
Then it relabels vertices $x'$ and $y'$ with $l'_3$ and removes the edges $E_{l'_3l'}$ to get the decomposition of $H$ with top operations 
$\rho_{l'_3\rightarrow l'}$, $\eta_{l'_3,l'}$ and a $\oplus$ 
with the connected components of $H_h=x'\oplus H \setminus \{x'\}$ as children. Again the algorithm removes the edges 
$E_{l'_3, \bar{l}'} $ from $H \setminus \{x'\}$ 
to get the decomposition with top operations
$\eta_{l'_3, \bar{l}'}$ and a $\oplus$ with the connected components of $H_{h_1}=y' \oplus H \setminus \{x', y'\}$ as children.
In this case the generated parse tree has two levels.

\begin{figure}[!ht]
\centering
\includegraphics[trim=3cm 19cm 6cm 3.2cm, clip=true, scale=0.8]{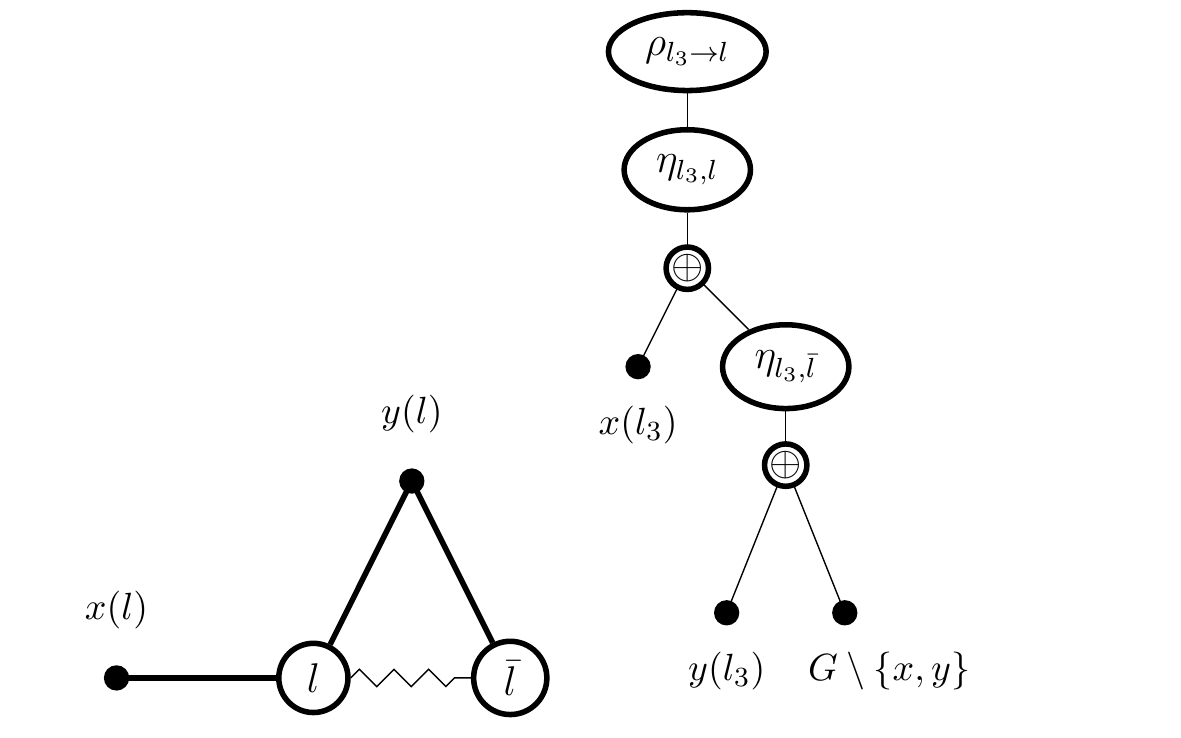}
\caption{ \hspace{0.1cm} Decomposing a bilabeled graph $G$ which has two vertices $x$ and $y$ of label $l$, where
$y$ is universal to everything other than $x$, and $x$ is universal to all vertices of label $l$ 
other than $y$, and non-adjacent of all vertices of the other label $\bar l$}
\label{PC3} 
\end{figure}
In first level the quotient graphs $Q_g$ and $Q_h$ build from top 
operations are isomorphic via $\pi_{1i}$, where $\pi_{1i} (l) = \pi(l)$ if $l \in \{l_1,l_2\}$, $\pi_{1i}(l_3)=l'_3$.
It is clear that, $G_g \cong^{\pi_{1i}}_f H_h$ and $\pi_{1i}/color = \pi|_{color(Q_g)}$.
In second level the quotient graphs $Q_{g_1}$ and $Q_{h_1}$ build from top
operations are isomorphic via $\pi_{2i}$, where  $\pi_{2i}(l) = \pi_{1i}(l)$ if $l \in \{l_1,l_2,l_3\}$,  
and $G_{g_1} \cong^{\pi_{2i}}_f H_{h_1}$, $\pi_{2i}/color = \pi_{1i}|_{color(Q_{g_1})}$.
The remaining part of proof follows from Lemma \ref{lu1} and \ref{lD1}. \qed
\end{proof}

\subsection{Function Decompose $\mathcal{U}_{l_1}$:}
We next describe the case $\mathcal{U}_{l_1}$. The case $\mathcal{U}_{l_2}$ is omitted from here because it is similar to $\mathcal{U}_{l_1}$.
Let $l_1$ and $l_2$ be the vertex labels. The vertex set $V_{l_1}$ consisting of vertices with label $l_1$ can be partitioned as follows: 
The set of vertices adjacent to all vertices of $V_{l_2}$ is denoted $V_{l_1}^a$.
The set of vertices adjacent to some of vertices of $V_{l_2}$ is denoted $V_{l_1}^s$.
The set of vertices adjacent to none of vertices of $V_{l_2}$ is denoted $V_{l_1}^n$.
Similarly we can define the sets $V_{l_2}^a$, $V_{l_2}^s$ and $V_{l_2}^n$.
For $l \in \{l_1,l_2\}$ we say, $G \in \mathcal{U}_l$ if $V_{l}^a \neq \emptyset$ and removing the edges between $V_{l}^a$ and $V_{l}$ disconnects $G$.
\begin{algorithm}
{\scriptsize 
\DontPrintSemicolon
\caption{Function $Decompose$-$leaf$-$\mathcal{U}_{l_1}$ \cite{corneil2012polynomial}}
\label{u}
\KwIn{A bilabeled, $l$-prime and connected graph $G$}
\KwOut{true with top operations of parse tree or false if $cwd(G)$ $>$ 3}
\Begin{
tree := null\;
\If{$G$ has good non partial connected component $C$}
{
Let $G_g =\oplus^k_{i=1}G_{g_i}$, where $G_{g_i}$'s are connected components of $G \setminus \{E_{l_3,l_2}\}$\; 
$tree= \rho_{l_3 \rightarrow l_1}\eta_{l_3,l_2}(\oplus^k_{i=1}G_{g_i})$\;
%add top operations $\rho_{l_3 \rightarrow l_1}$, $\eta_{l_3,l_2}$ and $\oplus$ to tree\;
%update $Leaves$ with components of $\Gamma \setminus E_{l_3l_2}$\;
return (true, tree)\;
}
\If { $G$ has only partial connected components }
{
Let $G_g =\oplus^k_{i=1}G_{g_i}$, where $G_{g_i}$'s are connected components of $G \setminus \{E_{l_3,l_2}\}$\; 
$tree= \rho_{l_3 \rightarrow l_1}\eta_{l_3,l_2}(\oplus^k_{i=1}G_{g_i})$\;
return (true, tree)\;
}
return (false, tree) (i.e., $cwd(G)$ $>$ 3)
}
}
\end{algorithm}

\begin{lemma} \label{lu1}
Let $G$ and $H$ be bilabeled, $l$-prime and connected graphs. If $G \cong _f^{\pi} H$ for some $f$ and $\pi$ then Algorithm~\ref{u} 
 generates top operations of parse trees $G$ and $H$ such that there is a $\pi_i$ $\in$ $\ISO(Q_g,Q_h)$ with
$G_g \cong^{\pi_i}_f H_h$ and $\pi_i/color = \pi|_{color(Q_g)}$,  
where $G_g$ and $H_h$ are the graphs described in Algorithm~\ref{u}.
%where $G_g$ and $H_h$ are disconnected graphs obtained from $G$ and $H$ after decompose respectively.
\end{lemma}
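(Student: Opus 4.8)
The plan is to follow the template already used for Lemma~\ref{TI} and the PC1--PC3 cases of Lemma~\ref{BI}: transport, via the isomorphism $f$ and the label bijection $\pi$, every structural feature that Algorithm~\ref{u} inspects from $G$ to $H$, so that the algorithm makes parallel choices on the two inputs, and then read off the required $\pi_i$ from the top operations it outputs.

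First I would check that the data Algorithm~\ref{u} branches on is isomorphism invariant. Membership of an $l_1$-labelled vertex in one of the classes $V_{l_1}^a$, $V_{l_1}^s$, $V_{l_1}^n$ is determined solely by its adjacency pattern to $V_{l_2}$, and $G\cong_f^\pi H$ preserves both adjacency and label classes, so $f(V_{l_2})=V_{\pi(l_2)}$, $f(V_{l_1}^a)=V_{\pi(l_1)}^a$, and similarly for the $s$ and $n$ parts. Hence $G\in\mathcal{U}_{l_1}$ forces $H\in\mathcal{U}_{\pi(l_1)}$ (if $\pi(l_1)$ is the second label of $H$ the symmetric $\mathcal{U}_{l_2}$-routine applies, which is identical up to renaming labels), and $f$ carries the connected components of $G$ cut along the edges between $V_{l_1}^a$ and $V_{l_2}$ bijectively onto the corresponding components of $H$. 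I would then argue that the finer classification used inside Algorithm~\ref{u} --- whether $G$ has a \emph{good non-partial} connected component or only \emph{partial} connected components, in the sense of \cite{corneil2012polynomial} --- is again phrased entirely in terms of this component structure together with adjacencies and label classes, hence is preserved by $f$; so $G$ and $H$ enter the same branch.

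In either branch the algorithm relabels a structurally determined subset of $V_{l_1}$ with a fresh label $l_3$, deletes the edge set $E_{l_3,l_2}$, and outputs top operations $\rho_{l_3\rightarrow l_1}\,\eta_{l_3,l_2}$ above a $\oplus$ whose children are the connected components $G_{g_1},\dots,G_{g_k}$ of $G\setminus\{E_{l_3,l_2}\}$; running it on $H$ with a fresh label $l_3'$ yields $\rho_{l_3'\rightarrow\pi(l_1)}\,\eta_{l_3',\pi(l_2)}$ above a $\oplus$ with children $H_{h_1},\dots,H_{h_k}$ which, by the previous paragraph, are exactly the $f$-images of the $G_{g_i}$ in some order. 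I would then define $\pi_i$ by $\pi_i(l_3)=l_3'$ and $\pi_i(l)=\pi(l)$ for $l\in\{l_1,l_2\}$. Since the relabel--join pair on the $G$-side corresponds under $\pi_i$ to the one on the $H$-side, the colored quotient graphs satisfy $\pi_i\in\ISO(Q_g,Q_h)$; matching up the components gives $G_g\cong_f^{\pi_i}H_h$ with $G_g=\oplus_{i=1}^kG_{g_i}$ and $H_h=\oplus_{i=1}^kH_{h_i}$; and because $\pi_i$ extends $\pi$ and the final renaming $\rho_{l_3\rightarrow l_1}$ (resp.\ $\rho_{l_3'\rightarrow\pi(l_1)}$) identifies the fresh color with the original one consistently on both sides, $\pi_i/color=\pi|_{color(Q_g)}$ follows at once, exactly as in the PC1 case.

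The main obstacle I anticipate is purely the bookkeeping around the auxiliary notions imported from \cite{corneil2012polynomial} (``good'', ``non-partial'', and ``partial'' connected components, and exactly which vertices of $V_{l_1}$ receive the temporary label $l_3$): one must verify, going through the definitions in that paper, that each such notion is expressed only in terms of the adjacency relation and the label classes, so that any $f$ with $G\cong_f^\pi H$ automatically respects it and the algorithm's two runs stay in lockstep. Once that invariance is established, the identification of $\pi_i$ and the verification of $\pi_i\in\ISO(Q_g,Q_h)$, $G_g\cong_f^{\pi_i}H_h$, and $\pi_i/color=\pi|_{color(Q_g)}$ are immediate and mirror the earlier cases verbatim.
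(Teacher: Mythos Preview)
Your proposal is correct and follows essentially the same approach as the paper's proof: both split into the two cases (a good non-partial component exists, or all components are partial), argue that these cases and the set of vertices receiving the fresh label $l_3$ are invariant under $f$ and $\pi$, observe that the top operations are $\rho_{l_3\to l_1}\eta_{l_3,l_2}$ (and their $\pi$-images) in both cases, and then define $\pi_i$ by extending $\pi$ via $\pi_i(l_3)=l_3'$. The paper is slightly more explicit about which vertices get relabeled in each branch and cites Lemmas~30 and~31 of \cite{corneil2012polynomial} for exhaustiveness, while you are slightly more explicit about why the auxiliary notions are $f$-invariant, but the arguments are the same.
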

\begin{proof}
$G \in \mathcal{U}_{l_1}$ if $V_{l_1}^a \neq \emptyset$ and removing the edges between $V_{l_1}^a$ and $V_{l_2}$ disconnects $G$.
 The proof is divided into two cases based on connected components (partial\footnote {A connected component of $V_1$ that contains at least
one vertex of $V^s_1$ is called partial. \label{note2}} and non partial) of $V_1$. 
 \par If there is at least one \emph{good connected component}\footnote{A non-partial 
 connected component $C$ of $V_1$ is good (respectively, bad), if $G$
is of clique-width at most three implies that the bilabeled graph obtained from $C$ by relabeling all the vertices of $V^a_1 \cup C$ with three is
of clique-width at most three (respectively of clique-width more than three).} $C$
(see Section 5.2.1 in \cite{corneil2012polynomial}) in $G$ then 
 the algorithm relabels all vertices of $V_{l_1}^a$ in good connected components with $l_3$ and removes the edges $E_{l_3l_2}$ 
 from $G$ to get the decomposition with top operations $\rho_{l_3 \rightarrow l_1}$ and $\eta_{l_3,l_2}$  above a $\oplus$ operation with 
 the connected components $G_{g_1}, \cdots ,G_{g_k}$ as children ($G_g = G_{g_1}\oplus \cdots \oplus G_{g_k}$). 
 If $G \cong_f^{\pi} H$, up to a permutation of 
 labels $H$ may be in $ \mathcal{U}_{l'_1}$ or $ \mathcal{U}_{l'_2}$, but this does not effect the structure of the decomposition as in both the cases 
 the set of edges deleted are same.
 The algorithm finds at least one good connected component $C'$ in $H$ and relabels all vertices of $V_{l'_1}^a$ in good connected components 
 with $l'_3$ and removes the edges $E_{l'_3l'_2}$ from $H$ to get the decomposition 
 with top operations $\rho_{l'_3 \rightarrow l'_1}$ and $\eta_{l'_3,l'_2}$ 
 above a $\oplus$ operation with connected components $H_{h_1}, \cdots ,H_{h_k}$ as children (these are images of 
$G_{g_1}, \cdots ,G_{g_k}$ under $f$ in some order). 
% The quotient graphs $Q_g$ and $Q_h$ build from top 
% operations are isomorphic via $\pi$. 
The quotient graphs $Q_g$ and $Q_h$ build from top 
operations are isomorphic via $\pi_i$, where $\pi_i (l) = \pi(l)$ if $l \in \{l_1,l_2\}$, $\pi_i (l_3)=l'_3$.
It is clear that, $G_g \cong^{\pi_i}_f H_h$ and $\pi_i/color = \pi|_{color(Q_g)}$.
%It easy see that for each $\Gamma_i$ there exists a $\Gamma'_j$ such that $\Gamma_i \cong_f^{\pi} \Gamma'_j$.
 \par If there are only \emph{partial components}\footnoteref{note2} (see Section 5.2.1 in \cite{corneil2012polynomial}) in graph $G$ then 
 %the decomposition of $G$ can be done as follows.
the algorithm relabels all the vertices $V_{l_1}^a$ with $l_3$ and removes the edges $E_{l_3l_2}$ from $G$ to get the decomposition  with 
 top operations $\rho_{l_3 \rightarrow l_1}$ and $\eta_{l_3,l_2}$ 
 above a $\oplus$ operation with the connected components $G_{g_1}, \cdots ,G_{g_k}$ as children ($G_g = G_{g_1}\oplus \cdots \oplus G_{g_k}$).
%obtained from $G$ after removing $E_{l_3l_2}$ edges. 
 If $G \cong_f^{\pi} H$, the algorithm relabels all the vertices $V_{l'_1}^a$ in $H$ with $l'_3$ and removes the edges $E_{l'_3l'_2}$ to get the decomposition with top operations $\rho_{l'_3 \rightarrow l'_1}$ and $\eta_{l'_3,l'_2}$ 
above a $\oplus$ operation with the connected components $H_{h_1}, \cdots ,H_{h_k}$ as children (these are images of 
$G_{g_1}, \cdots ,G_{g_k}$ under $f$ in some order).
 The quotient graphs $Q_g$ and $Q_h$ build from top 
operations are isomorphic via $\pi_i$, where $\pi_i (l) = \pi(l)$ if $l \in \{l_1,l_2\}$, $\pi_i (l_3)=l'_3$,
and $G_g \cong^{\pi_i}_f H_h$, $\pi_i/color = \pi|_{color(Q_g)}$.
\par Lemma 30, 31 in \cite{corneil2012polynomial} shows that if $G \in \mathcal{U}_{l_1}$ apart from above two ways there is no other way to continue 
to find the decomposition for graphs of clique-width at most three. \qed
\end{proof}
\subsection{Function Decompose $\mathcal{\overline{D}}_{l_1}$:}
Let $V_l$ be the set of vertices with label $l$. 
For $l \in \{l_1,l_2\}$ we say, $G \in \mathcal{\overline{D}}_{l}$ if $\overline{V_{l}}$ is not connected and removing edges between the 
coconnected components of $V_{l}$ disconnects $G$. 
\begin{algorithm}
{\scriptsize 
\DontPrintSemicolon
\caption{Function $Decompose$-$leaf$-$\mathcal{\overline{D}}_{l_1}$ \cite{corneil2012polynomial}}
\label{D}
\KwIn{A bilabeled, $l$-prime and connected graph $G$}
\KwOut{true with top operations of parse tree or false if $cwd(G)$ $>$ 3}
\Begin{
tree := null\;
\If{$G$ has only two coconnected components }
{
Let $G_g =\oplus^k_{i=1}G_{g_i}$, where $G_{g_i}$'s are connected components of $G \setminus \{E_{l_3l_1}\}$\; 
$tree= \rho_{l_3 \rightarrow l_1}\eta_{l_3,l_1}(\oplus^k_{i=1}G_{g_i})$\;
%add top operations $\rho_{l_3 \rightarrow l_1}$, $\eta_{l_3,l_1}$ and $\oplus$ to tree\;
%update $Leaves$ with connected components of $\Gamma \setminus E_{l_3l_1}$\;
return (true, tree)\;
}
\If { $G$ has proper partition }
{
Let $G_g =G_{g_1} \oplus G_{g_2}$, where $G_{g_1}$ and $G_{g_2}$ are connected components of $G \setminus \{E_{l_3l_1}\}$\; 
$tree= \rho_{l_3 \rightarrow l_1}\eta_{l_3,l_1}(G_{g_1}\oplus G_{g_2})$\;
%add top operations $\rho_{l_3 \rightarrow l_1}$, $\eta_{l_3,l_1}$ and $\oplus$ to tree\;
%update $Leaves$ with connected components of $\Gamma \setminus E_{l_3l_1}$\;
return (true, tree)\;
}
\If { $G$ is eligible}
{
Let $G_g =y\oplus (G_{g_1}\setminus y) \oplus^k_{i=2}G_{g_i}$, where $G_{g_i}$'s are connected components of 
$G \setminus \{E_{l_3l_1}\}$ and $y, G_{g_1}\setminus y$'s are connected components of $G_{g_1} \setminus \{E_{l_3l_2}\}$ \; 
$tree= \rho_{l_3 \rightarrow l_1}\eta_{l_3,l_1}(\eta_{l_3l_2}(y\oplus G_{g_1}\setminus y)\oplus^k_{i=2}G_{g_i})$\; 
return (true, tree)\;
}
return(false,tree) (i.e., $cwd(G)$ $>$ 3)
}
}
\end{algorithm}

\begin{lemma}\label{lD1}
Let $G$ and $H$ be bilabeled, $l$-prime and connected graphs. If $G \cong _f^{\pi} H$ for some $f$ and $\pi$ then Algorithm~\ref{D} 
 generates top operations of parse trees $G$ and $H$ such that there is a $\pi_i$ $\in$ $\ISO(Q_g,Q_h)$ with  
$G_g \cong^{\pi_i}_f H_h$ and $\pi_i/color = \pi|_{color(Q_g)}$,  
where $G_g$ and $H_h$ are the graphs described in Algorithm~\ref{D}.
%where $G_g$ and $H_h$ are disconnected graphs obtained from $G$ and $H$ after decompose respectively. 
\end{lemma}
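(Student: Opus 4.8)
The plan is to follow the template of the proof of Lemma~\ref{lu1}, treating separately the three branches of Algorithm~\ref{D} for a graph $G\in\mathcal{\overline{D}}_{l_1}$: the case that $V_{l_1}$ has exactly two coconnected components, the case that $G$ admits a proper partition of these components, and the case that $G$ is eligible. In each branch I would first observe that membership of $G$ in $\mathcal{\overline{D}}_{l_1}$, together with the combinatorial data the algorithm reads off — the coconnected components of $V_{l_1}$, a proper partition of them, or the distinguished vertex $y$ witnessing eligibility — is determined entirely by the labeled graph $(G,V_{l_1},V_{l_2})$. Since $G\cong^{\pi}_f H$, the bijection $f$ is an isomorphism $\overline{G}\to\overline{H}$ carrying $V_{l_1}$ onto $V_{\pi(l_1)}$, so $H$ lies in $\mathcal{\overline{D}}_{\pi(l_1)}$ — up to a swap of the two labels, in $\mathcal{\overline{D}}_{l'_1}$ or $\mathcal{\overline{D}}_{l'_2}$ — and $f$ maps the coconnected components of $V_{l_1}$ onto those of $V_{\pi(l_1)}$, a proper partition to a proper partition, and $y$ to the corresponding witness $y'$. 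Because the edge set the algorithm deletes, namely $E_{l_3l_1}$ after relabeling a prescribed union of coconnected components to the fresh label $l_3$, is the same in either symmetric branch, this label ambiguity does not affect the shape of the decomposition.

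It then follows that the vertices of $G$ relabeled to $l_3$ are mapped by $f$ onto the vertices of $H$ relabeled to $l'_3$, hence the deleted edge set $E_{l_3l_1}$ of $G$ corresponds under $f$ to the deleted edge set $E_{l'_3l'_1}$ of $H$; consequently the connected components $G_{g_1},\dots,G_{g_k}$ of $G\setminus\{E_{l_3l_1}\}$ are mapped by $f$, in some order, onto the connected components $H_{h_1},\dots,H_{h_k}$ of $H\setminus\{E_{l'_3l'_1}\}$. In the eligible branch the same argument, applied now to the component $G_{g_1}$ and the cut $E_{l_3l_2}$, shows that the secondary split $y\oplus(G_{g_1}\setminus y)$ of $G$ is transported by $f$ to the secondary split $y'\oplus(H_{h_1}\setminus y')$ of $H$, so the additional $\eta_{l_3,l_2}$ operation is matched as well.

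To finish, I would define $\pi_i$ on the (at most three) vertices of $Q_g$ by $\pi_i(l)=\pi(l)$ for $l\in\{l_1,l_2\}$ and $\pi_i(l_3)=l'_3$. The top operations produced on $G$ — $\rho_{l_3\to l_1}$, $\eta_{l_3,l_1}$, and in the eligible case an extra $\eta_{l_3,l_2}$ — are, under $\pi_i$, exactly the operations produced on $H$, so $\pi_i\in\ISO(Q_g,Q_h)$; since $f$ respects $\pi$ on the original labels and sends the $l_3$-class of $G$ onto the $l'_3$-class of $H$, we obtain $G_g\cong^{\pi_i}_f H_h$, and $\pi_i/color=\pi|_{color(Q_g)}$ holds because $\pi_i$ was constructed to extend $\pi$ while sending the single new color $l_3$ to $l'_3$. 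In the eligible case the two-level structure is dealt with exactly as in the PC3 part of Lemma~\ref{BI}: the same extended label map (suitably restricted) also witnesses the isomorphism $Q_{g_1}\cong Q_{h_1}$ at the lower level, which gives the compatibility $\pi_{2i}/color=\pi_{1i}|_{color(Q_{g_1})}$ demanded by Definition~\ref{structiso}.

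The step I expect to be the main obstacle is the eligible branch: one must argue that the algorithm's choices — which component plays the role of $G_{g_1}$ and which vertex is $y$ — are made in an isomorphism-invariant way, so that $f$ genuinely transports them (invoking, if needed, the canonicity analysis of Section~5.2 of \cite{corneil2012polynomial}), and then check that a single extended label map simultaneously witnesses the quotient-graph isomorphisms at both levels. The remaining obligations — that the conditions defining $\mathcal{\overline{D}}_{l_1}$ are isomorphism-invariant, and that Algorithm~\ref{D} exhausts all possibilities for graphs of clique-width at most three (the corresponding completeness lemmas of \cite{corneil2012polynomial}) — are routine.
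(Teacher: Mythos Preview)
Your overall three-branch outline matches the paper, but there is a real gap in the first two branches. You assert that ``the vertices of $G$ relabeled to $l_3$ are mapped by $f$ onto the vertices of $H$ relabeled to $l'_3$'', and then fix $\pi_i(l_3)=l'_3$. This is not justified: in the two-coconnected-component case the algorithm explicitly relabels one of $CCC_1,CCC_2$ \emph{at random}, and in the proper-partition case it relabels one \emph{side} of the partition. Neither choice is isomorphism-invariant, so when Algorithm~\ref{D} is run independently on $G$ and $H$ it may happen that $f$ carries the $l_3$-class of $G_g$ onto the $l'_1$-class of $H_h$ rather than onto the $l'_3$-class. With your fixed $\pi_i$ the conclusion $G_g\cong^{\pi_i}_f H_h$ then fails.

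The paper's proof handles exactly this: because the top operation is $\rho_{l_3\to l_1}$, the vertices $l_1$ and $l_3$ of $Q_g$ receive the same colour, and likewise for $l'_1,l'_3$ in $Q_h$. Hence $\ISO(Q_g,Q_h)$ contains \emph{both} maps $(l_1\mapsto l'_1,\,l_3\mapsto l'_3)$ and $(l_1\mapsto l'_3,\,l_3\mapsto l'_1)$, and one chooses whichever matches the random choices actually made; in either case $\pi_i/\text{color}$ still sends $l_1\mapsto l'_1$ and agrees with $\pi$. Ironically, the branch you flag as the main obstacle --- the eligible case --- is the one where the algorithm's selection (the distinguished component $CCC_d$ and the vertex $y$) \emph{is} canonical, so your argument there goes through essentially as written; it is the first two, ``easy'' branches where the non-canonical choice must be absorbed by the symmetry of $Q_g$.
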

\begin{proof}
 The proof is divided into three cases depending on the structure of the graph.
 \par If there are only two coconnected components $CCC_1$ and $CCC_2$ of $V_{l_1}$, then the 
algorithm relabels one of $CCC_1$ or $CCC_2$ at random width $l_3$ and removes the edges $E_{l_3l_1}$ to get the 
 decomposition  with top operations $\rho_{l_3 \rightarrow l_1}$ and $\eta_{l_3,l_1}$ 
above a $\oplus$ operation with the connected components $G_{g_1}, \cdots ,G_{g_k}$ as children ($G_g = G_{g_1}\oplus \cdots \oplus G_{g_k}$).
 If $G \cong_f^{\pi} H$, up to a permutation of 
 labels $H$ may be in $\mathcal{\overline{D}}_{l'_1}$ or  $(\mathcal{\overline{D}}_{l'_2})$, without loss of generality assume $H$ is in $\mathcal{\overline{D}}_{l'_1}$.
 In $H$ the algorithm relabels one of the coconnected component of $V_{l'_1}$ at random with $l'_3$ and removes the edges $E_{l'_3l'_1}$ to get the 
 decomposition with top operations $\rho_{l'_3 \rightarrow l'_1}$ and $\eta_{l'_3,l'_1}$ above a $\oplus$ operation with the 
connected components $H_{h_1}, \cdots ,H_{h_k}$ as children (these are images of 
$G_{g_1}, \cdots ,G_{g_k}$ under $f$ in some order).
The quotient graphs $Q_g$ and $Q_h$ build from top operations are isomorphic via $\pi_i$, where $\pi_i (l_2) = l'_2$,  
$\pi_i (l_1)=l'_1$ or $l'_3$, $\pi_i (l_3)=l'_3$ or $l'_1$,  and $G_g \cong^{\pi_i}_f H_h$, $\pi_i/color = \pi|_{color(Q_g)}$. 

 \par If $G$ has a \emph {proper partition}\footnote{ partition of the coconnected components of $V_{l_1}$ into two sides such that the vertices of $V_{l_2}$ also 
partitions into two sides but no connected component of $V_{l_2}$ has vertices in both sides} then algorithm relabels one side of $V_{l_1}$ with $l_3$ and 
removes the edges $E_{l_3l_1}$ to get the decomposition  with top operations $\rho_{l_3 \rightarrow l_1}$ and $\eta_{l_3,l_1}$ 
above a $\oplus$ operation with the connected components $G_{g_1}$ and $G_{g_2}$ ($G_g = G_{g_1}\oplus G_{g_2}$) as children.
 If $G \cong_f^{\pi} H$, the algorithm relabels one side of $V_{l'_1}$  with $l'_3$ and removes the edges $E_{l'_3 l'_1}$ to get the decomposition with top operations $\rho_{l'_3 \rightarrow l'_1}$ and $\eta_{l'_3,l'_1}$ above a $\oplus$ operation with the
connected components $H_{h_1}$ and $H_{h_2}$ as children.
The quotient graphs $Q_g$ and $Q_h$ build from top operations are isomorphic via $\pi_i$, where $\pi_i (l_2) = l'_2$,  
$\pi_i (l_1)=l'_1$ or $l'_3$, $\pi_i (l_3)=l'_3$ or $l'_1$,  and $G_g \cong^{\pi_i}_f H_h$, $\pi_i/color = \pi|_{color(Q_g)}$. 
\par If $G$ is \emph{eligible} (see Section 5.2.2 in \cite{corneil2012polynomial} for definition) then 
%there exists exactly one connected component (say $c$) of $G_d$ (where $G_d$ is the 
%disconnected graph obtained from $G$ by removing edges between the coconnected components of $V_{l_1}$) 
%that has vertices in several coconnected components of $V_{l_1}$.
%the only way to find the decomposition of $G$ is to relabel coconnected component$CCC_d$ of that $V_{l_1}$ has neighbors in 
%$V_{l_2} \setminus c$. 
to decompose $G$ the algorithm relabels vertices in coconnected component $CCC_d$ with $l_3$ and removes the edges $E_{l_3l_1}$ to get the decomposition with
top operations $\rho_{l_3\rightarrow l_1}$, $\eta_{l_3,l_1}$ and a $\oplus$
with the connected components $G_{g_1}, \cdots ,G_{g_k}$ as children ($G_g = G_{g_1}\oplus \cdots \oplus G_{g_k}$). Again the algorithm 
removes the edges $E_{l_3l_2}$ from $G_{g_1}$ to get the decomposition with top operations $\eta_{l_3,l_2}$ and a $\oplus$ 
with the connected components of $G_{g_1}=G_{g_1}\setminus \{y\}\oplus y$ as children.
If $G \cong_f^{\pi} H$, the algorithm relabels coconnected component $CCC'_d$ with $l'_3$ and  
removes the edges $E_{l'_3l'_1}$ from $H$ to get the decomposition with top operations  
are $\rho_{l'_3\rightarrow l'_1}$, $\eta_{l'_3,l'_1}$ and a $\oplus$
with the connected components $H_{h_1}, \cdots ,H_{h_k}$ (these are images of 
$G_{g_1}, \cdots ,G_{g_k}$ under $f$ in some order) as children. Again the algorithm
removes the edges $E_{l'_3,l'_2}$ from $H_{h_1}$ to get the decomposition with
top operations $\eta_{l'_3,l'_2}$ and a $\oplus$ with the connected components of $H_{h_1}=H_{h_1}\setminus \{y'\}\oplus y'$ as children.
In this case the generated parse tree has two levels.
In the first level the quotient graphs $Q_g$ and $Q_h$ built from top 
operations are isomorphic via $\pi_{1i}$, where $\pi_{1i} (l) = \pi(l)$ if $l \in \{l_1,l_2\}$, $\pi_{1i}(l_3)=l'_3$, 
and $G_g \cong^{\pi_{1i}}_f H_h$, $\pi_{1i}/color = \pi|_{color(Q_g)}$.
In the second level the quotient graphs $Q_{g_1}$ and $Q_{h_1}$ built from top
operations are isomorphic via $\pi_{2i}$, where $\pi_{2i}(l) = \pi_{1i}(l)$ if $l \in \{l_1,l_2,l_3\}$,  
and $G_{g_1} \cong^{\pi_{2i}}_f H_{h_1}$, $\pi_{2i}/color = \pi_{1i}|_{color(Q_{g_1})}$. 
\par Lemma 32 in \cite{corneil2012polynomial} shows that if $G \in \mathcal{\overline{D}}_{l_1}$ apart from the above three ways there is 
no other way to continue to find the decomposition for graphs of clique-width at most three. \qed
\end{proof}

\end{document}